\theoremstyle{plain}
\newtheorem{theorem}{Theorem}[section]
\newtheorem{corollary}[theorem]{Corollary}
\newcommand{\RealVect}[1]{{\mathbb R}^{#1}}
\def\RN{\RealVect{N}} 
\DeclareMathOperator{\spn}{span}
\DeclareMathOperator{\argmin}{arg\ min}
\newcommand{\gp}{\ensuremath{g^{\prime}}}
\newcommand{\req}[1]{Eq.~(\ref{#1})}
\begin{document}

\begin{frontmatter}
\title{Nonlinear Krylov Acceleration Applied to a Discrete Ordinates Formulation of the k-Eigenvalue Problem}
\author{Matthew T. Calef\corref{cor1}}
\ead{mcalef@lanl.gov}
\author{Erin D. Fichtl}
\ead{efichtl@lanl.gov}
\author{James S. Warsa} 
\ead{warsa@lanl.gov}
\author{Markus Berndt} 
\ead{berndt@lanl.gov}
\author{Neil N. Carlson}
\ead{nnc@lanl.gov}

\cortext[cor1]{Corresponding author}

\address{Computational Physics and Methods, Los Alamos National
Laboratory, Los Alamos, NM 87545-0001}

\begin{abstract}
We compare a variant of Anderson Mixing with the Jacobian-Free
Newton-Krylov and Broyden methods applied to an instance of the
$k$-eigenvalue formulation of the linear Boltzmann transport equation.
We present evidence that one variant of Anderson Mixing finds
solutions in the fewest number of iterations.  We examine and
strengthen theoretical results of Anderson Mixing applied to linear
problems.
\end{abstract}

\end{frontmatter}

\section{Introduction}

The $k$-eigenvalue formulation of the linear Boltzmann transport
equation is widely used to characterize the criticality of fissioning
systems~\cite{BellGlasstone:1970,LewisMiller:1993}. Physically, the
largest eigenvalue, generally denoted by $k$, is the effective neutron
multiplication factor that, in the equation, scales the fission
production term to achieve a steady-state solution. The corresponding
eigenmode describes the neutron flux profile for that steady-state
(i.e., critical) system and, when the system is close to criticality,
provides useful information about the distribution of the neutron
population in space and
velocity~\cite{BellGlasstone:1970}. Mathematically, the equation is a
standard eigenproblem for which power iteration is well-suited because
the eigenmode of interest is most commonly that with the largest
magnitude~\cite{Henry:1975}. For the deterministic $k$-eigenvalue
problem each step of a true power iteration incurs a heavy
computational cost due to the expense of fully inverting the transport
operator, therefore a nonlinear fixed point iteration is generally
employed in which an \textit{approximate} inversion of this operator
is performed at each iteration. In addition to power iteration, the
Implicitly Restarted Arnoldi Method has been applied to this problem
and has the advantage of being able to compute additional
eigenmodes~\cite{Warsa:2004}. However, the transport matrix
\textit{must} be fully inverted at each iteration, diminishing its
computational efficiency and attractiveness when only the dominant
eigenmode is desired.

Recently, more sophisticated nonlinear iteration methods employing
approximate inversion, predominantly Jacobian-Free Newton-Krylov
(JFNK), have been applied with great
success~\cite{Gill:2011,Gill:2011b,Knoll:2011,Park:2012}. However,
there has not yet been a comprehensive comparison of the performance
of JFNK with other nonlinear solvers.  This paper presents such a
comparison, examining the performance of three nonlinear
solvers\textemdash JFNK, Broyden's Method and Anderson
Mixing\textemdash applied to a particular formulation of the
$k$-eigenvalue problem. A variant of Anderson
Mixing~\cite{Anderson:1965}, first described in~\cite{CarlsonMiller1},
is of particular interest because, in the experience of the authors,
it is frequently computationally more efficient than JFNK and
Broyden's method.

JFNK is an inexact Newton's method in which the inversion of the
Jacobian is performed to arbitrary precision using a Krylov method
(most commonly GMRES) and the Jacobian itself is never formed, but
rather its action is approximated using finite differences of
arbitrarily close state data. JFNK can be expected to converge
quadratically in a neighborhood containing the solution
(cf.~\cite{KnollKeyes1} and the references therein).  Each iteration
of JFNK requires a nested `inner' iteration and the bulk of the
computational effort is expended in this `inner' Krylov inversion of
the Jacobian at each `outer' Newton step.  At the end of each
inversion, the accumulated Krylov space is discarded even though the
Jacobian is expected to change minimally during the final Newton steps
when a similar space will be rebuilt in the next Newton iteration. In
effect, at the end of each iteration, JFNK discards information that
may be of use in successive iterations.

In its standard formulation Broyden's method (cf.~\cite{Kelley1}),
like low memory BFGS (cf.~\cite{NocedalWright1}), uses differences in
state from successive iterations to make low rank updates to the
Jacobian.  The Sherman-Morrison-Woodbury update rule is then used to
compute the action of the inverse of the Jacobian after such an
update.  While Broyden's method is restricted to low-rank updates, it
provides an explicit representation of the Jacobian allowing one to
employ the Dennis-Mor\'e condition~\cite{DennisMore:1974} to show that
it converges super-linearly in a neighborhood containing the solution.
Further, it has been shown to solve linear problems of size $N$ in at
most $2N$ iterations (cf.~\cite{Kelley1} and the references therein.)

Anderson Mixing~\cite{Anderson:1965} uses differences in state from
successive iterations to infer information about the inverse of the
Jacobian, which is assumed to be roughly constant in a neighborhood
containing all the iterates.  The updates can
be of arbitrary rank.  Recent results by Walker and
Ni~\cite{WalkerNi:2010} show that, with mild assumptions, Anderson
Mixing applied to a linear problem performs as well as the generalized
minimum residual method (GMRES)~\cite{SaadSchultz1}.  In this regard,
Anderson Mixing may be thought of as a nonlinear version of GMRES.  In
independent work, Carlson and Miller formulated a so-called nonlinear
Krylov acceleration~\cite{CarlsonMiller1} method which we show to be a
variant of Anderson Mixing.  Further, we examine the hypotheses of a
central theorem presented by Walker and Ni and argue that they will,
with high probability, be met and that they can be omitted entirely if
one is willing to accept a small performance penalty.  While Anderson
Mixing performs best for our numerical experiments, there is no theory
that the authors of this paper know of that can characterize its
performance for nonlinear problems.

In this paper we provide theoretical and computational examinations of
Carlson and Miller's nonlinear Krylov acceleration and strengthen
theory about Anderson Mixing in general, comparing theoretical results
for Anderson Mixing with those for the Broyden and Jacobian-Free
Newton-Krylov methods.  In our computational investigations, we
compare the performance of these methods on computational physics
problems derived from problems described in~\cite{Warsa:2004}
and~\cite{C5G7}; these are instances of the $k$-eigenvalue formulation
of the Boltzmann neutron transport equation, which is commonly used to
characterize the neutron multiplication of fissile systems.  The rest
of this paper is organized as follows. Section~\ref{sec:nka-theory}
reviews and strengthens the theory regarding Anderson Mixing, and
provides an overview about what is known theoretically about the
Broyden and JFNK methods.  Section~\ref{sec:transport} describes the
formulation of the $k$-eigenvalue problem.  Section~\ref{sec:results}
provides the results of our numerical experiments and in
Section~\ref{sec:concl} we conclude with a summary of this work.

\section{Background of Anderson Mixing and nonlinear Krylov acceleration}\label{sec:nka-theory}

\subsection{Nonlinear Krylov acceleration}

In~\cite{CarlsonMiller1,CarlsonMiller2} Carlson and Miller outlined an
iterative method, dubbed nonlinear Krylov acceleration or NKA, for
accelerating convergence of fixed-point iteration by using information
gained over successive iterations.  The problem they consider is to
find a root of the function $\RN \ni {\bf x} \to f({\bf x}) \in \RN$.
One approach is to apply a fixed-point iteration of the form
\begin{equation}\label{mtc:eq:1}
{\bf x}_{n+1} = {\bf x}_n  - f({\bf x}_n).
\end{equation}
It is assumed that this basic iteration converges in the usual geometric
manner, albeit more slowly than would be acceptable.  In particular, it
converges when $\| {\bf I} - Df \| < 1$, where ${\bf I}$ is the identity and $Df$ denotes
the derivative of $f$.  The closer $Df$ is to the identity the more rapid the
convergence.  The question at hand is how this basic iteration
can be modified so as to accelerate the convergence.  Observing that Eq.~\eqref{mtc:eq:1}
can be viewed as an approximate Newton iteration where $Df^{-1}$ has been
replaced by $I$, the motivation behind NKA is instead to 
approximate $Df^{-1}$ using information from previous iterates, improving
that approximation over successive iterations, and in cases where no
applicable approximation is available, revert to a fixed-point
iteration where $Df^{-1}$ is replaced with $I$.

NKA requires an initial guess ${\bf x}_0$ and at the $n^\text{th}$
invocation provides an update ${\bf v}_{n+1}$ that is used to derive
the $n+1^\text{st}$ iterate from the $n^\text{th}$. This method may be
written as
\begin{eqnarray*}
{\bf v}_{n+1} =& \text{NKA}[f({\bf x}_n),\ldots] \\
{\bf x}_{n+1} =& {\bf x}_n - {\bf v}_{n+1},
\end{eqnarray*}
where NKA$[f({\bf x}_n),\ldots]$ is the update computed by the
nonlinear Krylov accelerator.  We use the brackets and ellipsis to
indicate that NKA is stateful and draws on previous information when
computing an update.

On its first invocation ($n=0$) NKA has no information and
simply returns $f({\bf x}_0)$. At iteration $n>0$ it has access to the
$M$ vectors of differences for some natural number $M\in (0, n)$,
\begin{equation*}
{\bf v}_i = {\bf x}_{i-1} - {\bf x}_i \quad\text{and}\quad {\bf w}_i = f({\bf
  x}_{i-1}) - f({\bf x}_i) \quad \text{for $i=n-M+1,\ldots,n$,}
\end{equation*}
and where, for convenience, we shall define ${\mathcal W}_n$ to be the
span of the ${\bf w}_i$ vectors available at iteration $n$.  If $f$
has a constant and invertible derivative, $Df$, then we would have
\begin{equation}\label{mtc:eq:1a}
Df {\bf v}_i = {\bf w}_i\qquad\text{and}\qquad Df^{-1} {\bf w}_i =
{\bf v}_i.
\end{equation}  
We denote by ${\mathcal P}_{{\mathcal W}_n}$ the operator that
projects onto the subspace ${\mathcal W}_n$ and write the
identity
\begin{equation*}
f({\bf x}_n) = 
{\mathcal P}_{{\mathcal W}_n} f({\bf x}_n) + (f({\bf  x}_n) - 
{\mathcal P}_{{\mathcal W}_n} f({\bf x}_n)).
\end{equation*}
Note that $f({\bf x}_n) - {\mathcal P}_{{\mathcal W}_n} f({\bf x}_n)$
is orthogonal to ${\mathcal W}_n$. If the ${\bf w}_i$ vectors are
linearly independent, then there is a unique set of coefficients ${\bf
  z}^{(n)} := (z_1^{(n)}, z_2^{(n)},\ldots, z_n^{(n)}) \in
\RealVect{n}$ so that
\begin{equation*}
{\mathcal P}_{{\mathcal W}_n} f({\bf x}_n) = \sum_{i=n-M+1}^n
z_i^{(n)}{\bf w}_i,
\end{equation*}
and hence by Eq.~\eqref{mtc:eq:1a} 
\begin{equation*}
Df^{-1}{\mathcal P}_{{\mathcal W}_n} f({\bf x}_n) = 
Df^{-1}\sum_{i=n-M+1}^n z_i^{(n)}{\bf w}_i = 
\sum_{i=n-M+1}^n z_i^{(n)}{\bf v}_i.
\end{equation*}
The idea motivating Carlson and Miller is to project $f({\bf x}_n)$
onto ${\mathcal W}_n$ (the space where the action of $Df^{-1}$ is
known), compute that action on the projection and, for lack of
information, apply a fixed-point update given in Eq.~\eqref{mtc:eq:1}
on the portion of $f({\bf x}_n)$ that is orthogonal to ${\mathcal
W}_n$.  The resulting formula for ${\bf x}_{n+1}$ is
\begin{equation}\label{nnc:eq:1}
\begin{array}{cc}
{\bf v}_{n+1} =&   
\left[\sum_{i=n-M+1}^nz_i^{(n)}{\bf v}_i  + \left(f({\bf x}_n) - \sum_{i=n-M+1}^n z_i^{(n)}{\bf
    w}_i \right)\right],\\
{\bf x}_{n+1} =& {\bf x}_n - {\bf v}_{n+1},
\end{array}
\end{equation}
where the vector of coefficients in the orthogonal projection, ${\bf
  z}^{(n)}$, is the solution to the projection, alternatively
minimization, problem
\begin{equation*}
{\bf z}^{(n)} = \argmin_{{\bf y}\in \RealVect{M}}\left\| f({\bf x}_n)
- \sum_{i=n-M+1}^n y_i{\bf w}_i \right\|_2.
\end{equation*} 

\subsection{NKA as Anderson mixing}

Carlson and Miller had essentially rediscovered, albeit in a slightly
different form, the iterative method presented much earlier by
Anderson in~\cite{Anderson:1965}, which is now commonly referred to as
Anderson Mixing or Anderson Acceleration.  Anderson was studying the
problem of finding a fixed point of some function $\RN \ni {\bf x} \to
G({\bf x}) \in\RN$.  In Section 4 of~\cite{Anderson:1965} he defines a
fixed point residual for iteration $i$ as
\begin{equation*} 
{\bf r}_i = G({\bf x}_i) - {\bf x}_i,
\end{equation*}
which can be used to measure how ${\bf x}_i$ fails to be a fixed
point.  With this Anderson proposed an updating scheme of the
form\footnote{Anderson uses $\theta_i^n$ to denote the $\text{i}^{th}$
  update coefficient of the $\text{n}^{th}$ iterate, where we have
  written $\tilde z_i^{(n)}$ in keeping with the presentation of NKA.}
\begin{equation*}
{\bf x}_{n+1} = {\bf x}_n - \left[
\sum_{i=1}^M\tilde z_i^{(n)} ({\bf x}_n - {\bf x}_{n-i}) - \beta_n\left(
{\bf r}_n - \sum_{i=1}^M\tilde z_i^{(n)}[{\bf r}_n - {\bf r}_{n-i}]
\right)
\right]
\end{equation*}
where the vector of coefficients, $\tilde {\bf z}^{(n)} \in
\RealVect{M}$, is chosen to minimize the quantity
\begin{equation*}
\left\|{\bf r}_n - \sum_{i=1}^M\tilde z_i^{(n)}[{\bf r}_n - {\bf r}_{n-i}] \right\|_2,
\end{equation*}
and where Anderson requires that $\beta_n >0$.  Here Anderson
considers a depth of $M$ difference-vectors.

The problem of finding a root of $f$ may be recast as a fixed point
problem by defining $G({\bf x}) = {\bf x} - f({\bf x})$, and then
${\bf r}_i = -f({\bf x}_i)$.  If, for each $n$, one defines the vectors
\begin{equation*}
\tilde {\bf v}_i^{(n)} = {\bf x}_n - {\bf x}_{n-i} 
\quad\text{and}\quad 
\tilde {\bf w}_i^{(n)} = f({\bf x}_n) - f({\bf x}_{n-i})
\quad\text{for $i=1,\ldots, M$},
\end{equation*}
then the Anderson Mixing update becomes
\begin{equation}\label{nnc:eq:2}
{\bf x}_{n+1} = {\bf x}_n - \left[
\sum_{i=1}^M\tilde z_i^{(n)} \tilde {\bf v}_i^{(n)} + \beta_n\left(
f({\bf x}_n) - \sum_{i=1}^M\tilde z_i^{(n)}\tilde {\bf w}_i^{(n)}
\right)
\right],
\end{equation}
where 
\begin{equation}\nonumber
\tilde {\bf z}^{(n)} 
= \argmin_{{\bf y} \in \RealVect{M}} 
\left\|f({\bf x}_n) - \sum_{i=1}^M y_i\tilde {\bf w}_i^{(n)} \right\|_2.
\end{equation}

Note that the ${\bf v}_i$ and ${\bf w}_i$ vectors are the differences of successive iterates and differences of function evaluations of successive iterates respectively.  In contrast the $\tilde {\bf v}_i^{(n)}$ vectors are the differences between the most recent iterate and all of the previous iterates.  Similarly the $\tilde {\bf w}_i^{(n)}$ vectors are the differences between the most recent function evaluation and all the previous function evaluations. We append the superscript $(n)$ to the vectors marked with a tilde to indicate that this quantity is only used for the $n^\text{th}$ iterate.  It is important also to note that $\spn \{\tilde {\bf w}_1^{(n)},\ldots,\tilde {\bf
  w}_M^{(n)}\} = \spn \{{\bf w}_{n-M+1},\ldots,{\bf w}_n\} = {\mathcal
  W}_n$. In particular
the update coefficients of NKA and Anderson Mixing are related by the
formula
\begin{equation}\label{mtc:eq:6}
z_i^{(n)} = -\sum_{j=n-i+1}^{M}\tilde z_{j}^{(n)}.
\end{equation} Further, for a constant and invertible derivative,
$Df^{-1}\tilde{\bf w}_i^{(n)} = \tilde{\bf v}_i^{(n)}$.  With this it
is clear that NKA is equivalent to Anderson
Mixing when $\beta_n = 1$; the difference is only in the choice of
basis vectors that span ${\mathcal W}_n$ resulting in a change of
variables given by Eq.~\eqref{mtc:eq:6}. One advantage of the NKA
formulation is that, because it uses differences of successive function
evaluations to form a basis for ${\mathcal W}_n$, these differences
can be used for $M$ successive iterations.  In contrast, Anderson's
formulation forms a basis for ${\mathcal W}_n$ using differences of
the most recent function evaluation with all previous evaluations,
which are be recomputed at every iteration.

\subsection{Analytic examinations of Anderson mixing applied to a linear problem}

In~\cite{WalkerNi:2010}, Walker and Ni consider the behavior of a
generalized Anderson Mixing algorithm when it is applied to linear
problems and when all previous vectors are stored, i.e. when $M=n$.
They prove that, under modest assumptions, a class of variants of
Anderson Mixing are equivalent to GMRES.  They consider Anderson
Mixing as a means to find a fixed-point of $G$.  Like Anderson they
compute ${\bf r}_i = G({\bf x}_i) - {\bf x}_i$ at each step.

In their presentation of Anderson Mixing, though, they compute the
update coefficients\footnote{Walker and Ni used $f_j$ and
  $\alpha_n^{(j)}$ where we, for consistency, are using ${\bf r}_j$
  and $ \tilde z_j^{(n)}$.}
\begin{equation*}
\tilde {\bf  z}^{(n)} = \argmin_{y \in \RealVect{n+1}} 
\left\|\sum_{i=0}^ny_i{\bf r}_i \right\|_2 
\qquad\text{subject to}\qquad
\sum_{i=0}^n  \tilde z_i^{(n)} = 1,
\end{equation*}
which are then used to form the update 
\begin{equation*}
{\bf x}_{n+1} = \sum_{i=0}^n \tilde z_i^{(n)} G({\bf x}_i).
\end{equation*}
Again choosing $G({\bf x}) = {\bf x} - f({\bf x})$ (and hence ${\bf r}_i =
-f({\bf x}_i)$) and noting that the constraint requires that
\begin{equation*}
\tilde z_n^{(n)} = 1 - \sum_{i=0}^{n-1} \tilde z_i^{(n)},
\end{equation*}
one has, as noted by Walker and Ni, that the above is equivalent to
Anderson's original formulation
\begin{equation*}
\tilde {\bf  z}^{(n)} = \argmin_{y \in \RealVect{n}} 
\left\|f({\bf x}_n) - \sum_{i=0}^{n-1} y_i(f ({\bf x}_n) - f({\bf x}_i))\right\|_2
\end{equation*}
and
\begin{equation*}
{\bf x}_{n+1} = {\bf x}_n - 
\left[
\sum_{i=0}^{n-1} \tilde z_i^{(n)}({\bf x}_n - {\bf x}_i) +
\left(
f({\bf x}_n) - 
\sum_{i=0}^{n-1} \tilde z_i^{(n)}(f({\bf x}_n) - f({\bf x}_i)) 
\right)
\right].
\end{equation*}
Because Anderson Mixing and NKA are equivalent to each other and to
the Walker and Ni formulation, we may restate one of Walker's and Ni's
theorems as follows:
\begin{theorem}[Walker and Ni~\cite{WalkerNi:2010}]\label{mtc:th:1}
Suppose that we search for a root of the function $f({\bf x}) = {\bf
  Ax} - {\bf b}$ starting with ${\bf x}_0$ using either the NKA update
or Anderson's update with $\beta_n =1$.  Suppose further that ${\bf
  A}$ is non-singular.  Let ${\bf x}_n^{GMRES}$ denote the
$n^\text{th}$ iterate generated by GMRES applied to the problem ${\bf
  Ax} = {\bf b}$ with starting point ${\bf x}_0$ and let ${\bf
  r}_n^{GMRES} = {\bf A}{\bf x}_n^{GMRES} - {\bf b}$ denote the
associated residual. If for some $n >0$, ${\bf r}_{n-1}^{GMRES} \ne 0$
and $\| {\bf r}_j^{GMRES} \|_2 < \|{\bf r}_{j-1}^{GMRES}\|_2$ for all
$0<j<n$, then the $n+1$ iterate generated by either update rule is
given by ${\bf x}_{n+1} = ({\bf I} - {\bf A}){\bf x}_n^{GMRES} + b$.
\end{theorem}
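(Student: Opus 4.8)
Since the excerpt has already established that the NKA update, Anderson's update with $\beta_n=1$, and the Walker--Ni formulation all coincide on this problem, I would carry out the argument for the Walker--Ni form and invoke these equivalences for the other two rules. In that form the step-$n$ iterate is $\mathbf{x}_{n+1}=\sum_{i=0}^{n}\tilde z_i^{(n)}\,G(\mathbf{x}_i)$ with $G(\mathbf{x})=\mathbf{x}-f(\mathbf{x})$, where $\tilde{\mathbf z}^{(n)}$ minimizes $\big\|\sum_{i=0}^n y_i\mathbf{r}_i\big\|_2$ subject to $\sum_{i=0}^n y_i=1$ and $\mathbf{r}_i=\mathbf{b}-\mathbf{A}\mathbf{x}_i$. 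The first step is to recognize this constrained least-squares problem as the GMRES minimization: using the constraint, $\sum_{i=0}^n y_i\mathbf{r}_i=\mathbf{b}-\mathbf{A}\big(\sum_{i=0}^n y_i\mathbf{x}_i\big)$, so minimizing $\big\|\sum_i y_i\mathbf{r}_i\big\|_2$ is the same as minimizing the true residual $\|\mathbf{b}-\mathbf{A}\mathbf{u}\|_2$ over all $\mathbf{u}$ in the affine hull of $\{\mathbf{x}_0,\dots,\mathbf{x}_n\}$. Everything therefore hinges on identifying that affine hull.

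\textbf{Key claim and the inductive computation.} The central claim, to be proved by induction on $k$, is that for every $k\le n$ the affine hull of $\{\mathbf{x}_0,\dots,\mathbf{x}_k\}$ equals $\mathbf{x}_0+\mathcal{K}_k$, where $\mathcal{K}_k=\spn\{\mathbf{r}_0^{GMRES},\mathbf{A}\mathbf{r}_0^{GMRES},\dots,\mathbf{A}^{k-1}\mathbf{r}_0^{GMRES}\}$ and $\mathbf{r}_0^{GMRES}=\mathbf{A}\mathbf{x}_0-\mathbf{b}$, and moreover $\mathbf{x}_{k+1}=(\mathbf{I}-\mathbf{A})\mathbf{x}_k^{GMRES}+\mathbf{b}$. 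The base case $k=0$ is immediate, since the first NKA invocation returns $f(\mathbf{x}_0)$, giving $\mathbf{x}_1-\mathbf{x}_0=-\mathbf{r}_0^{GMRES}$ and $\mathbf{x}_0^{GMRES}=\mathbf{x}_0$. For the inductive step, I would assume the hull identity at level $k$; then the reduction above identifies the unique minimizing point of the step-$k$ problem (unique because $\mathbf{A}$ is nonsingular) with the GMRES iterate, i.e. $\sum_i\tilde z_i^{(k)}\mathbf{x}_i=\mathbf{x}_k^{GMRES}$. Expanding the update and using $\sum_i\tilde z_i^{(k)}=1$ together with the linearity of $f$ gives $\sum_i\tilde z_i^{(k)}f(\mathbf{x}_i)=\mathbf{A}\mathbf{x}_k^{GMRES}-\mathbf{b}=f(\mathbf{x}_k^{GMRES})$, so that
\[
\mathbf{x}_{k+1}=\sum_i\tilde z_i^{(k)}\big(\mathbf{x}_i-f(\mathbf{x}_i)\big)=\mathbf{x}_k^{GMRES}-f(\mathbf{x}_k^{GMRES})=(\mathbf{I}-\mathbf{A})\mathbf{x}_k^{GMRES}+\mathbf{b}.
\]
Applying this at $k=n$ yields exactly the asserted formula, so the theorem follows once the affine-hull identity is secured for all $k\le n$.

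\textbf{Advancing the hull --- the main obstacle.} To propagate the hull identity I would write $\mathbf{x}_{k+1}-\mathbf{x}_0=(\mathbf{x}_k^{GMRES}-\mathbf{x}_0)-\mathbf{r}_k^{GMRES}$; since $\mathbf{x}_k^{GMRES}-\mathbf{x}_0\in\mathcal{K}_k$ and $\mathbf{r}_k^{GMRES}\in\mathbf{r}_0^{GMRES}+\mathbf{A}\mathcal{K}_k\subseteq\mathcal{K}_{k+1}$, the new hull is contained in $\mathbf{x}_0+\mathcal{K}_{k+1}$. The delicate point, and the place where the two hypotheses are used, is the reverse inclusion: one must show the hull genuinely gains a dimension at each step, i.e. $\mathbf{r}_k^{GMRES}\notin\mathcal{K}_k$. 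Here I would use $\mathbf{r}_{n-1}^{GMRES}\neq0$ to exclude a lucky Arnoldi breakdown, which guarantees $\dim\mathcal{K}_k=k$ for all $k\le n$, and the strict decrease $\|\mathbf{r}_j^{GMRES}\|_2<\|\mathbf{r}_{j-1}^{GMRES}\|_2$ for $0<j<n$ to exclude stagnation. The cleanest way to make the latter precise is through the GMRES residual polynomial $\mathbf{r}_k^{GMRES}=p_k(\mathbf{A})\mathbf{r}_0^{GMRES}$ with $p_k(0)=1$ and $\deg p_k\le k$: if $\mathbf{r}_k^{GMRES}$ were in $\mathcal{K}_k$ then, by the linear independence coming from no breakdown, $p_k$ would have degree at most $k-1$ and hence be admissible already at step $k-1$, forcing $\|\mathbf{r}_k^{GMRES}\|_2=\|\mathbf{r}_{k-1}^{GMRES}\|_2$ and contradicting strict decrease. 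Thus each step from $k=0$ up to $k=n-1$ adds precisely one dimension, the hull identity holds through $k=n$, and the computation of the previous paragraph applies. I expect this dimension-count --- the observation that strict residual decrease is exactly what prevents the Anderson iterates from falling out of step with the growing Krylov spaces --- to be the crux of the proof, the remaining manipulations being routine bookkeeping with the affine constraint $\sum_i\tilde z_i^{(k)}=1$ and the linearity of $f$.
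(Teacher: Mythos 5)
Your proof is sound, but one point of order first: the paper never proves Theorem~\ref{mtc:th:1} at all --- it is restated from Walker and Ni~\cite{WalkerNi:2010}, with only the remark that the original applies to a whole class of orthogonal projection methods. So there is no in-paper proof to compare against; the relevant comparisons are the cited source and the paper's own Appendix~\ref{apdx:1}. On its merits, your argument is correct and spends the hypotheses exactly where they must be spent: the constraint $\sum_i y_i = 1$ is what turns $\sum_i y_i \mathbf{r}_i$ into a true residual, so the constrained least-squares problem is minimization of $\|\mathbf{b}-\mathbf{A}\mathbf{u}\|_2$ over the affine hull of the iterates; the minimizing \emph{point} is unique because $\mathbf{A}$ is nonsingular (even though the coefficients need not be, which is harmless since, $f$ being affine, the update depends only on the point); the hull identity at level $n$ then yields $\mathbf{x}_{n+1}=(\mathbf{I}-\mathbf{A})\mathbf{x}_n^{GMRES}+\mathbf{b}$; and hull growth at steps $k=1,\dots,n-1$ is precisely what the strict-decrease hypothesis for $0<j<n$ supplies, with $\mathbf{r}_{n-1}^{GMRES}\neq 0$ ruling out Arnoldi breakdown (for nonsingular $\mathbf{A}$, breakdown at step $k$ would force $\mathbf{r}_k^{GMRES}=0$, contradicting monotonicity of GMRES residuals). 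Your residual-polynomial step --- that $\mathbf{r}_k^{GMRES}\in\mathcal{K}_k$ would make $p_k$ admissible at step $k-1$ and hence force $\|\mathbf{r}_k^{GMRES}\|_2=\|\mathbf{r}_{k-1}^{GMRES}\|_2$ --- is the standard characterization of stagnation and is applied at exactly the right indices. Structurally, your induction is the affine-hull twin of the induction the paper \emph{does} carry out in Appendix~\ref{apdx:1} for Theorem~\ref{mtc:th:2}, where the identity ${\mathcal V}_n={\mathcal K}_n$ is propagated step by step; the instructive difference is that there the $\pm\varepsilon$ modification of $z_n^{(n)}$ manufactures the dimension growth unconditionally (at the price of the factor $1+\varepsilon$), whereas you derive that growth from the non-stagnation hypothesis --- which is precisely the trade-off the paper highlights by presenting the two results side by side.
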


It should be noted that the original presentation of the theorem given
in~\cite{WalkerNi:2010} applies to a class of orthogonal projection
methods where we have presented the theorem as applied to two members
of this class: Anderson's original method and the NKA variant.  It
should also be noted that this theorem of Walker and Ni shows that 
\begin{equation*}
{\mathcal W}_n$ = ${\bf A}{\mathcal K}_n,
\end{equation*} 
where ${\mathcal K}_n = \spn\{{\bf r}_0, {\bf A}{\bf r}_0,\ldots,{\bf
  A}^{n-1}{\bf r}_0\}$ is the $n^\text{th}$ Krylov space associated with ${\bf A}$ and ${\bf r}_0$.

An immediate corollary to Theorem~\ref{mtc:th:1} is
\begin{corollary}\label{mtc:cor:1}
Let ${\bf r}_n = {\bf A}{\bf x}_n - {\bf b}$ denote the residual
associated with the $n^\text{th}$ iterate generated by either Anderson
Mixing or NKA.  Under the assumptions of Theorem~\ref{mtc:th:1}
\begin{equation*}
\|{\bf r}_{n+1} \|_2 \le \|{\bf I} - {\bf A}\|\,\|{\bf r}_n^{GMRES}\|_2,
\end{equation*}
where $\|\cdot\|$ is the $L^2$ operator norm.
\end{corollary}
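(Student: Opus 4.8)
The plan is to feed the explicit formula for the accelerated iterate supplied by Theorem~\ref{mtc:th:1} directly into the definition of the residual and then exploit the affine structure of the problem. Theorem~\ref{mtc:th:1} gives ${\bf x}_{n+1} = ({\bf I} - {\bf A}){\bf x}_n^{GMRES} + {\bf b}$, so the first step is simply to substitute this expression into ${\bf r}_{n+1} = {\bf A}{\bf x}_{n+1} - {\bf b}$ and expand.

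The key step is a single algebraic identity. Writing out ${\bf A}[({\bf I} - {\bf A}){\bf x}_n^{GMRES} + {\bf b}] - {\bf b}$ and regrouping, I would verify that every term reassembles into $({\bf I} - {\bf A})({\bf A}{\bf x}_n^{GMRES} - {\bf b})$, using only that ${\bf A}$ commutes with ${\bf I}$ and with itself. That is, the residual of the accelerated iterate is exactly the fixed-point amplification operator ${\bf I} - {\bf A}$ applied to the GMRES residual,
$${\bf r}_{n+1} = ({\bf I} - {\bf A}){\bf r}_n^{GMRES}.$$
The underlying reason this works is that the residual transforms under the affine fixed-point map $G({\bf x}) = ({\bf I} - {\bf A}){\bf x} + {\bf b}$ in the same covariant way that the iterate does, because $G$ is precisely the update ${\bf x} \mapsto {\bf x} - f({\bf x})$ that Theorem~\ref{mtc:th:1} uses to produce ${\bf x}_{n+1}$ from ${\bf x}_n^{GMRES}$.

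With this identity in hand the conclusion is immediate: taking the $L^2$ norm and invoking submultiplicativity of the operator norm yields $\|{\bf r}_{n+1}\|_2 = \|({\bf I} - {\bf A}){\bf r}_n^{GMRES}\|_2 \le \|{\bf I} - {\bf A}\|\,\|{\bf r}_n^{GMRES}\|_2$, which is exactly the claimed bound.

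I do not anticipate a genuine obstacle here. Once Theorem~\ref{mtc:th:1} is granted, the entire content of the corollary lies in recognizing the residual identity, after which the estimate is a one-line consequence of the factorization and the definition of the operator norm. The only point requiring minor care is bookkeeping of the constant vector ${\bf b}$ during the regrouping, to confirm that the ${\bf b}$ and ${\bf A}{\bf b}$ terms combine into $-({\bf I}-{\bf A}){\bf b}$ rather than leaving an inhomogeneous remainder.
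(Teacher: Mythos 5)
Your proposal is correct and matches the paper's own proof essentially line for line: both substitute the iterate formula ${\bf x}_{n+1} = ({\bf I}-{\bf A}){\bf x}_n^{GMRES} + {\bf b}$ from Theorem~\ref{mtc:th:1} into ${\bf r}_{n+1} = {\bf A}{\bf x}_{n+1} - {\bf b}$, regroup to obtain ${\bf r}_{n+1} = ({\bf I}-{\bf A})({\bf A}{\bf x}_n^{GMRES} - {\bf b})$, and conclude by the operator-norm bound. Your closing remark about the ${\bf b}$ and ${\bf A}{\bf b}$ terms combining cleanly is exactly the regrouping the paper performs in its third displayed line.
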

\begin{proof}
\begin{eqnarray*}
{\bf r}_{n+1} =& {\bf Ax}_{n+1} - {\bf b}\\
=& {\bf A}\left[
({\bf I}
- {\bf A}){\bf x}_n^{GMRES} + b
\right] - {\bf b}\\
=& {\bf Ax}_n^{GMRES} - {\bf b} - {\bf A}({\bf Ax}_n^{GMRES} -
{\bf b})\\
=& ({\bf I} - {\bf A})({\bf Ax}_n^{GMRES} -
{\bf b}),
\end{eqnarray*}
from which the claim follows.
\end{proof}

Recall that 
\begin{equation*}
\|{\bf r}_n^{GMRES}\|_2 = 
\min_{{\bf y} \in {\mathcal K}_n}\|{\bf Ay} - {\bf b}\|_2.
\end{equation*}
From this we see the value of Corollary~\ref{mtc:cor:1}. In the linear case,
convergence of the non-truncated versions of both NKA and Anderson
Mixing have the same characterizations as GMRES, i.e. when considering
a normal matrix ${\bf A}$, the residual is controlled by the spectrum
of ${\bf A}$, provided that GMRES does not stagnate.

\subsection{Non-stagnation of GMRES}

When considering a linear problem, the coefficients $\tilde {\bf
  z}^{(n)}$ (Anderson Mixing) and ${\bf z}^{(n)}$ (NKA) do two things.
Through a minimization process, they select the best approximation
within a given subspace, and they also select the subspace for the
next iteration.  The failure mode is that the best approximation
within a given subspace does not use the information from the most
recent iteration, in which case the next subspace will be the same as
the current one and Anderson Mixing and NKA will become trapped.
Anderson briefly discusses this in his presentation of the
method\footnote{One must choose $\beta_n \ne 0$ otherwise the
  subspaces over which Anderson Mixing and NKA search will not
  expand.}.  Lemma 2.1.5 from Ni's dissertation~\cite{Ni:2009}
addresses this more directly.  What Walker and Ni recognize
in~\cite{WalkerNi:2010} is that this failure mode corresponds to the
stagnation of GMRES.

There have been several examinations of when GMRES stagnates.
Zavorian, O'Leary and Elman in~\cite{ZOE1} and Greenbaum, Ptak and
Strakos in~\cite{GPS1} present examples where GMRES does not decrease
the norm of the residual on several successive iterations, i.e. it
stagnates.  While GMRES converges for such cases, Anderson Mixing and
NKA will not.  Greenbaum and Strakos show in~\cite{GreenbaumStrakos1}
that the residual for GMRES applied to a matrix ${\bf A}$ is strictly
decreasing, i.e. no stagnation, if $\langle {\bf r} , {\bf A}{\bf
  r}\rangle \ne 0$ for all ${\bf r}$ satisfying $\|{\bf r}\|_2 \ne 0$.
This in turn will ensure the convergence of Anderson Mixing and NKA.

It is important to bear in mind that, in practice, the subspace
${\mathcal W}_n$ is likely to have modest dimension, say $10$, and is
a subspace in $\RN$ where $N$ is much larger, often on the order of
thousands or millions.  The slightest perturbation of a vector will,
with probability one, push it off of this low dimensional space.  Even
so, there is a simple change to the update step that provides a
theoretical guarantee that NKA will still converge even when GMRES
stagnates.  This guarantee comes at a slight performance cost and is
accomplished by modifying the update coefficients to ensure that
${\mathcal W}_{n-1} \subsetneq {\mathcal W}_n$.

We consider a modification to the NKA update rule\footnote{The
  following could be adapted to many formulations of Anderson Mixing}:
\begin{equation*}
{\bf x}_{n+1} = {\bf x}_n - 
\left[\sum_{i=1}^n{\bf v}_i z_i^{(n)} + \left(f({\bf x}_n) - \sum_{i=1}^n {\bf
    w}_i z_i^{(n)}\right)\right],
\end{equation*}
where ${\bf z}^{(n)}$ is chosen to minimize
\begin{equation*}
\left\| f({\bf x}_n) - \sum_{i=1}^n {\bf w}_i z_i^{(n)}\right\|_2.
\end{equation*} 
We now add the following safety check: if $\|{\bf w}_n\|_2 \ne 0$,
then perform the following update
\begin{equation*}
z_n^{(n)} \leftarrow z_n^{(n)} \pm \varepsilon 
\frac{\left \|f({\bf x}_n) - \sum_{i=1}^n {\bf w}_i z_i^{(n)}\right\|_2}
{\|{\bf w}_n\|_2}
\end{equation*}
for some positive $\varepsilon$, where one chooses to add or subtract
based on which option will maximize $|z_n^{(n)} + 1|$.  As will be shown in the proof of
Theorem~\ref{mtc:th:2}, the numerator of the modification is zero only
when NKA has found the root.

With this we can strengthen Corollary~\ref{mtc:cor:1} as it applies to
NKA as follows:

\begin{theorem}\label{mtc:th:2}
Let ${\bf A}$ be non-singular square matrix and suppose that we search
for a root of the function $f({\bf x}) = {\bf Ax} - {\bf b}$ starting
with ${\bf x}_0$ using the modified version of NKA. Let ${\bf
  r}_n^{GMRES} = {\bf A}{\bf x}_n^{GMRES} - {\bf b}$ denote the
$n^\text{th}$ GMRES residual and let ${\bf r}_n$ denote the
$n^\text{th}$ NKA residual, then
\begin{equation*}
\|{\bf r}_{n+1} \|_2 \le (1+\varepsilon)\|{\bf I} - {\bf A}\|\,\|{\bf r}_n^{GMRES}\|_2.
\end{equation*}
\end{theorem}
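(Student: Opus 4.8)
The plan is to reduce the theorem to one estimate on the minimization residual and then control that residual by the GMRES residual. Because $f({\bf x}) = {\bf Ax} - {\bf b}$ gives ${\bf w}_i = {\bf A}{\bf v}_i$ and $f({\bf x}_i) = {\bf r}_i$, the first step is a direct computation in the spirit of the proof of Corollary~\ref{mtc:cor:1}. Writing ${\bf p}_n := f({\bf x}_n) - \sum_{i=1}^n {\bf w}_i z_i^{(n)}$ for the minimization residual after the safety check has adjusted $z_n^{(n)}$, I would substitute the update ${\bf x}_{n+1} = {\bf x}_n - \bigl[\sum_i {\bf v}_i z_i^{(n)} + {\bf p}_n\bigr]$ into ${\bf r}_{n+1} = {\bf A}{\bf x}_{n+1} - {\bf b}$. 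Using ${\bf A}{\bf v}_i = {\bf w}_i$ and ${\bf r}_n - \sum_i {\bf w}_i z_i^{(n)} = {\bf p}_n$, everything collapses to the identity ${\bf r}_{n+1} = ({\bf I} - {\bf A}){\bf p}_n$, whence $\|{\bf r}_{n+1}\|_2 \le \|{\bf I} - {\bf A}\|\,\|{\bf p}_n\|_2$. The theorem then follows once I show $\|{\bf p}_n\|_2 \le (1+\varepsilon)\|{\bf r}_n^{GMRES}\|_2$.

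The factor $(1+\varepsilon)$ is produced by one triangle inequality. Let ${\bf p}_n^\star$ denote the residual of the true minimizer, before the check. The check replaces $z_n^{(n)}$ by $z_n^{(n)} \pm \varepsilon\,\|{\bf p}_n^\star\|_2/\|{\bf w}_n\|_2$, so ${\bf p}_n = {\bf p}_n^\star \mp \varepsilon(\|{\bf p}_n^\star\|_2/\|{\bf w}_n\|_2)\,{\bf w}_n$ and hence $\|{\bf p}_n\|_2 \le (1+\varepsilon)\|{\bf p}_n^\star\|_2$. This also makes precise the claim that the numerator of the modification is zero only at the root: if $\|{\bf p}_n^\star\|_2 = 0$ then ${\bf p}_n = 0$ and ${\bf r}_{n+1} = 0$, so the iteration has converged.

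It remains to identify $\|{\bf p}_n^\star\|_2$ with $\|{\bf r}_n^{GMRES}\|_2$, which is the core of the argument. Telescoping ${\bf w}_i = {\bf r}_{i-1} - {\bf r}_i$ gives ${\bf r}_n = {\bf r}_0 - \sum_{i=1}^n {\bf w}_i$, so ${\bf r}_n - {\bf r}_0 \in {\mathcal W}_n$ and therefore $\|{\bf p}_n^\star\|_2 = \dist({\bf r}_n, {\mathcal W}_n) = \dist({\bf r}_0, {\mathcal W}_n)$. Since $\|{\bf r}_n^{GMRES}\|_2 = \dist({\bf r}_0, {\bf A}{\mathcal K}_n)$, it suffices to prove ${\mathcal W}_n = {\bf A}{\mathcal K}_n$, the identity noted after Theorem~\ref{mtc:th:1} for the non-stagnating case, now for every $n$ up to convergence. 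I would argue by induction. The inclusion ${\mathcal W}_n \subseteq {\bf A}{\mathcal K}_n$ is unconditional: ${\bf w}_1 = {\bf A}{\bf r}_0$, and from ${\bf p}_n \in {\bf r}_0 + {\mathcal W}_n \subseteq {\mathcal K}_{n+1}$ one gets ${\bf w}_{n+1} = {\bf A}{\bf p}_n + (\text{an element of }{\mathcal W}_n) \in {\bf A}{\mathcal K}_{n+1}$. As ${\bf A}$ is nonsingular and $\dim{\mathcal K}_n = n$ below convergence, equality reduces to the dimension count $\dim{\mathcal W}_n = n$, i.e.\ to showing ${\mathcal W}_{n-1} \subsetneq {\mathcal W}_n$ at every step.

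This dimension step is where the safety check is essential, and I expect it to be the main obstacle. I would strengthen the induction hypothesis to ${\mathcal W}_n = {\bf A}{\mathcal K}_n$ together with ${\bf w}_n$ having a nonzero component along ${\bf A}^n{\bf r}_0$. Rewriting ${\bf p}_n = {\bf r}_0 - \sum_{i=1}^n (1 + z_i^{(n)}){\bf w}_i$, the only term carrying an ${\bf A}^n{\bf r}_0$ component is $(1 + z_n^{(n)}){\bf w}_n$, so ${\bf p}_n \notin {\mathcal K}_n$ precisely when $1 + z_n^{(n)} \ne 0$. The check selects the sign of its perturbation so as to maximize $|z_n^{(n)} + 1|$, which forces $|1 + z_n^{(n)}| \ge \varepsilon\,\|{\bf p}_n^\star\|_2/\|{\bf w}_n\|_2 > 0$ whenever the method has not converged; hence ${\bf A}{\bf p}_n$ carries a nonzero ${\bf A}^{n+1}{\bf r}_0$ component and ${\bf w}_{n+1} \notin {\mathcal W}_n$, closing the induction. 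The delicate point is exactly that $1 + z_n^{(n)} = 0$ is the stagnation mode of the unmodified method (the case the hypotheses of Theorem~\ref{mtc:th:1} exclude), so the argument must verify that perturbing this single coefficient genuinely preserves the top Krylov direction rather than merely enlarging a coefficient. With ${\mathcal W}_n = {\bf A}{\mathcal K}_n$ established, $\|{\bf p}_n^\star\|_2 = \|{\bf r}_n^{GMRES}\|_2$, and assembling the three estimates yields $\|{\bf r}_{n+1}\|_2 \le (1+\varepsilon)\|{\bf I} - {\bf A}\|\,\|{\bf r}_n^{GMRES}\|_2$.
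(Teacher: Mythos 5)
Your proposal is correct and follows essentially the same route as the paper's proof: the identity ${\bf r}_{n+1} = ({\bf I}-{\bf A}){\bf p}_n$, the triangle inequality producing the factor $(1+\varepsilon)$, and an induction in which the safety check forces $1+z_n^{(n)} \ne 0$ so that the search space keeps pace with the expanding Krylov space until a solution is found. The only cosmetic difference is that you phrase the key identity as ${\mathcal W}_n = {\bf A}{\mathcal K}_n$, whereas the paper establishes the equivalent statement ${\mathcal V}_n = {\mathcal K}_n$ for ${\mathcal V}_n = \spn\{{\bf v}_1,\ldots,{\bf v}_n\}$, the two being the same since ${\bf w}_i = {\bf A}{\bf v}_i$ with ${\bf A}$ nonsingular.
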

The proof can be found in Appendix~\ref{apdx:1}. The limitation of this result
is that the orthogonal projection will become more poorly conditioned
as $\varepsilon$ decreases.  Note that for the same reason both
Anderson Mixing and NKA can develop arbitrarily poorly conditioned
projection problems.

\subsection{Relationship between Anderson mixing and Broyden and the choice of mixing parameter}

For the numerical experiments we performed, we used a variant of
Broyden's method where, at the $n^\text{th}$ iteration, the
approximate derivative of $f$ maps ${\bf v}_n$ to ${\bf w}_n$.  There
are generalizations of Broyden's method, sometimes known as ``bad''
Broyden, where the approximate inverse of the derivative of $f$ maps
${\bf w}_i$ onto ${\bf v}_i$ for some range of $0 \le i \le n$.
In~\cite{Eyert:1996} Eyert recognized the equivalence of ``bad''
Broyden and Anderson Mixing.

Fang and Saad, in~\cite{FangSaad:2009}, extend this comparison and
present a range of multisecant methods including generalizations of
Broyden, Anderson Mixing and Eirola-Nevanlinna-like methods.  In their
presentation of Anderson Mixing they form a basis for ${\mathcal W}_n$
by using differences of successive function evaluations, rather than
differences of the most recent function evaluation and the previous
function evaluations, as is done in this work.  Adapting Eq. (24)
of~\cite{FangSaad:2009} to the notation used here\footnote{The mapping
  between Fang and Saad's notation and ours is ${\bf v}_{k} = -\Delta
  x_{k-1}$, ${\bf w}_{k} = -\Delta f_{k-1}$ and the coefficients are
  given by ${\bf z}_k = -\gamma_k$}, they consider Anderson Mixing in
the form

\begin{equation*}
{\bf x}_{n+1} = {\bf x}_n - 
\left[\sum_{i=n-M+1}^n{\bf v}_i z_i^{(n)} + (-\beta) \left(f({\bf x}_n) - \sum_{i=n-M+1}^n {\bf
    w}_i z_i^{(n)}\right)\right],
\end{equation*}
where
\begin{equation*}
{\bf z}^{(n)} = \argmin_{{\bf y}\in \RealVect{M}}\left\| f({\bf x}_n)
- \sum_{i=n-M+1}^n {\bf w}_i y_i\right\|_2,
\end{equation*} 
and where $\beta$ is a mixing parameter.  Comparing this description
of Anderson mixing to Eqs.~\eqref{nnc:eq:1} and~\eqref{nnc:eq:2} we find that $-\beta$ corresponds
to Anderson's $\beta_n$ relaxation parameters.\footnote{
  We believe that Fang and Saad implicitly made a modification to Anderson Mixing in the first line of Eq. (24) of [24],
  and that the next Anderson iterate is
  $x_{k+1} = \tilde{x}_k - \beta\tilde{f}_k$.
}
However, where Anderson takes the $\beta_n > 0$ (and NKA takes $\beta_n=1$)
Fang and Saad only consider positive $\beta$ in their numerical
experiments; that is, \emph{negative} $\beta_n$.  The assumption
behind the derivation of both Anderson mixing and NKA is that the
basic fixed point iteration given in Eq.~\eqref{mtc:eq:1} is convergent, and from this it
follows that the $\beta_n$ should be positive, with $\beta_n=1$ often
the most appropriate choice.  

The effective $f({\bf x})$ in our formulation
of the $k$-eigenvalue problem described in Section~\ref{sec:transport} is specifically
designed (via the transport sweeps) so that~\eqref{mtc:eq:1} is convergent.
Simply replacing $f({\bf x})$ with $-f({\bf x})$, for example, gives a very different
fixed point iteration where negative $\beta_n$ would be appropriate.
To illustrate what happens when a $\beta_n$ of the ``wrong'' sign is used,
our experiments include a variant of Anderson mixing
with $\beta_n = -1$, which we denote by $NKA_{-1}$.  The poor performance of this method on
our test problems underscores the importance of understanding the character
of $f({\bf x})$ in the context of (1), and perhaps modifying it through scaling
or more elaborate preconditioning.

\subsection{NKA in practice}

For our experiments we shall use the NKA and NKA$_{-1}$ formulation of Anderson
Mixing.  Each ${\bf w}_i$ vector is rescaled to have
unit-norm, and the associated ${\bf v}_i$ vector is scaled by the same
factor. In our implementation we use a Cholesky Factorization to solve
the least-squares problem associated with the orthogonal projection.
This requires that the Gramian of the ${\bf w}_i$ vectors be positive
definite.  We enforce this by choosing a linearly independent subset
of the ${\bf w}_i$ vectors as follows: At the beginning of iteration
$n$ we start with ${\bf w}_n$ and consider ${\bf w}_{n-1}$.  If the
angle between ${\bf w}_n$ and ${\bf w}_{n-1}$ is less than some
tolerance we discard ${\bf w}_{n-1}$.  We iterate in this manner over
successively older vectors, keeping them if the angle they make with
the space spanned by the kept ${\bf w}_i$ vectors is greater than the
tolerance.  This ensures that we may use Cholesky factorization and
that the condition number of the problem is bounded.

Memory constraints require that $M<n$, forcing us to use NKA,
NKA$_{-1}$ and Broyden in a setting for which there are no theoretical
results that the authors of this paper know of.  One important
distinction between Broyden and NKA is that for each iteration NKA
stores a pair of state vectors, while Broyden only stores one.
Consequently the memory requirements for Broyden are half that of NKA
for the same depth of stored vectors.

Depending on the tolerance chosen at each linear solve for JFNK, one
can reasonably expect theoretical guarantees of quadratic convergence
to hold in some neighborhood of the solution, however identifying that
neighborhood is often considerably harder than solving the original
problem.  In summary, for the numerical experiments we present, there
is little theory regarding performance, making the following numerical
results of value.

\section{The $k$-eigenvalue formulation of the Boltzmann transport equation}
\label{sec:transport}

The problem that we use to compare the efficiency of these various
nonlinear methods is the $k$-eigenvalue formulation of the Boltzmann
neutron transport equation.
\begin{equation}
\label{Boltzmann}
\left[ \hat{\Omega}\cdot\vec{\nabla} + \Sigma_t(\vec{r},E)
\right]\psi(\vec{r},\hat{\Omega},E) = \int dE^\prime\int
d\Omega^\prime\Sigma_s(\vec{r},E^\prime\rightarrow
E,\hat{\Omega}^\prime\cdot\hat{\Omega})\psi(\vec{r},\hat{\Omega},E^\prime)
+ \frac{1}{k}\int dE^\prime\chi(E^\prime\rightarrow
E)\bar{\nu}\Sigma_f(\vec{r},E')\phi(\vec{r},E^\prime).
\end{equation}
The unknown $\psi$ describes the neutron flux in space,
$\vec{r}\in\RealVect{3}$, angle $\hat{\Omega} \in {\mathbb S}^2$ and
energy $E$.  $\Sigma_t$ is the total cross section, $\Sigma_s$ is the
scattering cross section and $\Sigma_f$ is the fission cross section.
The quantities $\bar\nu$ and $\chi$ characterize the rate and energy
spectrum of neutrons emitted in the fission process.  Integrating the
flux over angle gives $\phi$, the scalar flux at a given position and
energy. For a thorough examination of the mathematical models of
fission, including the Boltzmann transport equation,
cf.~\cite{BellGlasstone:1970,LewisMiller:1993}.

Discretization of \req{Boltzmann} is accomplished using
\begin{enumerate}
\item S$_N$ (discrete ordinates) in angle: We use $S_4$ and $S_6$ level symmetric
  quadrature sets and the solution is computed at the
  abscissas of that set and then integrated over angle using the
  quadrature weights.
\item Multigroup in energy: Cross sections can be very noisy and it is
  therefore not practical to discretize the energy variable as one
  would a continuous function. Instead, the energy space is divided up
  into groups and it is assumed that the energy component for a cross
  section $\sigma$ can be separated out:
  \begin{eqnarray*} \sigma(\vec{r},E) &\approx&
    f(E)\sigma_g(\vec{r}),\qquad E_g<E\leq E_{g-1}, \\
    \sigma_g &=& \frac{\int_{E_g}^{E_{g-1}}dE\sigma(\vec{r},E)}
    {\int_{E_g}^{E_{g-1}}dE f(E)}.
  \end{eqnarray*}
\item Finite element or difference in space: The spatial variable can
  be treated in a variety of different ways. Here we explore results
  from two different discretization strategies as employed by the Los
  Alamos National Laboratory production transport codes
  PARTISN~\cite{PARTISN} and Capsaicin. PARTISN is used to generate
  results on a structured mesh with diamond (central) difference and
  Capsaicin to generate results on an unstructured polygonal mesh using
  discontinuous finite elements.
\end{enumerate}
Applying the S$_N$ and multigroup approximations, the $k$-eigenvalue
equation takes the form
\begin{equation}\label{BT eqn}
  \left( \hat{\Omega}_m \cdot \nabla + \sigma_{t,g}(\vec{r})
  \right) \psi_{g,m} (\vec{r}) = \dfrac{1}{4 \pi} \sum_{\gp=1}^{G}
  \sigma_{s,\gp \rightarrow g} (\vec{r}) \phi_{\gp}(\vec{r}) +
  \dfrac{1}{k} \sum_{\gp=1}^{G} \bar\nu
  \sigma_{f,\gp}(\vec{r}) \dfrac{\chi_{g^\prime\rightarrow g}(\vec{r})}{4 \pi} 
  \phi_{\gp}(\vec{r}).
\end{equation}
Here, 
\begin{itemize}
\item $\psi_{g,m}$ represents the angular flux in direction
$\hat{\Omega}_m$ in energy group $g$, which is the number of neutrons
passing through some unit area per unit time
$\left(\frac{\#}{cm^2\cdot Mev\cdot ster\cdot sec}\right)$,
\item $\phi_{g}$ is the scalar flux, or angle-integrated angular
flux. The S$_N$ quadrature used to define angular abscissas
($\hat{\Omega}_m$) can be used to integrate $\psi$ over all angle: 
$\phi_g = \sum_{m=1}^M w_m \psi_{g,m}$ where $w_m$ are the quadrature
weights $\left(\frac{\#}{cm^2\cdot Mev\cdot sec}\right)$,
\item $\sigma_{t,g}$ is the total cross section, or interaction
probability per area, for group $g$ $\left(\frac{\#}{cm^2}\right)$,
\item $\sigma_{s,\gp \rightarrow g}$ is the `inscatter' cross section,
which is the probability per area that a neutron will scatter from
group $\gp$ into group $g$ $\left(\frac{\#}{cm^2}\right)$,
\item $\sigma_{f,g}$ is the fission cross section for group $g$
$\left(\frac{\#}{cm^2}\right)$,
\item $\bar\nu$ is the mean number of neutrons produced per fission
event, and
\item $\chi_{g^\prime\rightarrow g}$ describes the energy spectrum of
  emitted fission neutrons in group $g$ produced by neutrons absorbed
  from group $g^\prime$.
\end{itemize}
Application of the spatial discretization then yields a matrix
equation. For convenience, this equation can be
expressed in operator notation as
\begin{equation}\label{BT op}
\mathbf{L} \psi = \mathbf{M} \mathbf{S} \mathbf{D} \psi + \frac{1}{k}
\mathbf{M} \mathbf{F} \mathbf{D} \psi.
\end{equation}
where $\mathbf{L}$ is the streaming and removal operator, $\mathbf{S}$
is the scattering operator, $\mathbf{F}$ is the fission operator, and
$\mathbf{M}$ and $\mathbf{D}$ are the moment-to-discrete and
discrete--to--moment operators, respectively, and $\mathbf{D}\psi =
\phi$. 

If the fluxes are arranged by energy group, the form of $\mathbf{L}$
is block diagonal and it can be inverted onto a constant
right-hand-side exactly by performing a so-called sweep for each
energy group and angular abscissa. The sweep is an exact application
of $\mathbf{L}^{-1}$ and can be thought of as tracking the movement of
particles through the mesh along a single direction beginning at the
incident boundaries, which vary by angle, so that all necessary
information about neutrons streaming into a cell from other `upwind'
cells is known before it is necessary to make calculations for that
cell.  Most transport codes contain the mechanism to perform the sweep,
but could not apply $\mathbf{L}$ directly without significant
modification, therefore \req{BT op} is generally thought of as a
standard eigenproblem of the form
\begin{equation}\nonumber
k \phi = \left(\mathbf{I}-\mathbf{DL}^{-1} \mathbf{M}
\mathbf{S}\right)^{-1} \mathbf{DL}^{-1}\mathbf{MF} \phi.
\end{equation}
Because we seek the mode with the largest magnitude eigenvalue, a power
iteration is one possible iterative technique:
\begin{subequations}
\nonumber
\begin{eqnarray}
\nonumber
\phi_{z+1} &=& \left(\mathbf{I}-\mathbf{DL}^{-1} \mathbf{M}
\mathbf{S}\right)^{-1} \mathbf{DL}^{-1}\mathbf{M}\frac 1 {k_z}\mathbf{F} \phi_z,\\
\nonumber
k_{z+1} &=& k_z\frac{W^T \mathbf{F}\phi_{z+1}}{W^T \mathbf{F}\phi_z} .
\end{eqnarray}
\end{subequations}
Note that $k_z$ could be removed from both of these equations and the
iteration would be unchanged -- this is generally how power iteration is
presented in the mathematical literature. Our motivation for writing
it this way will become apparent shortly. The update step for the
eigenvalue is physically motivated by the fact that $k$ represents the
time-change in the neutron population, which is either a positive or
negative trend depending on the strength of the only source of
neutrons in the system, i.e. the fission source. The system is made
steady state by adjusting the fission source globally, therefore $W$
is typically taken to be a vector of cell volumes so that each
cell-wise contribution is weighted according to its contribution to
the total. The dot products therefore produce total fission sources,
i.e., volume integrals of the spatially dependent fission sources, and
the ratio between successive fission sources, which are produced by
solving the steady-state equation, is indicative of the direction in
which the neutron population is changing.

Full inversion of $\left(\mathbf{I}-\mathbf{DL}^{-1} \mathbf{M}
\mathbf{S}\right)$ is expensive, however, so it is generally more
efficient to use a nonlinear fixed point iteration (FPI) in which the
operator $\left(\mathbf{I}-\mathbf{DL^{-1}} \mathbf{M}
\mathbf{S}\right)$ is inverted only approximately using a nested
inner-outer iteration scheme to converge the scattering
term. Commonly, one or more `outer iterations' are conducted in which
the fission source is lagged. Each outer iteration consists of a
series of inner `within-group' iterations (one or more per energy
group) in which the inscatter is lagged so that the within-group
scattering can be converged. In general, the iteration can be written
as
\begin{subequations}
\begin{eqnarray}
\phi_{z+1} &=& \mathbf{P}(k_z) \phi_z,\\ 
k_{z+1} &=&
k_z\frac{W^T\mathbf{F}\phi_{z+1}}{W^T\mathbf{F}\phi_{z}} \label{k-eig}.
\end{eqnarray}
\end{subequations}
If $\mathbf{P}(k_z)=\left(\mathbf{I}-\mathbf{DL}^{-1} \mathbf{M}\mathbf{S}\right)^{-1} \mathbf{DL}^{-1}\mathbf{M}\frac{1}{k_z}\mathbf{F}$
we recover a true power iteration, but there are numerous possible
forms for this operator, and it is typically more efficient to choose
a form of $\mathbf{P}(k_z)$ that requires a minimal number of sweeps.

A fixed point problem, which admits the same solution as the power
iteration, but does not require that one invert
$\left(\mathbf{I}-\mathbf{DL}^{-1} \mathbf{M} \mathbf{S}\right)$, and
is not strictly in the form of an eigenvalue problem is
\begin{subequations}\label{k FP}
\begin{eqnarray}
\label{k FPa}
\phi_{z+1} &=& \mathbf{DL}^{-1}
\mathbf{M}\left(\mathbf{S}+\frac{1}{k_z}\mathbf{F}\right) \phi_z,\\
\label{k FPb}
k_{z+1} &=&
\frac{W^T\mathbf{F}\phi_{z+1}}{W^T\left(\frac{1}{k_z}\mathbf{F}\phi_z-\mathbf{S}
(\phi_{z+1}-\phi_z)\right)}.
\end{eqnarray}
\end{subequations}
Here the update for $k_{z+1}$ follows from assuming the difference of
the right hand side of~\req{k FPa} at iterations $z$ and $z+1$ is
zero.  This scheme, referred to as ``flattened'' in~\cite{Gill:2011b},
was shown to be, in most cases, the most efficient formulation
compared to the other formulations that involve additional,
intermediate levels of iteration, and JFNK with this formulation was
shown to be the most efficient solution method compared to fixed-point
iteration, even without employing a preconditioner for the inner GMRES
iterations.

However, if we assume that the scattering is converged, then the
second term in the denominator of \req{k FPb}
is zero and we recover \req{k-eig}. For a converged system, this term
will indeed be zero and the ratio of the fission sources will go to
one because the fission source has been suitably adjusted by
$\frac{1}{k}$ so that the net neutron production is zero.

From this we have that a fixed point of the iteration in \req{k FP}
is also a root of the residual function
\begin{equation}\label{k FP 2}
F\left(
\begin{array}{c}
\phi \\
k
\end{array}
\right) = 
\left(
\begin{array}{c}
\left(\mathbf{I} - \mathbf{P}(k)\right)\phi \\
\left(1 - \frac{W^T\mathbf{F}\mathbf{P}(k)\phi}{W^T\mathbf{F}\phi}\right)k
\end{array}
\right),
\end{equation}
where
\begin{equation*}
\mathbf{P}(k) = 
\mathbf{DL}^{-1}\mathbf{M}\left(
\mathbf{S}+\frac{1}{k}\mathbf{F}\right).
\end{equation*}
In order to initialize the iteration, a single sweep is performed on a
vector of ones, $E=[1\,1\ldots 1]^T$, and the scaled two-norm of the
flux is then normalized to one:
\begin{subequations}
\begin{eqnarray*}
  \phi_0 &=& \frac {\mathbf{DL}^{-1}
  \mathbf{M}\left(\mathbf{S}+\mathbf{F}\right)E}
{\left\|\mathbf{DL}^{-1}
  \mathbf{M}\left(\mathbf{S}+\mathbf{F}\right)E,
\right\|_{2,s}}, \\
  k_0 &=& 1.
\end{eqnarray*}
\end{subequations}
Here $\|\cdot\|_{2,s}$ indicates the $L^2$-norm normalized by the
square root of the number of degrees of freedom. Once the residual is
formulated, it is possible to apply any of the nonlinear solvers
discussed to seek a root of $F$ given in Eq.~\eqref{k FP 2}. The
following section contains a comparison of JFNK, Broyden, NKA$_{-1}$
and NKA for the $k$-eigenvalue problem.

For true power iteration, when the initial guess is not orthogonal to
the first eigenmode, the first eigenmode emerges as the other modes
are reduced at each iteration by successive powers of the ratios of
the corresponding eigenvalues to the dominant eigenvalue.  The
asymptotic rate of convergence of true power iteration is therefore
controlled by the dominance ratio, or the ratio of the second
eigenvalue to the first, therefore it is a common measure for the
difficulty of the problem. We note that the term `power iteration' is
used loosely in the literature, often referring to an iterative
process that is actually a form of FPI. While experience has shown
that FPI, and accelerated variants of FPI, can be trusted to converge
to the dominant eigenmode, there has been no rigorous mathematical
proof that this will always be the case.  Furthermore, the
relationship between the dominance ratio and the asymptotic rates of
convergence is not clear for FPI as it is in the case of true power
iteration, although in our experience problems with larger dominance
ratios are more computationally demanding for FPI than those with
smaller dominance ratios as for power iteration.

\section{Results} \label{sec:results}

Results are given for two test problems. The first is a cylinder with
a 3.5 cm radius and a height of 9 cm modeled in two-dimensional
cylindrical coordinates, similar to the cylindrical problem that was
studied in~\cite{Warsa:2004} in three-dimensional Cartesian
coordinates.  The problem consists of a central 5-cm layer of Boron-10
with 1 cm thick water layers on either side and 1 cm layers of highly
enriched uranium on the ends.  The top, bottom and radial boundaries
are either all vacuum or all reflective (the inner radial boundary is
a symmetry condition in cylindrical coordinates).  A 16-group
Hansen-Roach cross section data set is used to generate the results
presented here (the 16 group data was collapsed to 5 groups in the
original paper~\cite{Warsa:2004}). 

The second problem is the
well-known C5G7-MOX problem~\cite{C5G7}, which was formulated as a
benchmark for deterministic neutron transport codes. It is a
3-dimensional mockup of a 16 assembly nuclear reactor with
quarter-core symmetry surrounded on all sides by a moderator. There
are three variants of the problem that vary in the extent to which the
control rods are inserted. The problem specification includes
seven-group cross sections with upscatter in the lower energy groups
and we use an S$_6$ quadrature set as in~\cite{Gill:2011b}. The geometry
is complicated by the fact that there are cylindrical fuel pins that
must be modeled on an orthogonal mesh in PARTISN, so material
homogenization was necessary on cells that contained fuel pin
boundaries.  PARTISN contributed results for the C5G7-MOX benchmark
in~\cite{C5G7}, so the original input and geometry files were
used to do this study. The PARTISN results and a detailed
description of how the geometry was set up are available
in~\cite{Dahl:2006}. Capsaisin was not used to test this problem.

The three nonlinear solvers that we consider in this paper are
implemented in the NOX nonlinear solver package that is part of the
larger software package Trilinos 10.6~\cite{Trilinos}.  We use JFNK as
it is implemented in NOX and have written implementations of both NKA
and Broyden's method in the NOX framework as user supplied direction
classes. These two direction classes are available for download on
Sourceforge at \url{http://sourceforge.net/projects/nlkain/}. In
addition to the existing JFNK interface, interfaces to the NOX package
were developed for the Broyden and NKA methods so that all methods
except fixed-point iteration are accessed through the same Trilinos
solver.  JFNK can be extremely sensitive to the choice of forcing
parameter, $\eta$, therefore we explored variations of the three
possibilities implemented in NOX:
\begin{subequations}
\begin{align}
& \mbox{Constant:} & \eta_z &= \eta_0 \\ & \mbox{Type 1 (EW1):} &
  \eta_z &=
  \left|\frac{\|F_z\|-\|J_{z-1}\delta_{z-1}+F_{z-1}\|}{\|F_{z-1}\|}\right|.
  &&\text{If}\quad\eta_{z-1}^{\frac{1+\sqrt{5}}{2}} >
  .1,\quad\text{then}\quad\eta_z \leftarrow
  \mbox{max}\left\{\eta_{z},\,\eta_{z-1}^{\frac{1+\sqrt{5}}{2}}\right\}.\\ &
  \mbox{Type 2 (EW2):} & \eta_z &=
  \gamma\left(\frac{\|F_z\|}{\|F_{z-1}\|}\right)^\alpha.
  &&\text{If}\quad \gamma\eta_{z-1}^\alpha\quad > .1,
  \quad\text{then}\quad \eta_z \leftarrow \mbox{max}\{\eta_z ,
  \gamma\eta_{z-1}^\alpha\}.
\end{align}
\end{subequations}
Types 1 and 2 were developed by Eisenstat and
Walker~\cite{Eisenstat:1996}, therefore they are denoted EW1 and EW2
in the results that follow. The NOX default parameters were also used
for EW1 and EW2: $\eta_0 = 10^{-1}$, $\eta_{min} = 10^{-6}$,
$\eta_{max} = 10^{-2}$, $\alpha = 1.5$ and $\gamma = 0.9$.  

Note that for the results below, the convergence criterion is $\|F\|_{2,s}\leq
10^{-9}$ for the reflected cylinder and $\|F\|_{2,s}\leq 10^{-8}$ for the C5G7-MOX problem.

\subsection{2-D cylinder}\label{cyl test}
A 175 (r-axis) by 450 (z-axis) mesh of equally sized squares is used
in PARTISN for both the reflected and unreflected problems.  The
Capsaicin results are computed on an unstructured mesh comprising
roughly the same number of cells in the $r$ and $z$ axes as the
PARTISN computation, for a total of 79855 (possibly non-convex)
polygons, each of which has 3 to 6 vertexes on a cell.  These codes
use similar methods, but there are several differences that lead to
different iterative behavior and slightly different results.  First,
PARTISN is an orthogonal mesh code that utilizes a diamond (central)
difference (DD) spatial discretization, requiring the storage of only
one cell-centered value per energy group~\cite{LewisMiller:1993}.
Capsaicin uses an unstructured mesh, which has the advantage of being
able to model geometrically complex problems with higher fidelity, but
it comes at a cost.  Finding an efficient sweep schedule on an
unstructured mesh that minimizes the latency in parallel computations
is a difficult problem in itself \cite{Hendrickson:2005, Pautz:2002,
  Kumar:2006}.  In contrast, a parallel sweep schedule that is nearly
optimal is easy to implement for structured meshes \cite{Baker:1998}.
Furthermore, a discontinuous finite element method (DFEM) spatial
discretization is employed in Capsaicin such that the number of
unknowns per cell is equal to the number of nodes~\cite{Warsa:2008}.
While the DFEM has better accuracy than DD, potentially enabling the
use of commensurately fewer mesh cells for the same solution accuracy,
it is more costly to compute the solution to the DFEM equations than
the DD equations, and the DFEM is thus less efficient than DD when
calculations are performed on meshes of similar size.

The second difference between the two codes is that reflecting
boundary conditions can cause instabilities with the DD spatial
discretization and the angular differencing used in PARTISN so, in
order to achieve stability, the reflected flux is `relaxed'. This is
done using a linear combination of the new and old reflected fluxes as
the boundary source for the next iteration:
\[
\psi_{relaxed}^{z+1} = r\psi_{refl}^z+(1-r)\psi_{refl}^{z+1}.
\] 
The relaxation parameter, $r$, is, by default, $\frac12$ for this
problem.  This relaxation is not necessary in Capsaicin because it
uses a more accurate spatial and angular discretization, including the
use of starting directions for calculations in cylindrical
coordinates~\cite{LewisMiller:1993}, as well the use of reflected
starting directions when reflection conditions are specified on the
radial boundary.  The eigenvalues computed by PARTISN are
$k=0.1923165$ and $k=0.8144675$ for the unreflected and reflected
cases, respectively.  Because the codes employ different
discretizations, the eigenvalues calculated by Capsaicin were
$k=0.191714$ for the unreflected case and $k=0.814461$ for the
reflected case. The Capsaicin discretization results in roughly $8$
million degrees of freedom while the PARTISN discretization results in
a little over $1.2$ million degrees of freedom. 


\subsubsection{Unreflected cylinder}

\paragraph{PARTISN results\label{section:PART-no_refl}}

Tables~\ref{table:JFNK-norefl}-\ref{table:perc-norefl} show the number
of JFNK and total inner GMRES iterations, the number of sweeps, the
CPU time and the percentage of that time that was spent computing the
residual, respectively.  As can been seen in
Tables~\ref{table:JFNK-norefl} and \ref{table:sweep-norefl}, the GMRES
subspace size does not affect the number of Newton iterations for the
most part and has little effect on the total number of GMRES
iterations or sweeps.  As the forcing parameter decreases, however,
more sweeps are required because more GMRES iterations are required,
both overall and per Newton step. Smaller subspace sizes require more
restarts, which in turn each require one additional sweep. In general,
for NKA, NKA$_{-1}$ and Broyden,
decreasing the subspace size leads to an increase in sweep count, but
the effect is negligible for NKA in this case, while it is significant
for NKA$_{-1}$ and Broyden. Broyden(10) also requires manytotal  more
iterations than Broyden(5) in contradiction to the general
trend. NKA$_{-1}$ also failed to converge for a subspace size of
5. Overall, NKA requires the fewest sweeps and displays a predictable
trend of decreasing sweep count with increasing subspace size. As
Table~\ref{table:time-norefl} shows, the runtimes are consistent with
the sweep count. NKA is more than 3 times faster than FPI and between
1.6 and 2.3 times as fast as JFNK, while NKA$_{-1}$ is, at best,
comparable to JFNK and, at worst, more than an order of magnitude
slower than FPI. The behavior of Broyden is rather chaotic\textemdash
at best, it is almost as efficient as NKA, but at worst it too is
slower than FPI.

The percentage of the total time spent evaluating the residual is
shown in Table~\ref{table:perc-norefl} because, for this particular
code and problem, the solver time requires a non-trivial portion of
the run-time. As can be seen, this percentage is slightly smaller for
NKA and NKA$_{-1}$ than Broyden and noticeably smaller for Broyden than for
JFNK.  There is also a slight decrease as the subspace size increases
in all cases, which is expected since the respective solvers must do
more work for a larger subspace than a smaller one.  JFNK requires the
least amount of time in the solver with approximately 95\% of the
run-time spent in the sweep regardless of GMRES subspace size. Despite
these numbers, we note that NKA still requires the least run-time of
all of the methods.

The plots in Fig.~\ref{fig:no_refl} show the scaled $L^2$-norm of
the residual as a function of the number of sweeps. For JFNK where
there are a few outer Newton iterations with multiple sweeps required
to do the inner inversion of the Jacobian, the $L^2$-norm is given at
each Newton step plotted at the cumulative sweep count up to that
point. Fig.~\ref{fig:no_refl_nka} shows that the behavior of NKA is
similar for all subspace sizes and much more efficient than
FPI. Fig.~\ref{fig:no_refl_AM} shows NKA$_{-1}$, which converges
fairly quickly for the larger restarts, but then has trouble reducing
the residual for subspaces of 5 and 10. While 10 eventually converges,
5 seems to stagnate. Fig.~\ref{fig:no_refl_broy} shows the behavior
of Broyden. As can be seen, the norm fluctuates quite erratically for
every subspace size except 30 and cannot compete with NKA with a
subspace of 5. Fig.~\ref{fig:no_refl_jfnk} shows some of the more
efficient JFNK results plotted at each Newton iteration for the
current sweep count. And finally, Fig.~\ref{fig:no_refl_comp} shows
a comparison of the most efficient results for each method.

\begin{table}[h!]
\centering

\caption{PARTISN unreflected cylinder: Number of outer and inner
JFNK iterations, sweeps, run-time and percentage of the run-time spent
computing the residual to an accuracy of $\|F\|_{2,s}\leq 10^{-9}$ for
the various methods.}
\subfloat[Outer JFNK/total inner GMRES iterations]
{
\begin{tabular}{cccccc}  \toprule[1pt]
  & \multicolumn{5}{c}{$\eta$} \\ \cline{2-6}
\raisebox{1.5ex}[0cm][0cm]{subspace}
        &	0.1		&	0.01		&	0.001		&	EW1		&	EW2		\\\hline
30	&	8	(30)	&	6	(42)	&	5	(45)	&	6	(39)	&	5	(38)	\\
20	&	8	(30)	&	6	(42)	&	5	(45)	&	6	(39)	&	5	(38)	\\
10	&	8	(30)	&	6	(42)	&	5	(48)	&	6	(41)	&	5	(42)	\\
5	&	8	(30)	&	6	(48)	&	5	(50)	&	5	(43)	&	5	(44)	\\\bottomrule[1pt]
\label{table:JFNK-norefl}
\end{tabular}}

\subfloat[Number of sweeps (FPI converged in 99)]
{
\begin{tabular}{ccccccccc}  \toprule[1pt]
	&		&		&		&		&		& JFNK $\eta$  &		&		\\\cline{5-9}
\raisebox{1.5ex}[0cm][0cm]{subspace}
        &	\raisebox{1.5ex}[0cm][0cm]{NKA}
                        &	\raisebox{1.5ex}[0cm][0cm]{NKA$_{-1}$}
                                                        &     \raisebox{1.5ex}[0cm][0cm]{Broyden}
                                                      	&	0.1	&	0.01	&	0.001	&	EW1	&	EW2	\\\hline
30	&	26	&	43	&	31	&	47	&	55	&	56	&	57	&	49	\\
20	&	26	&	69	&	43	&	47	&	55	&	56	&	57	&	49	\\
10	&	27	&	946	&	123	&	47	&	55	&	60	&	60	&	55	\\
5	&	28	&	--	&	70	&	47	&	66	&	68	&	64	&	61	\\\bottomrule[1pt]

\label{table:sweep-norefl}
\end{tabular}}

\subfloat[CPU time (s) (FPI converged in 245.8 s)]
{
\begin{tabular}{ccccccccc}  \toprule[1pt]
	&		&		&		&		&		& JFNK $\eta$  &		&		\\\cline{5-9}
\raisebox{1.5ex}[0cm][0cm]{subspace}
        &	\raisebox{1.5ex}[0cm][0cm]{NKA}
                        &	\raisebox{1.5ex}[0cm][0cm]{NKA$_{-1}$}
                                                        &     \raisebox{1.5ex}[0cm][0cm]{Broyden}
                                                      	&	0.1	&	0.01	&	0.001	&	EW1	&	EW2	\\\hline
30	&	74.36	&	129.10	&	85.37	&	121.09	&	142.05	&	147.25	&	147.48	&	127.32	\\
20	&	74.36	&	203.84	&	118.08	&	120.32	&	141.85	&	147.35	&	147.10	&	127.94	\\
10	&	75.29	&	2657.78	&	328.29	&	121.40	&	141.84	&	156.61	&	154.53	&	141.82	\\
5	&	75.81	&	--	&	182.20	&	121.03	&	169.41	&	174.05	&	164.40	&	158.09	\\\bottomrule[1pt]
\label{table:time-norefl}
\end{tabular}} 

\subfloat[Percentage of CPU time spent in the residual evaluation]
{
\begin{tabular}{ccccccccc}  \toprule[1pt]
	&		&		&		&		&		& JFNK $\eta$  &		&		\\\cline{5-9}
\raisebox{1.5ex}[0cm][0cm]{subspace}
        &	\raisebox{1.5ex}[0cm][0cm]{NKA}
                        &	\raisebox{1.5ex}[0cm][0cm]{NKA$_{-1}$}
                                                        &     \raisebox{1.5ex}[0cm][0cm]{Broyden}
                                                      	&	0.1	&	0.01	&	0.001	&	EW1	&	EW2	\\\hline
30	&	85.79	&	81.56	&	89.67	&	95.70	&	95.02	&	94.75	&	94.93	&	94.65	\\
20	&	86.30	&	82.78	&	89.47	&	95.71	&	95.03	&	94.76	&	94.93	&	94.66	\\
10	&	89.05	&	87.39	&	92.05	&	95.74	&	95.02	&	94.88	&	95.20	&	95.12	\\
5	&	91.77	&	--	&	94.36	&	95.74	&	95.70	&	95.75	&	95.80	&	95.79	\\\bottomrule[1pt]
\label{table:perc-norefl}
\end{tabular}} 
\end{table}

\begin{figure}[h!]
\centering
\caption{PARTISN unreflected cylinder: Scaled $L^2$-norm of the residual
as a function of number of sweeps for the various methods and subspace
sizes. Each of the methods is plotted on the same scale to simplify 
comparisons between panels. Note that, in panel (d),  points
plotted on the lines indicate when a JFNK iteration starts (JFNK requires
multiple sweeps per iteration). In panels (a), (b), and (c) we plot one point
per two iterations of the method. In panel (e) the convention for iterations
per plotted points is the same as in panels (a) through (d).}  
\begin{tabular}{cc}
\subfloat[NKA and FPI]{
\resizebox{80mm}{!}{
\begingroup%
\makeatletter%
\newcommand{\GNUPLOTspecial}{%
  \@sanitize\catcode`\%=14\relax\special}%
\setlength{\unitlength}{0.0500bp}%
\begin{picture}(7200,5040)(0,0)%
  \put(5936,4036){\makebox(0,0)[r]{\strut{}NKA(20)}}%
  \put(5936,4236){\makebox(0,0)[r]{\strut{}NKA(10)}}%
  \put(5936,4436){\makebox(0,0)[r]{\strut{}NKA(5)}}%
  \put(5936,4636){\makebox(0,0)[r]{\strut{}FPI}}%
  \put(4089,140){\makebox(0,0){\strut{}Sweeps}}%
  \put(160,2719){%
\rotatebox{-270}{%
  \makebox(0,0){\strut{}$\| {F} \|_{2,s}$}%
}}%
  \put(6619,440){\makebox(0,0){\strut{} 120}}%
  \put(5739,440){\makebox(0,0){\strut{} 100}}%
  \put(4859,440){\makebox(0,0){\strut{} 80}}%
  \put(3980,440){\makebox(0,0){\strut{} 60}}%
  \put(3100,440){\makebox(0,0){\strut{} 40}}%
  \put(2220,440){\makebox(0,0){\strut{} 20}}%
  \put(1340,440){\makebox(0,0){\strut{} 0}}%
  \put(1220,4799){\makebox(0,0)[r]{\strut{} 1}}%
  \put(1220,3967){\makebox(0,0)[r]{\strut{} 0.01}}%
  \put(1220,3135){\makebox(0,0)[r]{\strut{} 0.0001}}%
  \put(1220,2304){\makebox(0,0)[r]{\strut{} 1e-06}}%
  \put(1220,1472){\makebox(0,0)[r]{\strut{} 1e-08}}%
  \put(1220,640){\makebox(0,0)[r]{\strut{} 1e-10}}%
\includegraphics{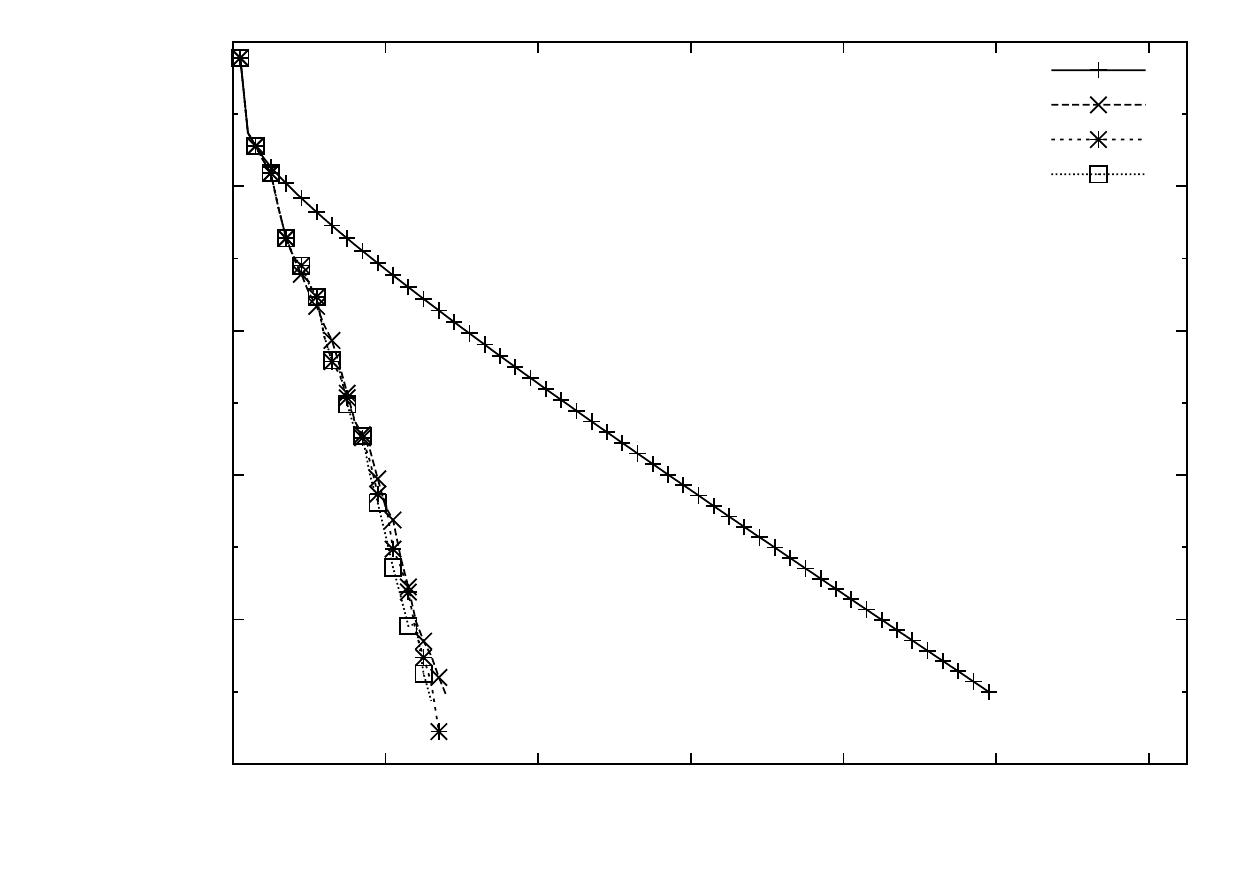}%
\end{picture}%
\endgroup

}
 \label{fig:no_refl_nka}}
&
\subfloat[NKA$_{-1}$]{
 \resizebox{80mm}{!}{
\begingroup%
\makeatletter%
\newcommand{\GNUPLOTspecial}{%
  \@sanitize\catcode`\%=14\relax\special}%
\setlength{\unitlength}{0.0500bp}%
\begin{picture}(7200,5040)(0,0)%
  \put(5936,4036){\makebox(0,0)[r]{\strut{}NKA$_{-1}$(30)}}%
  \put(5936,4236){\makebox(0,0)[r]{\strut{}NKA$_{-1}$(20)}}%
  \put(5936,4436){\makebox(0,0)[r]{\strut{}NKA$_{-1}$(10)}}%
  \put(5936,4636){\makebox(0,0)[r]{\strut{}NKA$_{-1}$(5)}}%
  \put(4089,140){\makebox(0,0){\strut{}Sweeps}}%
  \put(160,2719){%
\rotatebox{-270}{%
  \makebox(0,0){\strut{}$\| {F} \|_{2,s}$}%
}}%
  \put(6619,440){\makebox(0,0){\strut{} 120}}%
  \put(5739,440){\makebox(0,0){\strut{} 100}}%
  \put(4859,440){\makebox(0,0){\strut{} 80}}%
  \put(3980,440){\makebox(0,0){\strut{} 60}}%
  \put(3100,440){\makebox(0,0){\strut{} 40}}%
  \put(2220,440){\makebox(0,0){\strut{} 20}}%
  \put(1340,440){\makebox(0,0){\strut{} 0}}%
  \put(1220,4799){\makebox(0,0)[r]{\strut{} 1}}%
  \put(1220,3967){\makebox(0,0)[r]{\strut{} 0.01}}%
  \put(1220,3135){\makebox(0,0)[r]{\strut{} 0.0001}}%
  \put(1220,2304){\makebox(0,0)[r]{\strut{} 1e-06}}%
  \put(1220,1472){\makebox(0,0)[r]{\strut{} 1e-08}}%
  \put(1220,640){\makebox(0,0)[r]{\strut{} 1e-10}}%
\includegraphics{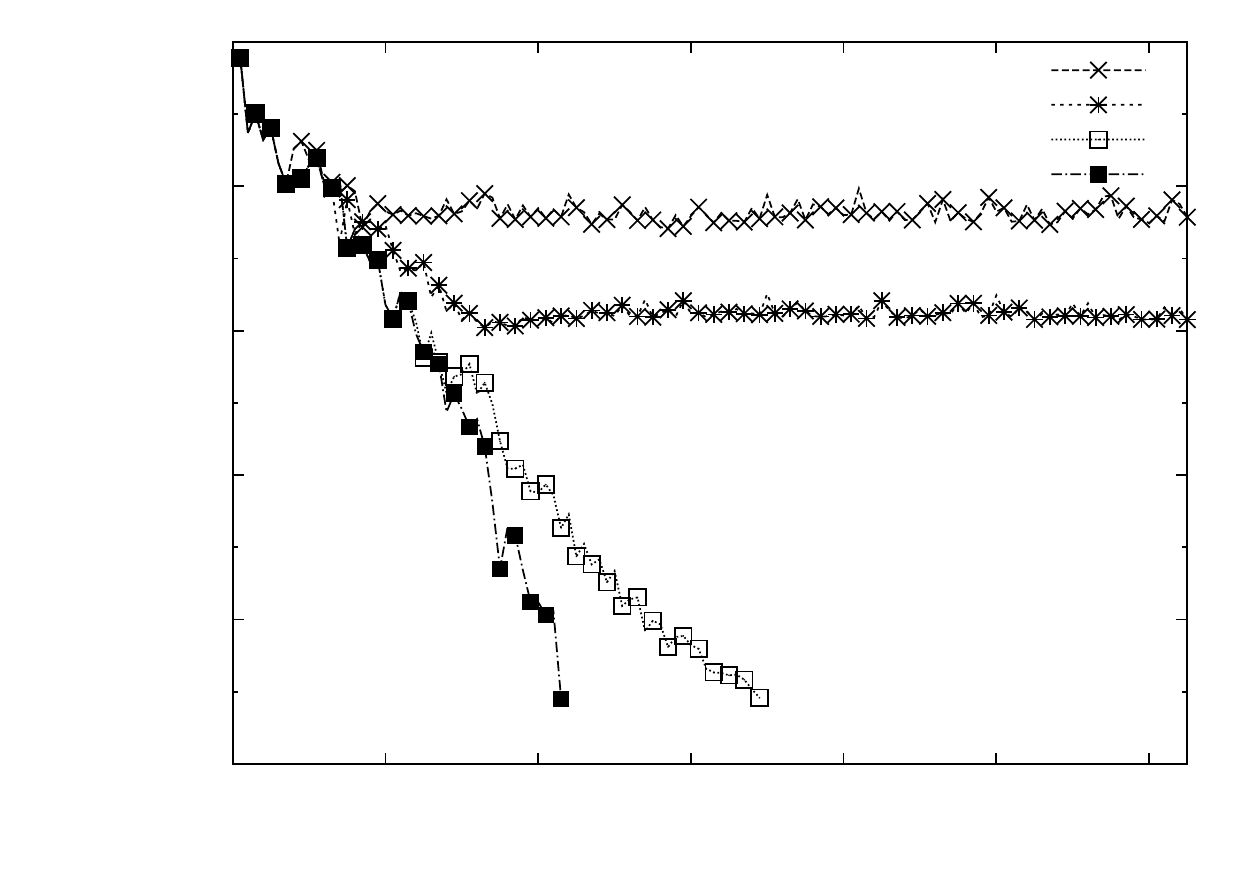}%
\end{picture}%
\endgroup

}
 \label{fig:no_refl_AM}}
\\
\subfloat[Broyden]{
 \resizebox{80mm}{!}{
\begingroup%
\makeatletter%
\newcommand{\GNUPLOTspecial}{%
  \@sanitize\catcode`\%=14\relax\special}%
\setlength{\unitlength}{0.0500bp}%
\begin{picture}(7200,5040)(0,0)%
  \put(5936,4036){\makebox(0,0)[r]{\strut{}Broyden(30)}}%
  \put(5936,4236){\makebox(0,0)[r]{\strut{}Broyden(20)}}%
  \put(5936,4436){\makebox(0,0)[r]{\strut{}Broyden(10)}}%
  \put(5936,4636){\makebox(0,0)[r]{\strut{}Broyden(5)}}%
  \put(4089,140){\makebox(0,0){\strut{}Sweeps}}%
  \put(160,2719){%
\rotatebox{-270}{%
  \makebox(0,0){\strut{}$\| {F} \|_{2,s}$}%
}}%
  \put(6619,440){\makebox(0,0){\strut{} 120}}%
  \put(5739,440){\makebox(0,0){\strut{} 100}}%
  \put(4859,440){\makebox(0,0){\strut{} 80}}%
  \put(3980,440){\makebox(0,0){\strut{} 60}}%
  \put(3100,440){\makebox(0,0){\strut{} 40}}%
  \put(2220,440){\makebox(0,0){\strut{} 20}}%
  \put(1340,440){\makebox(0,0){\strut{} 0}}%
  \put(1220,4799){\makebox(0,0)[r]{\strut{} 1}}%
  \put(1220,3967){\makebox(0,0)[r]{\strut{} 0.01}}%
  \put(1220,3135){\makebox(0,0)[r]{\strut{} 0.0001}}%
  \put(1220,2304){\makebox(0,0)[r]{\strut{} 1e-06}}%
  \put(1220,1472){\makebox(0,0)[r]{\strut{} 1e-08}}%
  \put(1220,640){\makebox(0,0)[r]{\strut{} 1e-10}}%
\includegraphics{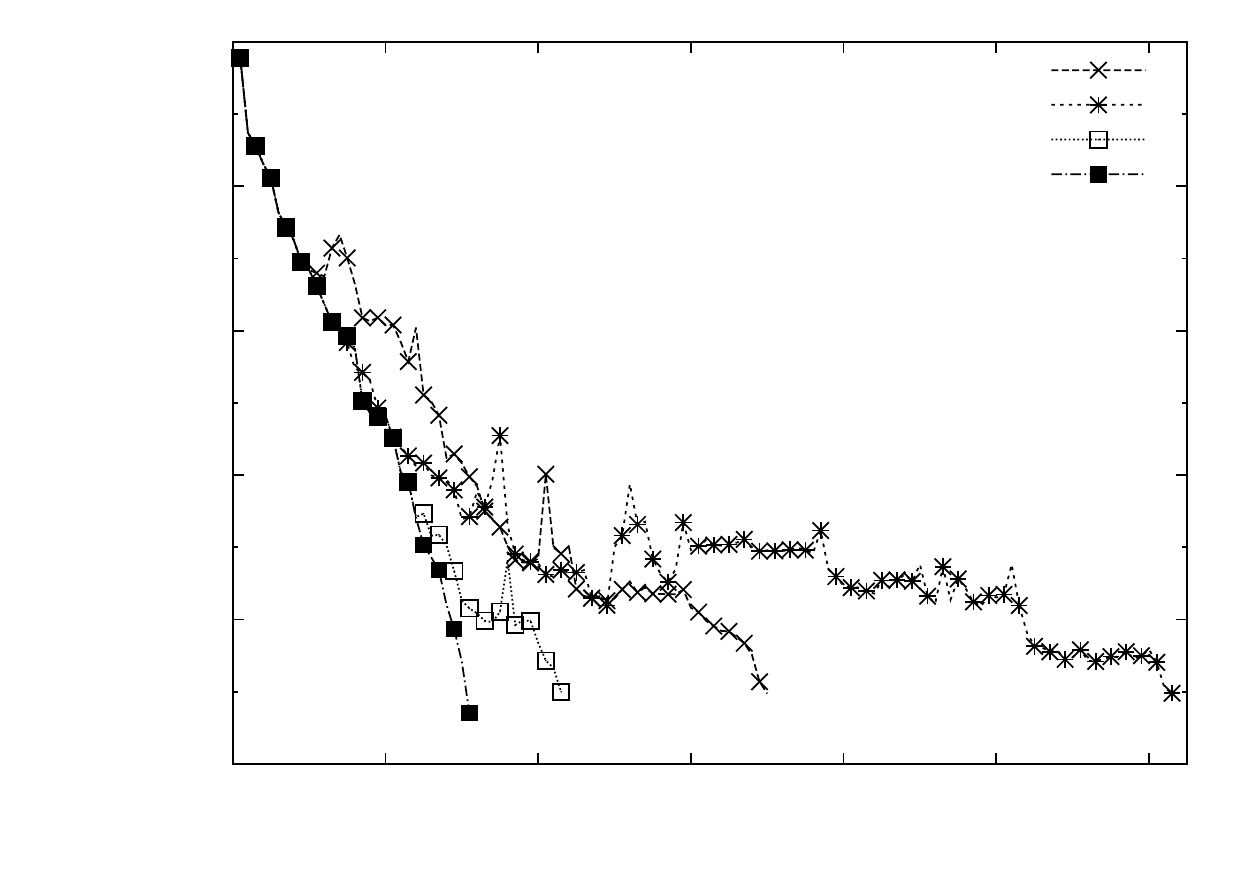}%
\end{picture}%
\endgroup

}
 \label{fig:no_refl_broy}}
&
\subfloat[JFNK]{
 \resizebox{80mm}{!}{
\begingroup%
\makeatletter%
\newcommand{\GNUPLOTspecial}{%
  \@sanitize\catcode`\%=14\relax\special}%
\setlength{\unitlength}{0.0500bp}%
\begin{picture}(7200,5040)(0,0)%
  \put(5936,4036){\makebox(0,0)[r]{\strut{}GMRES(10), $\eta_{EW2}$}}%
  \put(5936,4236){\makebox(0,0)[r]{\strut{}GMRES(5), $\eta_{EW2}$}}%
  \put(5936,4436){\makebox(0,0)[r]{\strut{}GMRES(10), $\eta=0.1$}}%
  \put(5936,4636){\makebox(0,0)[r]{\strut{}GMRES(5), $\eta=0.1$}}%
  \put(4089,140){\makebox(0,0){\strut{}Sweeps}}%
  \put(160,2719){%
\rotatebox{-270}{%
  \makebox(0,0){\strut{}$\| {F} \|_{2,s}$}%
}}%
  \put(6619,440){\makebox(0,0){\strut{} 120}}%
  \put(5739,440){\makebox(0,0){\strut{} 100}}%
  \put(4859,440){\makebox(0,0){\strut{} 80}}%
  \put(3980,440){\makebox(0,0){\strut{} 60}}%
  \put(3100,440){\makebox(0,0){\strut{} 40}}%
  \put(2220,440){\makebox(0,0){\strut{} 20}}%
  \put(1340,440){\makebox(0,0){\strut{} 0}}%
  \put(1220,4799){\makebox(0,0)[r]{\strut{} 1}}%
  \put(1220,3967){\makebox(0,0)[r]{\strut{} 0.01}}%
  \put(1220,3135){\makebox(0,0)[r]{\strut{} 0.0001}}%
  \put(1220,2304){\makebox(0,0)[r]{\strut{} 1e-06}}%
  \put(1220,1472){\makebox(0,0)[r]{\strut{} 1e-08}}%
  \put(1220,640){\makebox(0,0)[r]{\strut{} 1e-10}}%
\includegraphics{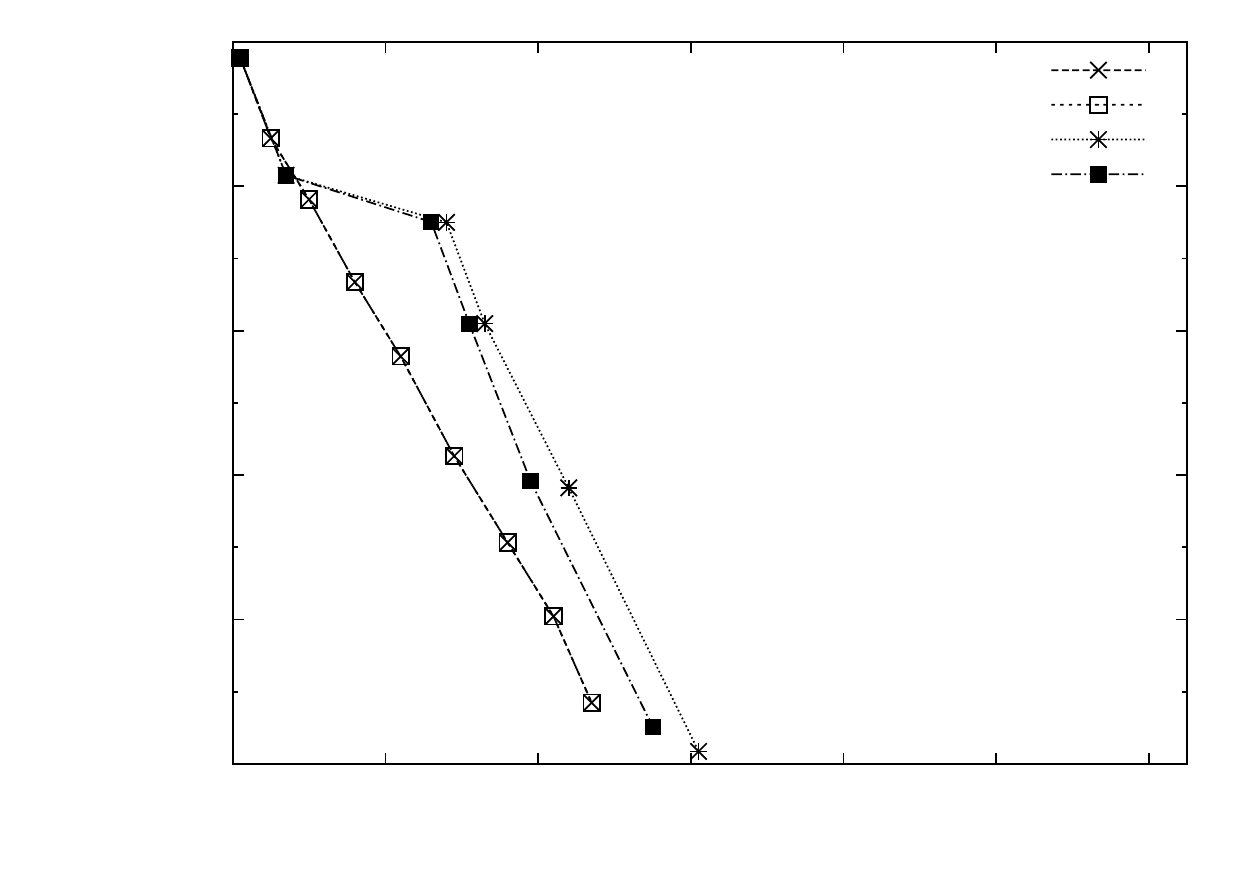}%
\end{picture}%
\endgroup

}
 \label{fig:no_refl_jfnk}}
\\
\multicolumn{2}{c}{
\subfloat[Comparison of the best results for each method]{
 \resizebox{80mm}{!}{
\begingroup%
\makeatletter%
\newcommand{\GNUPLOTspecial}{%
  \@sanitize\catcode`\%=14\relax\special}%
\setlength{\unitlength}{0.0500bp}%
\begin{picture}(7200,5040)(0,0)%
  \put(5936,3836){\makebox(0,0)[r]{\strut{}JFNK (GMRES(5), $\eta=0.1$)}}%
  \put(5936,4036){\makebox(0,0)[r]{\strut{}Broyden(30)}}%
  \put(5936,4236){\makebox(0,0)[r]{\strut{}NKA$_{-1}$(30)}}%
  \put(5936,4436){\makebox(0,0)[r]{\strut{}NKA(20)}}%
  \put(5936,4636){\makebox(0,0)[r]{\strut{}FPI}}%
  \put(4089,140){\makebox(0,0){\strut{}Sweeps}}%
  \put(160,2719){%
\rotatebox{-270}{%
  \makebox(0,0){\strut{}$\| {F} \|_{2,s}$}%
}}%
  \put(6619,440){\makebox(0,0){\strut{} 120}}%
  \put(5739,440){\makebox(0,0){\strut{} 100}}%
  \put(4859,440){\makebox(0,0){\strut{} 80}}%
  \put(3980,440){\makebox(0,0){\strut{} 60}}%
  \put(3100,440){\makebox(0,0){\strut{} 40}}%
  \put(2220,440){\makebox(0,0){\strut{} 20}}%
  \put(1340,440){\makebox(0,0){\strut{} 0}}%
  \put(1220,4799){\makebox(0,0)[r]{\strut{} 1}}%
  \put(1220,3967){\makebox(0,0)[r]{\strut{} 0.01}}%
  \put(1220,3135){\makebox(0,0)[r]{\strut{} 0.0001}}%
  \put(1220,2304){\makebox(0,0)[r]{\strut{} 1e-06}}%
  \put(1220,1472){\makebox(0,0)[r]{\strut{} 1e-08}}%
  \put(1220,640){\makebox(0,0)[r]{\strut{} 1e-10}}%
\includegraphics{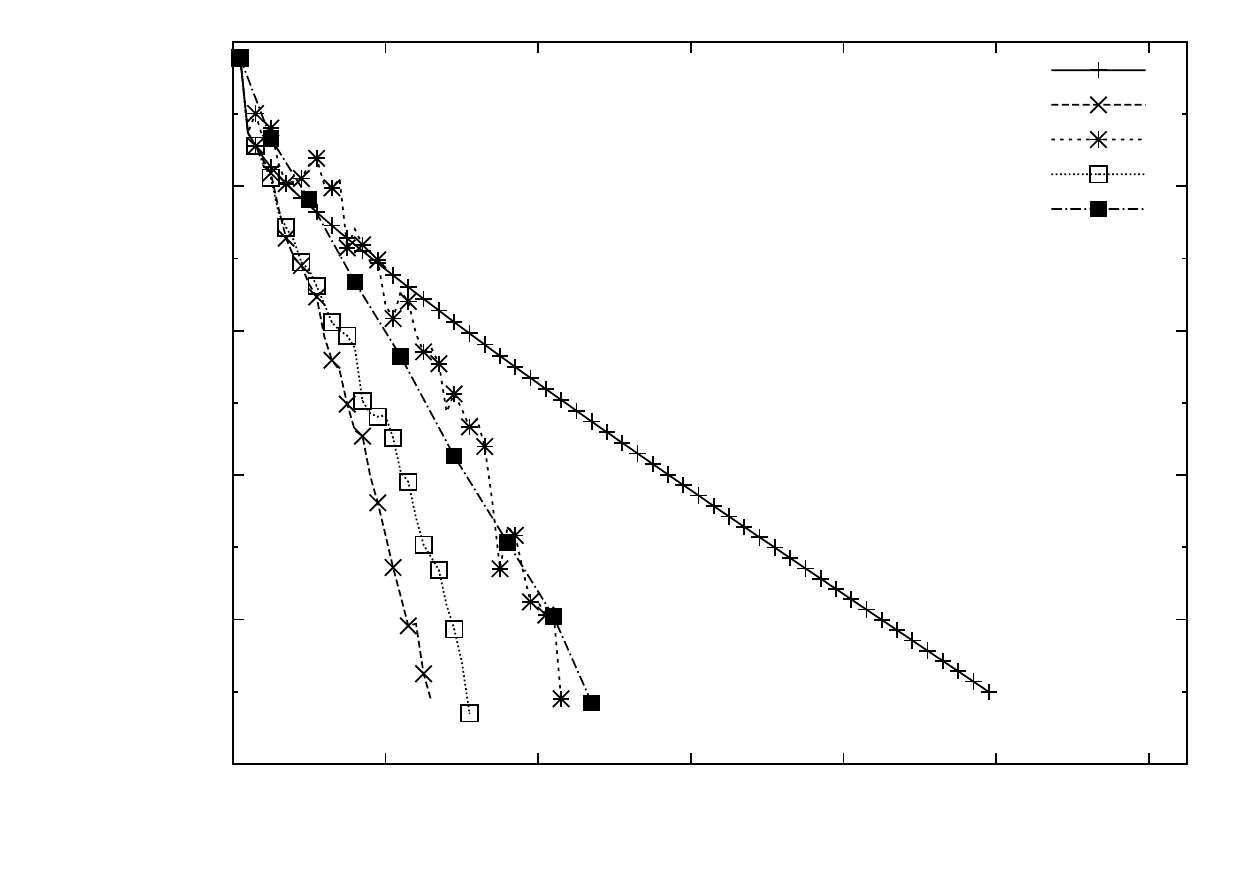}%
\end{picture}%
\endgroup

}
 \label{fig:no_refl_comp}}}
\\
\end{tabular}
\label{fig:no_refl}
\end{figure}

\clearpage
\paragraph{Capsaicin results\label{section:cap-no_refl}}

The different spatial and angular discretization methods used in
Capsaicin alter the numerical properties of the operators associated
with the $k$-eigenvalue problem, affecting the convergence rates and
curves of the various solution methods.  While the methods in
Capsaicin have better accuracy than those in PARTISN, they are also
more costly due to the additional degrees of freedom associated with
the DFEM spatial discretization and the use of starting directions in
the angular discretization.  This is evident in the relatively long
run times associated with the Capsaicin calculations, even though they
were computed with a parallel decomposition in energy, in addition to
the parallel mesh decomposition, on 24 processors.  Because of other
differences in implementation between PARTISN and Capsaicin, a minimum
of 99\% of the computation time is spent in the evaluation of the
residual (for all solution methods) and therefore is not reported in
the tables.

Tables~\ref{table:Cap-JFNK-norefl}-\ref{table:Cap-time-norefl} show
the number of JFNK and total inner GMRES iterations, the number of
sweeps and the CPU time that was spent computing the residual,
respectively.  As can been seen in Tables~\ref{table:Cap-JFNK-norefl}
and \ref{table:Cap-sweep-norefl}, as for PARTISN the GMRES subspace
size does not affect the number of Newton iterations for the most part
and as the forcing parameter decreases more sweeps are
required. However, the subspace size does affect the total number of
GMRES iterations, hence the sweep count, for the smaller and varying
forcing parameters as was not seen with PARTISN.  Once again, for NKA,
NKA$_{-1}$ and Broyden, decreasing the subspace size leads to an
overall increase in sweep count and Broyden(10) requires many more
iterations than Broyden(5) in contradiction to the general
trend. NKA$_{-1}$ failed to converge for subspace sizes of 5 and 10
where NKA$_{-1}$ in PARTISN only failed for a subspace size of 5. Once again, NKA requires the fewest
sweeps and displays a predictable trend of decreasing sweep count with
increasing subspace size. Table~\ref{table:Cap-time-norefl}
demonstates that once again the runtimes are consistent with the sweep
count. NKA is more than 6 times faster than FPI and between 1.8 and
3.6 times as fast as JFNK for comparable subspace sizes, while
NKA$_{-1}$ is comparable to JFNK \textit{when it converges}. The
behavior of Broyden is, once again, chaotic, at times almost as
efficient as NKA, but at worst slower than JFNK, although in this case
it is always more efficient than FPI.

The plots in Fig.~\ref{fig:no_refl_cap} show the scaled $L^2$-norm of
the residual as a function of the number of sweeps. The behavior is
similar to that seen in Fig.~\ref{fig:no_refl}, FPI experiences an
abrupt change in slope between $\|F\|_{2,s}=10^{-8}$ and $10^{-9}$
that is not seen in the PARTISN results and leads to a much larger
iteration count. It is also interesting to note that in the NKA$_{-1}$
results, shown in Fig.~\ref{fig:no_refl_And_cap}, subspaces of 5 and
10 seem to stagnate for large numbers of iterations instead of
consistently dropping to zero as they do for NKA in
Fig.~\ref{fig:no_refl_nka_cap}.

\begin{table}[h!]
\centering

\caption{Capsaicin unreflected cylinder: Number of outer and inner
JFNK iterations, sweeps and run-time spent computing the residual to an
accuracy of $\|F\|_{2,s}\leq 10^{-9}$ for the various methods.}

\subfloat[Outer JFNK/total inner GMRES iterations]
{
\begin{tabular}{cccccc}  \toprule[1pt]
  & \multicolumn{5}{c}{$\eta$} \\ \cline{2-6}
\raisebox{1.5ex}[0cm][0cm]{subspace}
        &	0.1		&	0.01		&	0.001		&	EW1		&	EW2		\\\hline
30       &  8 (35)  &  6 (47)  &  5 (49)  &  5 (37)  &  5 (39)  \\
20       &  8 (35)  &  6 (47)  &  5 (49)  &  5 (37)  &  5 (39)  \\
10       &  8 (35)  &  6 (60)  &  5 (62)  &  5 (51)  &  5 (52)  \\
5        &  9 (62)  &  6 (66)  &  5 (68)  &  6 (85)  &  5 (53)  \\ \bottomrule[1pt]
\label{table:Cap-JFNK-norefl}
\end{tabular}}

\subfloat[Number of sweeps (FPI converged in 202)]
{
\begin{tabular}{ccccccccc}  \toprule[1pt]
	&		&		&		&		&		& JFNK $\eta$  &		&		\\\cline{5-9}
\raisebox{1.5ex}[0cm][0cm]{subspace}
        &	\raisebox{1.5ex}[0cm][0cm]{NKA}
                        &	\raisebox{1.5ex}[0cm][0cm]{NKA$_{-1}$}
                                                        &     \raisebox{1.5ex}[0cm][0cm]{Broyden}
                                                      	&	0.1	&	0.01	&	0.001	&	EW1	&	EW2	\\\hline
30       & 31 & 52 & 39  &   53   &    61  &  61  &  53  &  51  \\
20       & 31 & 55 & 53  &   53   &    61  &  61  &  53  &  51  \\
10       & 32 & -- & 127 &   53   &    76  &  76  &  70  &  66  \\
5        & 34 & -- & 73  &   88   &    89  &  90  & 116  &  72  \\ \bottomrule[1pt]
\label{table:Cap-sweep-norefl}
\end{tabular}}

\subfloat[CPU time (ks) (FPI converged in 12.71 ks)]
{
\begin{tabular}{ccccccccc}  \toprule[1pt]
	&		&		&		&		&		& JFNK $\eta$  &		&		\\\cline{5-9}
\raisebox{1.5ex}[0cm][0cm]{subspace}
        &	\raisebox{1.5ex}[0cm][0cm]{NKA}
                        &	\raisebox{1.5ex}[0cm][0cm]{NKA$_{-1}$}
                                                        &     \raisebox{1.5ex}[0cm][0cm]{Broyden}
                                                      	&	0.1	&	0.01	&	0.001	&	EW1	&	EW2	\\\hline
30       & 1.9 & 3.3 & 2.4  &   3.4   &    4.2  &  3.9  &  3.4  &  3.4  \\
20       & 1.9 & 3.6 & 3.4  &   3.4   &    4.1  &  4.0  &  3.5  &  3.3  \\
10       & 2.1 & --  & 8.2  &   3.5   &    5.0  &  5.1  &  4.7  &  4.3  \\
5        & 2.1 & --  & 4.5  &   5.7   &    5.9  &  5.9  &  7.6  &  4.7  \\ \bottomrule[1pt]
\label{table:Cap-time-norefl}
\end{tabular}}
\end{table}

\begin{figure}[h!]
\centering
\caption{Capsaicin unreflected cylinder: Scaled $L^2$-norm of the
residual as a function of number of sweeps for the various methods and
subspace sizes. Each of the methods is plotted on the same scale to simplify 
comparisons between panels. Note that, in panel (d),  points
plotted on the lines indicate when a JFNK iteration starts (JFNK requires
multiple sweeps per iteration). In panels (a), (b), and (c) we plot one point
per two iterations of the method. In panel (e) the convention for iterations
per plotted points is the same as in panels (a) through (d).}
\begin{tabular}{cc}
\subfloat[NKA and FPI]{ 
 \resizebox{80mm}{!}{
\begingroup%
\makeatletter%
\newcommand{\GNUPLOTspecial}{%
  \@sanitize\catcode`\%=14\relax\special}%
\setlength{\unitlength}{0.0500bp}%
\begin{picture}(7200,5040)(0,0)%
  \put(5936,3836){\makebox(0,0)[r]{\strut{}NKA(30)}}%
  \put(5936,4036){\makebox(0,0)[r]{\strut{}NKA(20)}}%
  \put(5936,4236){\makebox(0,0)[r]{\strut{}NKA(10)}}%
  \put(5936,4436){\makebox(0,0)[r]{\strut{}NKA(5)}}%
  \put(5936,4636){\makebox(0,0)[r]{\strut{}FPI}}%
  \put(4089,140){\makebox(0,0){\strut{}Sweeps}}%
  \put(160,2719){%
\rotatebox{-270}{%
  \makebox(0,0){\strut{}$\| {F} \|_{2,s}$}%
}}%
  \put(6619,440){\makebox(0,0){\strut{} 120}}%
  \put(5739,440){\makebox(0,0){\strut{} 100}}%
  \put(4859,440){\makebox(0,0){\strut{} 80}}%
  \put(3980,440){\makebox(0,0){\strut{} 60}}%
  \put(3100,440){\makebox(0,0){\strut{} 40}}%
  \put(2220,440){\makebox(0,0){\strut{} 20}}%
  \put(1340,440){\makebox(0,0){\strut{} 0}}%
  \put(1220,4799){\makebox(0,0)[r]{\strut{} 1}}%
  \put(1220,3967){\makebox(0,0)[r]{\strut{} 0.01}}%
  \put(1220,3135){\makebox(0,0)[r]{\strut{} 0.0001}}%
  \put(1220,2304){\makebox(0,0)[r]{\strut{} 1e-06}}%
  \put(1220,1472){\makebox(0,0)[r]{\strut{} 1e-08}}%
  \put(1220,640){\makebox(0,0)[r]{\strut{} 1e-10}}%
\includegraphics{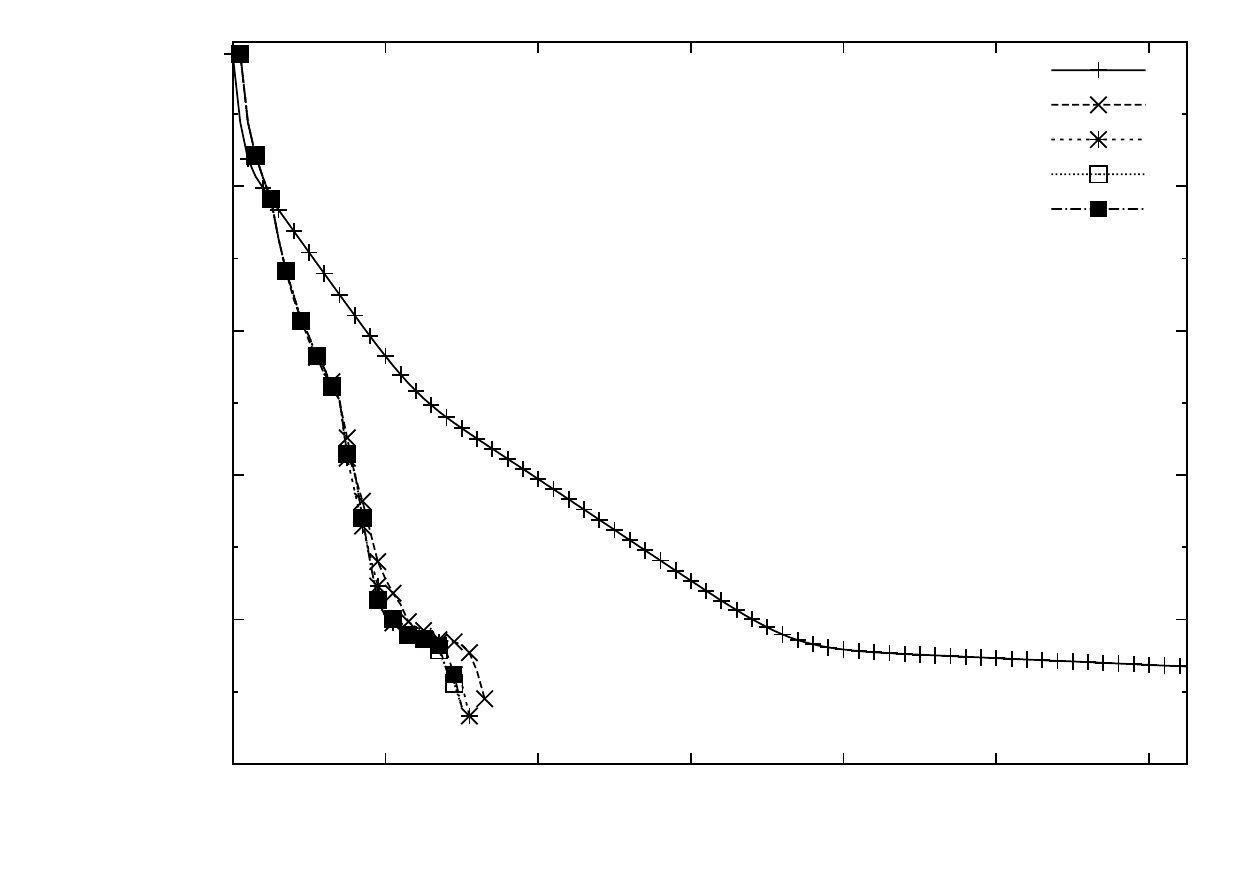}%
\end{picture}%
\endgroup

}
 \label{fig:no_refl_nka_cap}}
&
\subfloat[NKA$_{-1}$]{
 \resizebox{80mm}{!}{
\begingroup%
\makeatletter%
\newcommand{\GNUPLOTspecial}{%
  \@sanitize\catcode`\%=14\relax\special}%
\setlength{\unitlength}{0.0500bp}%
\begin{picture}(7200,5040)(0,0)%
  \put(5936,4036){\makebox(0,0)[r]{\strut{}NKA$_{-1}$(30)}}%
  \put(5936,4236){\makebox(0,0)[r]{\strut{}NKA$_{-1}$(20)}}%
  \put(5936,4436){\makebox(0,0)[r]{\strut{}NKA$_{-1}$(10)}}%
  \put(5936,4636){\makebox(0,0)[r]{\strut{}NKA$_{-1}$(5)}}%
  \put(4089,140){\makebox(0,0){\strut{}Sweeps}}%
  \put(160,2719){%
\rotatebox{-270}{%
  \makebox(0,0){\strut{}$\| {F} \|_{2,s}$}%
}}%
  \put(6619,440){\makebox(0,0){\strut{} 120}}%
  \put(5739,440){\makebox(0,0){\strut{} 100}}%
  \put(4859,440){\makebox(0,0){\strut{} 80}}%
  \put(3980,440){\makebox(0,0){\strut{} 60}}%
  \put(3100,440){\makebox(0,0){\strut{} 40}}%
  \put(2220,440){\makebox(0,0){\strut{} 20}}%
  \put(1340,440){\makebox(0,0){\strut{} 0}}%
  \put(1220,4799){\makebox(0,0)[r]{\strut{} 1}}%
  \put(1220,3967){\makebox(0,0)[r]{\strut{} 0.01}}%
  \put(1220,3135){\makebox(0,0)[r]{\strut{} 0.0001}}%
  \put(1220,2304){\makebox(0,0)[r]{\strut{} 1e-06}}%
  \put(1220,1472){\makebox(0,0)[r]{\strut{} 1e-08}}%
  \put(1220,640){\makebox(0,0)[r]{\strut{} 1e-10}}%
\includegraphics{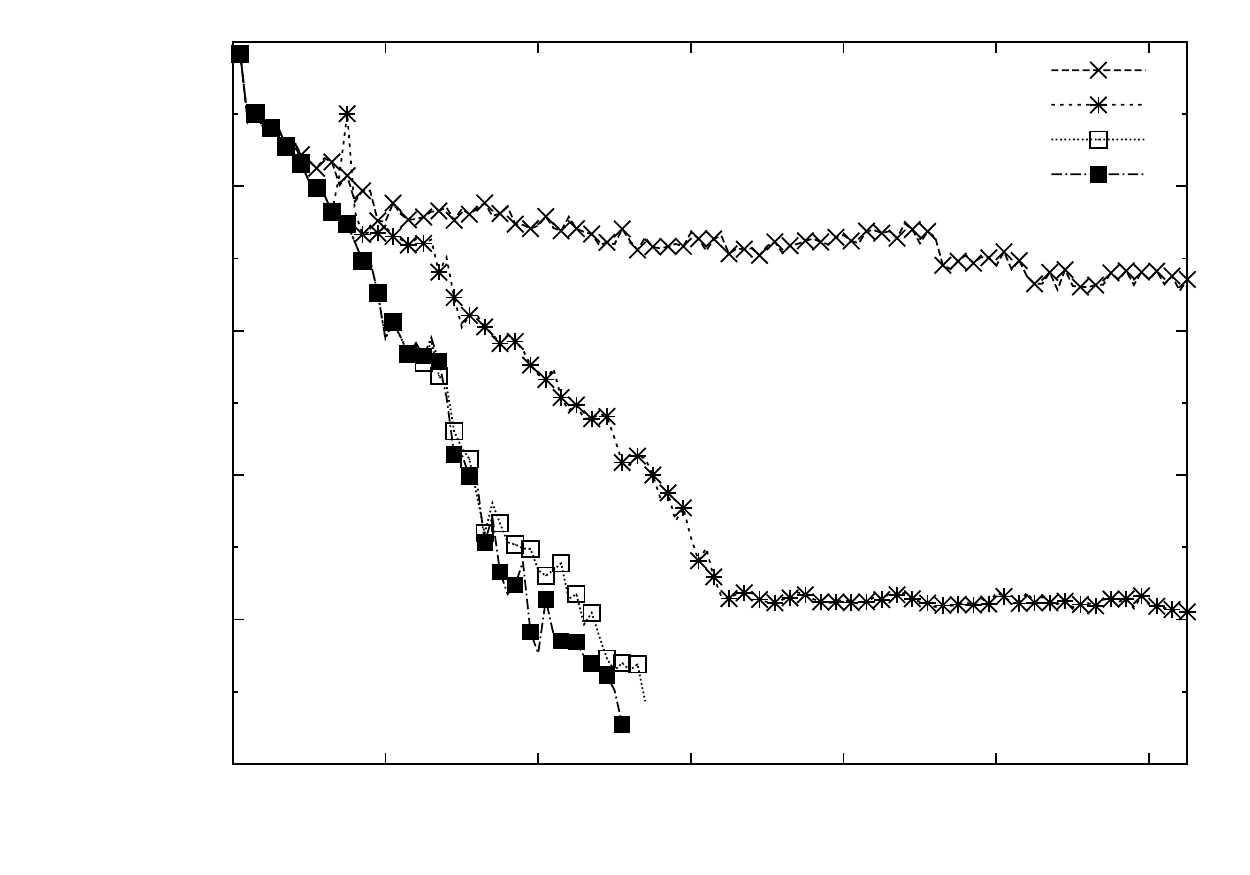}%
\end{picture}%
\endgroup

}
 \label{fig:no_refl_And_cap}}
\\
\subfloat[Broyden]{
 \resizebox{80mm}{!}{
\begingroup%
\makeatletter%
\newcommand{\GNUPLOTspecial}{%
  \@sanitize\catcode`\%=14\relax\special}%
\setlength{\unitlength}{0.0500bp}%
\begin{picture}(7200,5040)(0,0)%
  \put(5936,4036){\makebox(0,0)[r]{\strut{}Broyden(30)}}%
  \put(5936,4236){\makebox(0,0)[r]{\strut{}Broyden(20)}}%
  \put(5936,4436){\makebox(0,0)[r]{\strut{}Broyden(10)}}%
  \put(5936,4636){\makebox(0,0)[r]{\strut{}Broyden(5)}}%
  \put(4089,140){\makebox(0,0){\strut{}Sweeps}}%
  \put(160,2719){%
\rotatebox{-270}{%
  \makebox(0,0){\strut{}$\| {F} \|_{2,s}$}%
}}%
  \put(6619,440){\makebox(0,0){\strut{} 120}}%
  \put(5739,440){\makebox(0,0){\strut{} 100}}%
  \put(4859,440){\makebox(0,0){\strut{} 80}}%
  \put(3980,440){\makebox(0,0){\strut{} 60}}%
  \put(3100,440){\makebox(0,0){\strut{} 40}}%
  \put(2220,440){\makebox(0,0){\strut{} 20}}%
  \put(1340,440){\makebox(0,0){\strut{} 0}}%
  \put(1220,4799){\makebox(0,0)[r]{\strut{} 1}}%
  \put(1220,3967){\makebox(0,0)[r]{\strut{} 0.01}}%
  \put(1220,3135){\makebox(0,0)[r]{\strut{} 0.0001}}%
  \put(1220,2304){\makebox(0,0)[r]{\strut{} 1e-06}}%
  \put(1220,1472){\makebox(0,0)[r]{\strut{} 1e-08}}%
  \put(1220,640){\makebox(0,0)[r]{\strut{} 1e-10}}%
\includegraphics{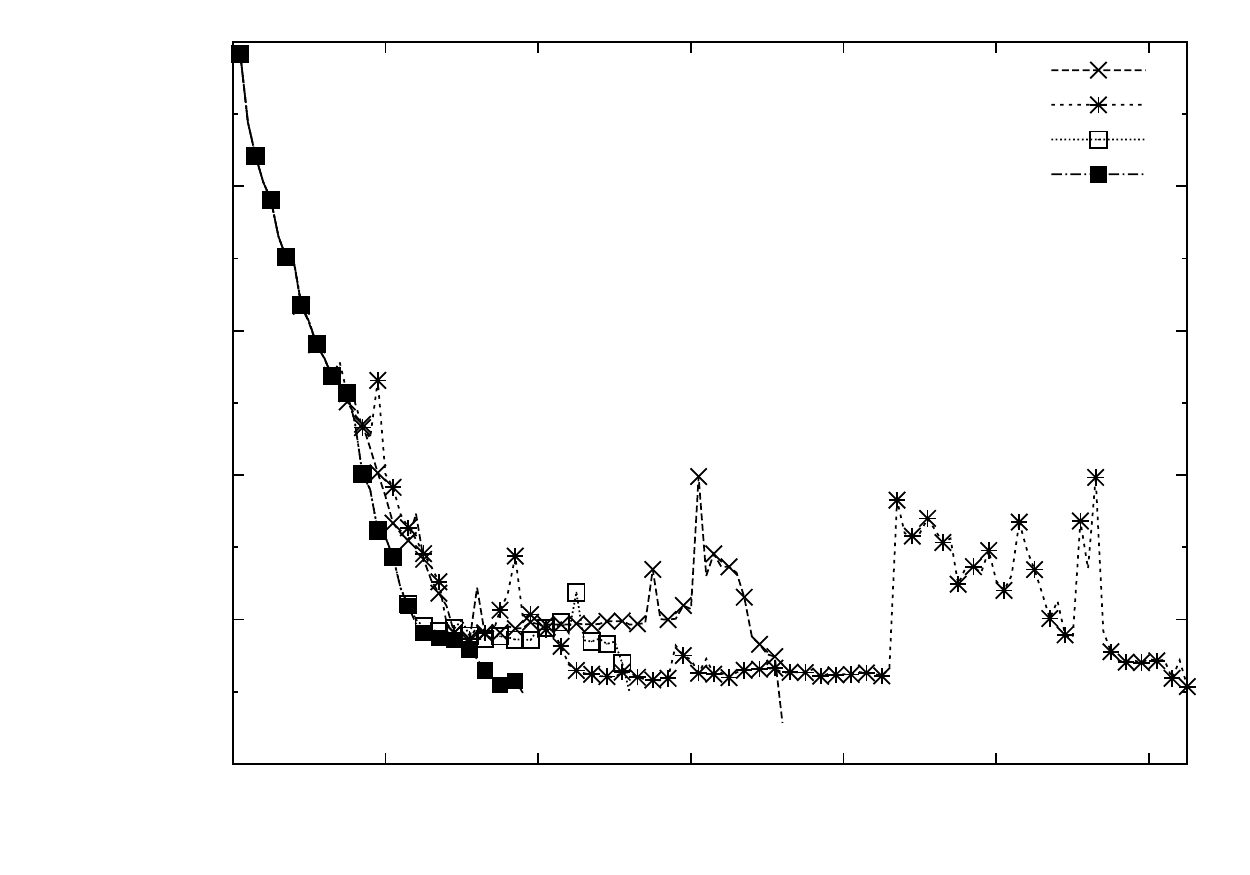}%
\end{picture}%
\endgroup

}
 \label{fig:no_refl_broy_cap}}
&
\subfloat[JFNK]{
 \resizebox{80mm}{!}{
\begingroup%
\makeatletter%
\newcommand{\GNUPLOTspecial}{%
  \@sanitize\catcode`\%=14\relax\special}%
\setlength{\unitlength}{0.0500bp}%
\begin{picture}(7200,5040)(0,0)%
  \put(5936,4036){\makebox(0,0)[r]{\strut{}GMRES(10), $\eta_{EW2}$}}%
  \put(5936,4236){\makebox(0,0)[r]{\strut{}GMRES(5), $\eta_{EW2}$}}%
  \put(5936,4436){\makebox(0,0)[r]{\strut{}GMRES(10), $\eta=0.1$}}%
  \put(5936,4636){\makebox(0,0)[r]{\strut{}GMRES(5), $\eta=0.1$}}%
  \put(4089,140){\makebox(0,0){\strut{}Sweeps}}%
  \put(160,2719){%
\rotatebox{-270}{%
  \makebox(0,0){\strut{}$\| {F} \|_{2,s}$}%
}}%
  \put(6619,440){\makebox(0,0){\strut{} 120}}%
  \put(5739,440){\makebox(0,0){\strut{} 100}}%
  \put(4859,440){\makebox(0,0){\strut{} 80}}%
  \put(3980,440){\makebox(0,0){\strut{} 60}}%
  \put(3100,440){\makebox(0,0){\strut{} 40}}%
  \put(2220,440){\makebox(0,0){\strut{} 20}}%
  \put(1340,440){\makebox(0,0){\strut{} 0}}%
  \put(1220,4799){\makebox(0,0)[r]{\strut{} 1}}%
  \put(1220,3967){\makebox(0,0)[r]{\strut{} 0.01}}%
  \put(1220,3135){\makebox(0,0)[r]{\strut{} 0.0001}}%
  \put(1220,2304){\makebox(0,0)[r]{\strut{} 1e-06}}%
  \put(1220,1472){\makebox(0,0)[r]{\strut{} 1e-08}}%
  \put(1220,640){\makebox(0,0)[r]{\strut{} 1e-10}}%
\includegraphics{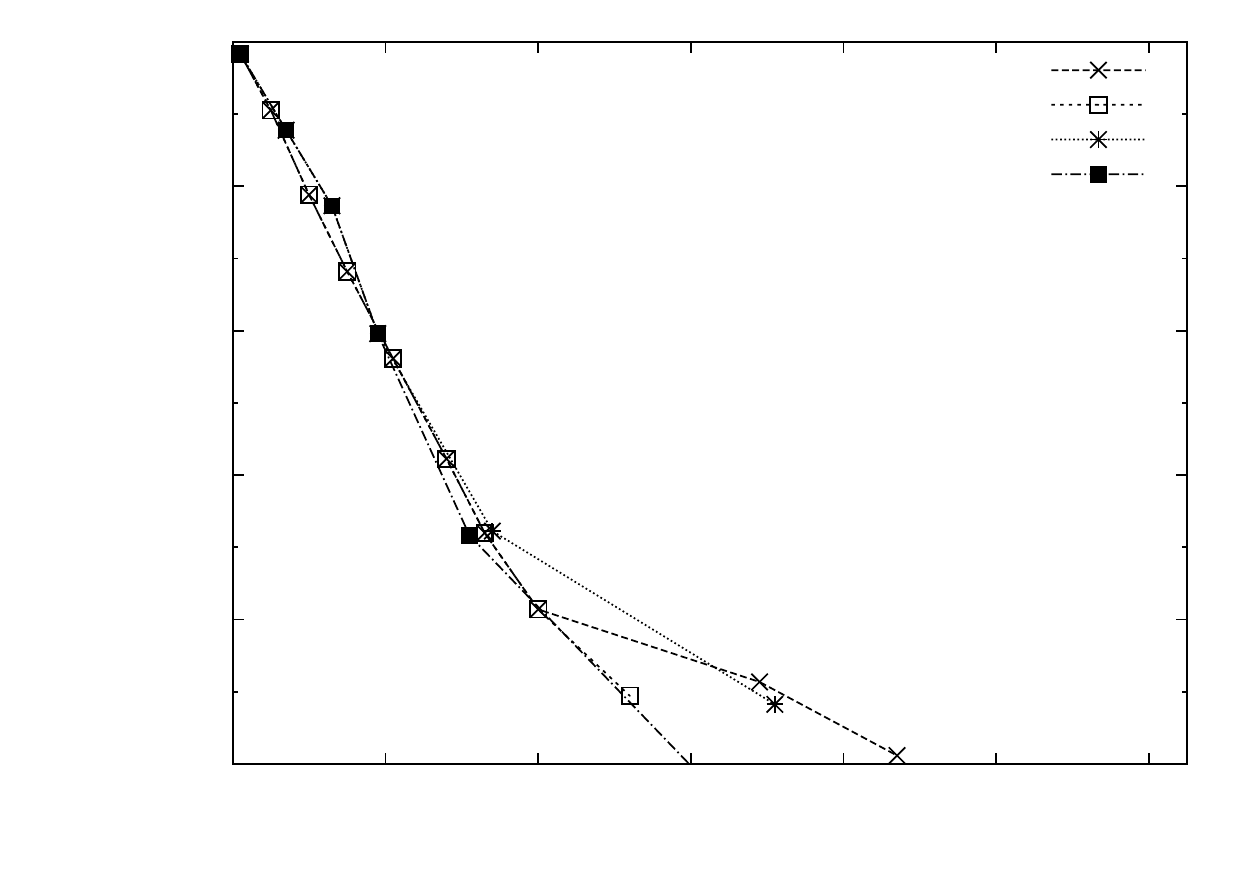}%
\end{picture}%
\endgroup

}
 \label{fig:no_refl_jfnk_cap}}
\\
\multicolumn{2}{c}{\subfloat[Comparison of the best results for each method]{
 \resizebox{80mm}{!}{
\begingroup%
\makeatletter%
\newcommand{\GNUPLOTspecial}{%
  \@sanitize\catcode`\%=14\relax\special}%
\setlength{\unitlength}{0.0500bp}%
\begin{picture}(7200,5040)(0,0)%
  \put(5936,3836){\makebox(0,0)[r]{\strut{}JFNK (GMRES(10), $\eta=0.1$)}}%
  \put(5936,4036){\makebox(0,0)[r]{\strut{}Broyden(30)}}%
  \put(5936,4236){\makebox(0,0)[r]{\strut{}NKA$_{-1}$(30)}}%
  \put(5936,4436){\makebox(0,0)[r]{\strut{}NKA(10)}}%
  \put(5936,4636){\makebox(0,0)[r]{\strut{}FPI}}%
  \put(4089,140){\makebox(0,0){\strut{}Sweeps}}%
  \put(160,2719){%
\rotatebox{-270}{%
  \makebox(0,0){\strut{}$\| {F} \|_{2,s}$}%
}}%
  \put(6619,440){\makebox(0,0){\strut{} 120}}%
  \put(5739,440){\makebox(0,0){\strut{} 100}}%
  \put(4859,440){\makebox(0,0){\strut{} 80}}%
  \put(3980,440){\makebox(0,0){\strut{} 60}}%
  \put(3100,440){\makebox(0,0){\strut{} 40}}%
  \put(2220,440){\makebox(0,0){\strut{} 20}}%
  \put(1340,440){\makebox(0,0){\strut{} 0}}%
  \put(1220,4799){\makebox(0,0)[r]{\strut{} 1}}%
  \put(1220,3967){\makebox(0,0)[r]{\strut{} 0.01}}%
  \put(1220,3135){\makebox(0,0)[r]{\strut{} 0.0001}}%
  \put(1220,2304){\makebox(0,0)[r]{\strut{} 1e-06}}%
  \put(1220,1472){\makebox(0,0)[r]{\strut{} 1e-08}}%
  \put(1220,640){\makebox(0,0)[r]{\strut{} 1e-10}}%
\includegraphics{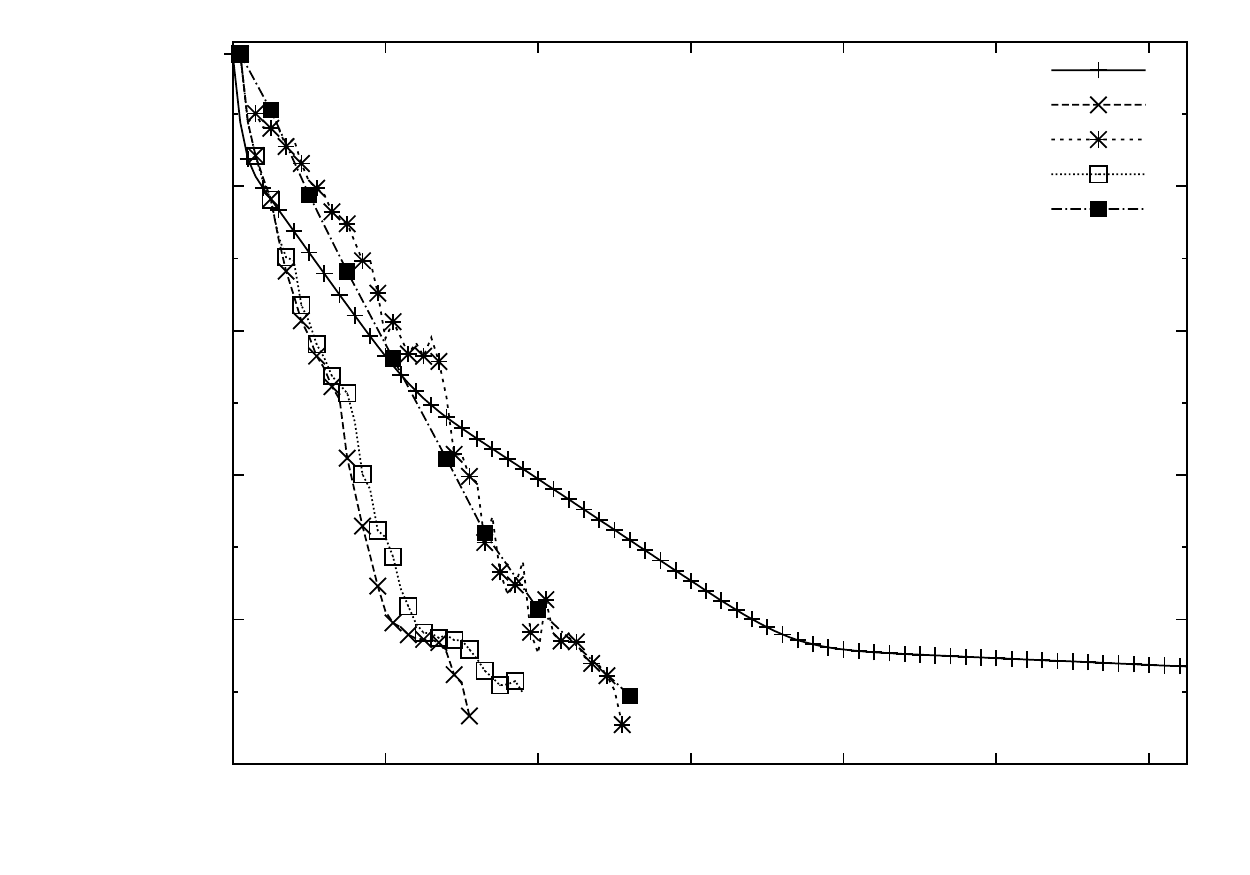}%
\end{picture}%
\endgroup

}
 \label{fig:no_refl_comp_cap}}}
\\
\end{tabular}
\label{fig:no_refl_cap}
\end{figure}

\clearpage
\subsubsection{Fully reflected cylinder}

\paragraph{PARTISN results}
Here, we duplicate the results shown in~\ref{section:PART-no_refl},
only this time, reflection is added to the top, bottom and outer edges
of the cylinder. Tables~\ref{table:JFNK-refl}-\ref{table:perc-refl}
show the number of JFNK and total inner GMRES iterations, the number
of sweeps, the CPU time and the percentage of that time that was spent
computing the residual, respectively. As can be seen in
Tables~\ref{table:JFNK-refl} and \ref{table:sweep-refl}, increasing
the subspace size has a non-negligible effect on Newton iteration
count, and a significant effect on GMRES iteration count and sweep
count. One additional parameter that comes into play for this problem
is the maximum number of inner GMRES iterations allowed. This
parameter was set to the NOX default of 30 for these computations and
in many cases the inner iteration did not converge to the specified
tolerance because it exceeded this limit. We note that, in our
experience, increasing this upper bound tends to degrade the
performance of JFNK, just as decreasing the forcing parameter often
degrades the performance. For NKA and Broyden, decreasing the subspace
size also leads to an increase in sweep/iteration count, quite
noticeably in this case. The only subspace size that converged for
NKA$_{-1}$ was 30, which requires more sweeps than any other iterative
method including FPI. For smaller subspace sizes, Broyden also fails
to converge. When comparing the runtimes shown in
Table~\ref{table:time-refl} for all of the iterative methods, we see
that JFNK with GMRES(5) is only slightly more efficient than FPI,
while JFNK with GMRES(30) is comparable to NKA(5). Broyden, when it
converges, is slower than NKA, but comparable to JFNK. The runtime for
NKA$_{-1}$ is an order of magnitude greater than any other
method. Table~\ref{table:perc-refl} shows that the percentage of time
spent in the residual is ranges from approximately 95\% for JFNK with
GMRES(5) to approximately 90\% for GMRES(30). The percentage is
smallest for NKA and NKA$_{-1}$, and slightly larger for Broyden. Once
again, we note that despite the fact that a larger percentage of time
is spent in the NKA solver than in JFNK or Broyden, it is still more
efficient in terms of CPU time.

The plots in Fig.~\ref{fig:refl} show the scaled $L^2$-norm of the
residual as a function of the number of
sweeps. Fig.~\ref{fig:refl_nka} shows that the behavior of NKA is
similar for all subspace sizes and much more efficient than FPI in all
Fig.~\ref{fig:refl_AM} shows that NKA$_{-1}$ behaves erratically for
subspaces of 5 and 10, and at 20 seems to stagnate before reaching the
specified convergence criterion. Even when it does converge for a
subspace size of 30, it is slower even than
FPI. Fig.~\ref{fig:refl_broy} shows the behavior of Broyden. As can be
seen, the norm fluctuates erratically and, although it appears to be
converging initially for subspaces of 5 and 10, it never achieves an
error norm below $10^{-8}$ and eventually diverges.  Broyden(30) and
Broyden(20), which converge to the desired norm, cannot compete with
NKA(20). Fig.~\ref{fig:refl_jfnk} shows some of the more efficient
JFNK results plotted at each Newton iteration for the current sweep
count. And finally, Fig.~\ref{fig:refl_comp} shows a comparison of
the most efficient results for each method. Clearly, NKA(20) is
significantly more efficient than the other iterative methods,
although we note that NKA(10) is comparable to JFNK with GMRES(20).

\begin{table}[h!]
\centering

\caption{PARTISN reflected cylinder: Number of outer and inner JFNK
iterations, sweeps, run-time and percentage of the run-time spent
computing the residual to an accuracy of $\|F\|_{2,s}\leq 10^{-9}$ for
the various methods.}
\subfloat[Outer JFNK/total inner GMRES iterations]
{
\begin{tabular}{cccccc}  \toprule[1pt]
  & \multicolumn{5}{c}{$\eta$} \\ \cline{2-6}
\raisebox{1.5ex}[0cm][0cm]{subspace}
	&	0.1		&	0.01		&	0.001		&	EW1		&	EW2	\\\hline
30	&	10	(177)	&	7	(167)	&	7	(192)	&	7	(165)	&	7	(175)	\\
20	&	10	(179)	&	8	(198)	&	7	(197)	&	8	(195)	&	7	(177)	\\
10	&	10	(186)	&	8	(202)	&	9	(257)	&	8	(196)	&	8	(212)	\\
5	&	12	(263)	&	10	(271)	&	10	(289)	&	10	(252)	&	10	(279)	\\\bottomrule[1pt]
\label{table:JFNK-refl}
\end{tabular}}

\subfloat[Number of sweeps (FPI converged in 389)]
{
\begin{tabular}{ccccccccc}  \toprule[1pt]
	&		&		&		&		&		& JFNK $\eta$  &		&		\\\cline{5-9}
\raisebox{1.5ex}[0cm][0cm]{subspace}
        &	\raisebox{1.5ex}[0cm][0cm]{NKA}
                        &	\raisebox{1.5ex}[0cm][0cm]{NKA$_{-1}$}
                                                        &     \raisebox{1.5ex}[0cm][0cm]{Broyden}
                                                      	&	0.1	&	0.01	&	0.001	&	EW1	&	EW2	\\\hline
30	&	115	&	594	&	218	&	198	&	182	&	207	&	186	&	190	\\
20	&	131	&	--	&	243	&	200	&	216	&	212	&	220	&	193	\\
10	&	173	&	--	&	--	&	218	&	232	&	293	&	233	&	243	\\
5	&	197	&	--	&	--	&	331	&	337	&	358	&	323	&	346	\\\bottomrule[1pt]

\label{table:sweep-refl}
\end{tabular}}

\subfloat[CPU time (s) (FPI converged in 971.7 s)]
{
\begin{tabular}{ccccccccc}  \toprule[1pt]
	&		&		&		&		&		& JFNK $\eta$  &		&		\\\cline{5-9}
\raisebox{1.5ex}[0cm][0cm]{subspace}
        &	\raisebox{1.5ex}[0cm][0cm]{NKA}
                        &	\raisebox{1.5ex}[0cm][0cm]{NKA$_{-1}$}
                                                        &     \raisebox{1.5ex}[0cm][0cm]{Broyden}
                                                      	&	0.1	&	0.01	&	0.001	&	EW1	&	EW2	\\\hline
30	&	372.2	&	1975.1	&	645.7	&	536.8	&	498.6	&	569.8	&	509.8	&	523.3	\\
20	&	398.0	&	--	&	686.6	&	529.7	&	577.9	&	565.4	&	591.2	&	516.0	\\
10	&	486.3	&	--	&	--	&	572.1	&	613.4	&	769.9	&	613.2	&	639.9	\\
5	&	535.7	&	--	&	--	&	855.8	&	895.0	&	932.4	&	842.2	&	906.0	\\\bottomrule[1pt]
\label{table:time-refl}
\end{tabular}} 

\subfloat[Percentage of CPU time spent in the residual evaluation]
{
\begin{tabular}{ccccccccc}  \toprule[1pt]
	&		&		&		&		&		& JFNK $\eta$  &		&		\\\cline{5-9}
\raisebox{1.5ex}[0cm][0cm]{subspace}
        &	\raisebox{1.5ex}[0cm][0cm]{NKA}
                        &	\raisebox{1.5ex}[0cm][0cm]{NKA$_{-1}$}
                                                        &     \raisebox{1.5ex}[0cm][0cm]{Broyden}
                                                      	&	0.1	&	0.01	&	0.001	&	EW1	&	EW2	\\\hline
30	&	76.44	&	74.51	&	83.48	&	91.40	&	90.68	&	90.36	&	90.94	&	90.59	\\
20	&	81.05	&	--	&	87.25	&	93.06	&	92.83	&	92.64	&	92.97	&	92.74	\\
10	&	87.02	&	--	&	--	&	94.55	&	94.40	&	94.35	&	94.48	&	94.36	\\
5	&	91.04	&	--	&	--	&	95.61	&	95.66	&	95.57	&	95.62	&	95.58	\\\bottomrule[1pt]
\label{table:perc-refl}
\end{tabular}} 
\end{table}

\begin{figure}
\caption{PARTISN reflected cylinder: Scaled $L^2$-norm of the residual
  as a function of number of sweeps for the various methods and
  subspace sizes. Each of the methods is plotted on the same scale to simplify 
comparisons between panels. Note that, in panel (d),  points
plotted on the lines indicate when a JFNK iteration starts (JFNK requires
multiple sweeps per iteration). In panels (a), (b), and (c) we plot one point
per six iterations of the method. In panel (e) the convention for iterations
per plotted points is the same as in panels (a) through (d).}  
\centering
\begin{tabular}{cc}
\subfloat[NKA and FPI]{
\resizebox{80mm}{!}{
\begingroup%
\makeatletter%
\newcommand{\GNUPLOTspecial}{%
  \@sanitize\catcode`\%=14\relax\special}%
\setlength{\unitlength}{0.0500bp}%
\begin{picture}(7200,5040)(0,0)%
  \put(5936,3836){\makebox(0,0)[r]{\strut{}NKA(30)}}%
  \put(5936,4036){\makebox(0,0)[r]{\strut{}NKA(20)}}%
  \put(5936,4236){\makebox(0,0)[r]{\strut{}NKA(10)}}%
  \put(5936,4436){\makebox(0,0)[r]{\strut{}NKA(5)}}%
  \put(5936,4636){\makebox(0,0)[r]{\strut{}FPI}}%
  \put(4089,140){\makebox(0,0){\strut{}Sweeps}}%
  \put(160,2719){%
\rotatebox{-270}{%
  \makebox(0,0){\strut{}$\| {F} \|_{2,s}$}%
}}%
  \put(6839,440){\makebox(0,0){\strut{} 350}}%
  \put(6053,440){\makebox(0,0){\strut{} 300}}%
  \put(5268,440){\makebox(0,0){\strut{} 250}}%
  \put(4482,440){\makebox(0,0){\strut{} 200}}%
  \put(3697,440){\makebox(0,0){\strut{} 150}}%
  \put(2911,440){\makebox(0,0){\strut{} 100}}%
  \put(2126,440){\makebox(0,0){\strut{} 50}}%
  \put(1340,440){\makebox(0,0){\strut{} 0}}%
  \put(1220,4799){\makebox(0,0)[r]{\strut{} 1}}%
  \put(1220,3967){\makebox(0,0)[r]{\strut{} 0.01}}%
  \put(1220,3135){\makebox(0,0)[r]{\strut{} 0.0001}}%
  \put(1220,2304){\makebox(0,0)[r]{\strut{} 1e-06}}%
  \put(1220,1472){\makebox(0,0)[r]{\strut{} 1e-08}}%
  \put(1220,640){\makebox(0,0)[r]{\strut{} 1e-10}}%
\includegraphics{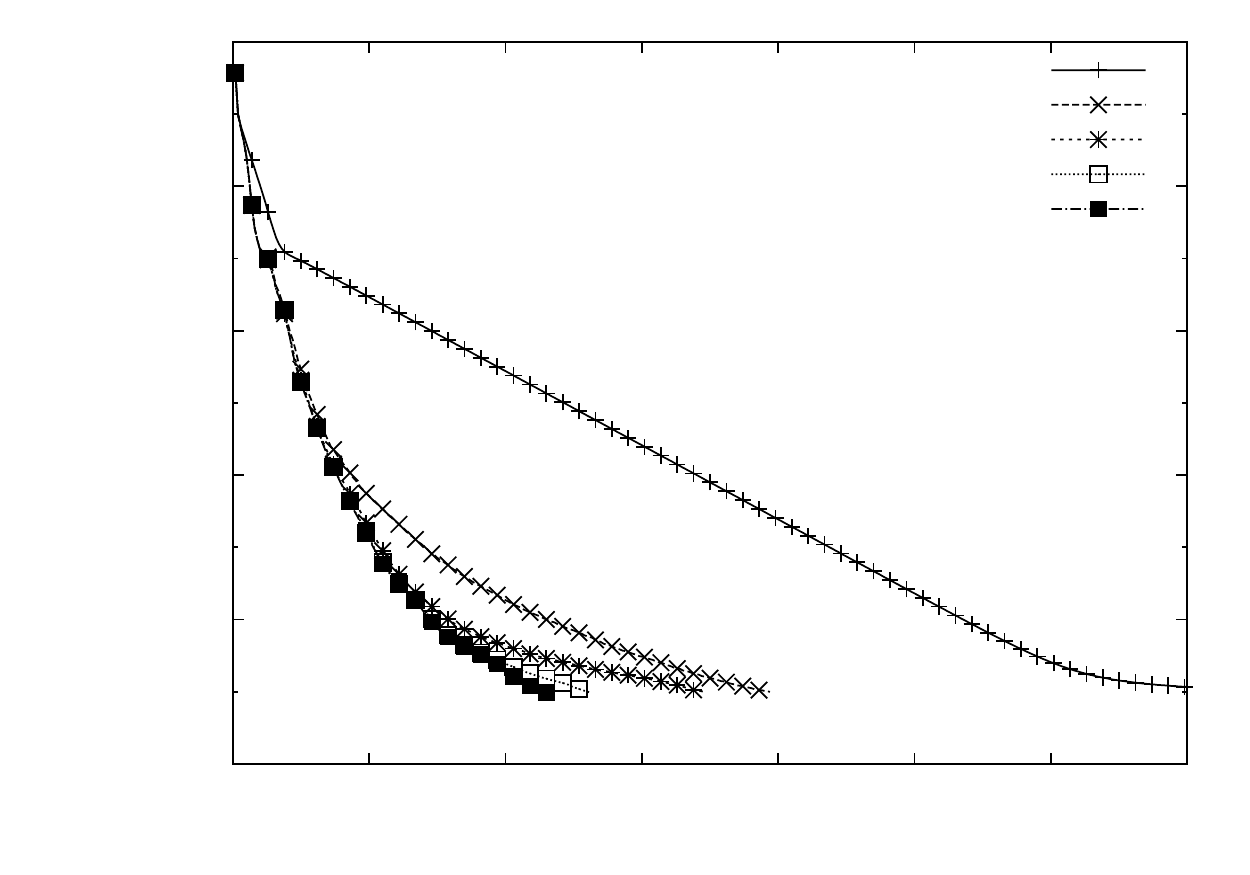}%
\end{picture}%
\endgroup

}
 \label{fig:refl_nka}}
&
\subfloat[NKA$_{-1}$]{
 \resizebox{80mm}{!}{
\begingroup%
\makeatletter%
\newcommand{\GNUPLOTspecial}{%
  \@sanitize\catcode`\%=14\relax\special}%
\setlength{\unitlength}{0.0500bp}%
\begin{picture}(7200,5040)(0,0)%
  \put(5936,4036){\makebox(0,0)[r]{\strut{}NKA$_{-1}$(30)}}%
  \put(5936,4236){\makebox(0,0)[r]{\strut{}NKA$_{-1}$(20)}}%
  \put(5936,4436){\makebox(0,0)[r]{\strut{}NKA$_{-1}$(10)}}%
  \put(5936,4636){\makebox(0,0)[r]{\strut{}NKA$_{-1}$(5)}}%
  \put(4089,140){\makebox(0,0){\strut{}Sweeps}}%
  \put(160,2719){%
\rotatebox{-270}{%
  \makebox(0,0){\strut{}$\| {F} \|_{2,s}$}%
}}%
  \put(6839,440){\makebox(0,0){\strut{} 350}}%
  \put(6053,440){\makebox(0,0){\strut{} 300}}%
  \put(5268,440){\makebox(0,0){\strut{} 250}}%
  \put(4482,440){\makebox(0,0){\strut{} 200}}%
  \put(3697,440){\makebox(0,0){\strut{} 150}}%
  \put(2911,440){\makebox(0,0){\strut{} 100}}%
  \put(2126,440){\makebox(0,0){\strut{} 50}}%
  \put(1340,440){\makebox(0,0){\strut{} 0}}%
  \put(1220,4799){\makebox(0,0)[r]{\strut{} 1}}%
  \put(1220,3967){\makebox(0,0)[r]{\strut{} 0.01}}%
  \put(1220,3135){\makebox(0,0)[r]{\strut{} 0.0001}}%
  \put(1220,2304){\makebox(0,0)[r]{\strut{} 1e-06}}%
  \put(1220,1472){\makebox(0,0)[r]{\strut{} 1e-08}}%
  \put(1220,640){\makebox(0,0)[r]{\strut{} 1e-10}}%
\includegraphics{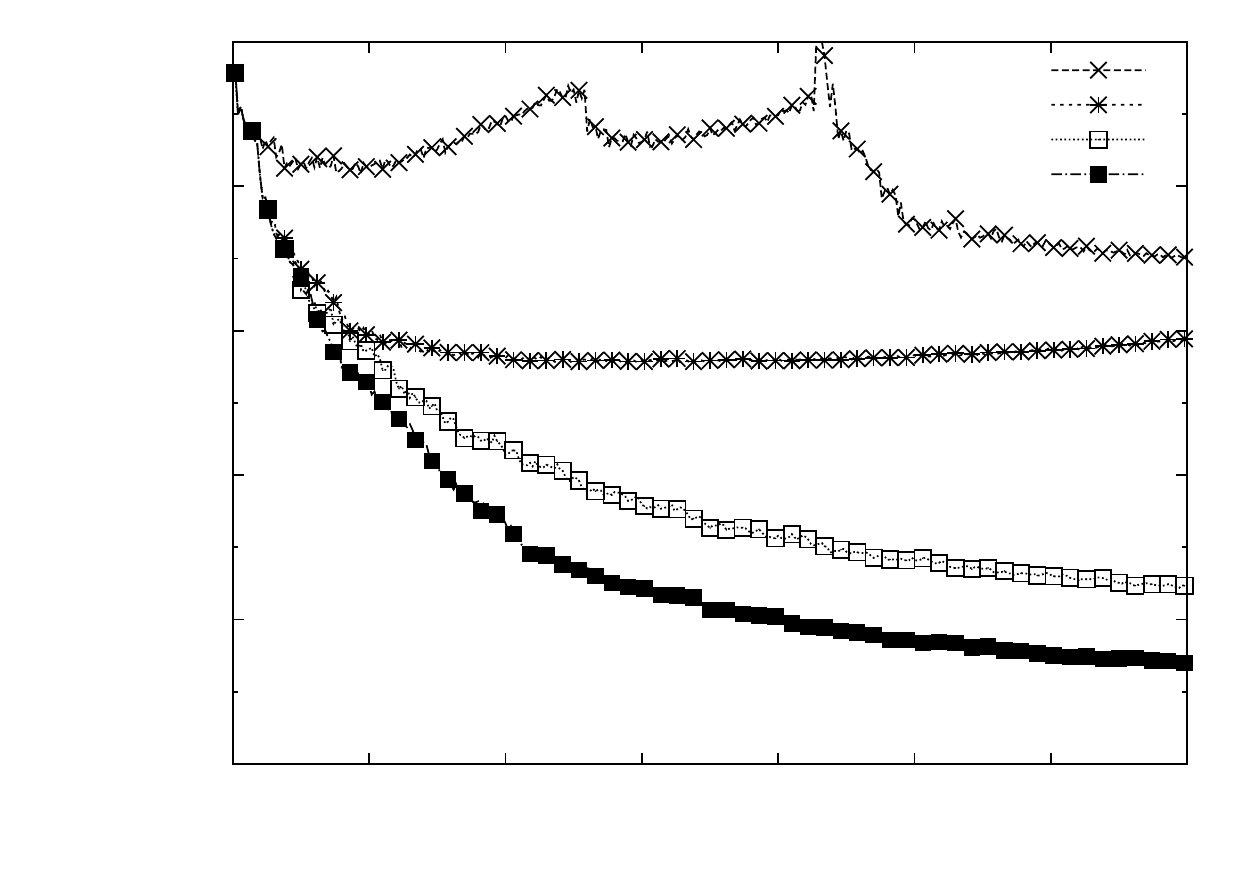}%
\end{picture}%
\endgroup

}
 \label{fig:refl_AM}}
\\
\subfloat[Broyden]{
 \resizebox{80mm}{!}{
\begingroup%
\makeatletter%
\newcommand{\GNUPLOTspecial}{%
  \@sanitize\catcode`\%=14\relax\special}%
\setlength{\unitlength}{0.0500bp}%
\begin{picture}(7200,5040)(0,0)%
  \put(5936,4036){\makebox(0,0)[r]{\strut{}Broyden(30)}}%
  \put(5936,4236){\makebox(0,0)[r]{\strut{}Broyden(20)}}%
  \put(5936,4436){\makebox(0,0)[r]{\strut{}Broyden(10)}}%
  \put(5936,4636){\makebox(0,0)[r]{\strut{}Broyden(5)}}%
  \put(4089,140){\makebox(0,0){\strut{}Sweeps}}%
  \put(160,2719){%
\rotatebox{-270}{%
  \makebox(0,0){\strut{}$\| {F} \|_{2,s}$}%
}}%
  \put(6839,440){\makebox(0,0){\strut{} 350}}%
  \put(6053,440){\makebox(0,0){\strut{} 300}}%
  \put(5268,440){\makebox(0,0){\strut{} 250}}%
  \put(4482,440){\makebox(0,0){\strut{} 200}}%
  \put(3697,440){\makebox(0,0){\strut{} 150}}%
  \put(2911,440){\makebox(0,0){\strut{} 100}}%
  \put(2126,440){\makebox(0,0){\strut{} 50}}%
  \put(1340,440){\makebox(0,0){\strut{} 0}}%
  \put(1220,4799){\makebox(0,0)[r]{\strut{} 1}}%
  \put(1220,3967){\makebox(0,0)[r]{\strut{} 0.01}}%
  \put(1220,3135){\makebox(0,0)[r]{\strut{} 0.0001}}%
  \put(1220,2304){\makebox(0,0)[r]{\strut{} 1e-06}}%
  \put(1220,1472){\makebox(0,0)[r]{\strut{} 1e-08}}%
  \put(1220,640){\makebox(0,0)[r]{\strut{} 1e-10}}%
\includegraphics{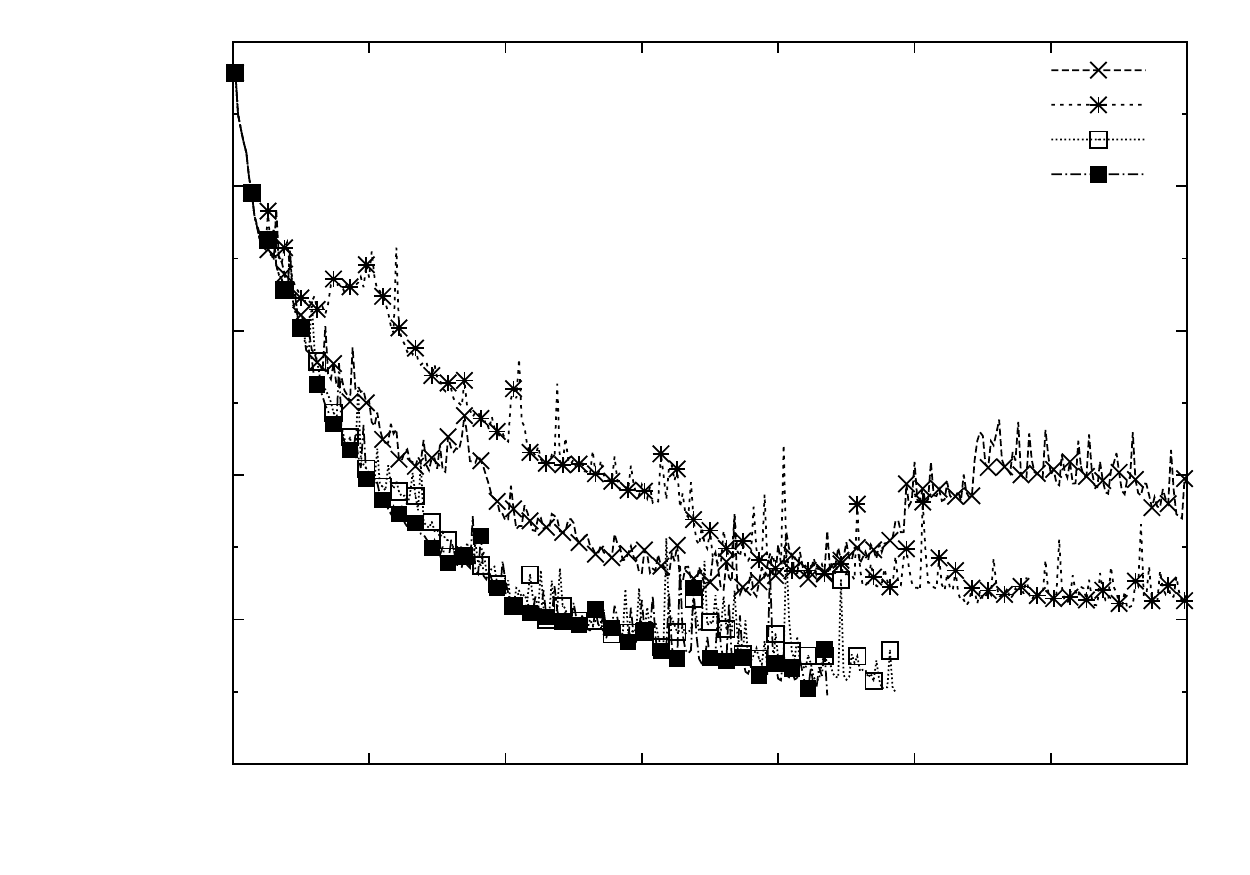}%
\end{picture}%
\endgroup

}
 \label{fig:refl_broy}}
&
\subfloat[JFNK]{
 \resizebox{80mm}{!}{
\begingroup%
\makeatletter%
\newcommand{\GNUPLOTspecial}{%
  \@sanitize\catcode`\%=14\relax\special}%
\setlength{\unitlength}{0.0500bp}%
\begin{picture}(7200,5040)(0,0)%
  \put(5936,4036){\makebox(0,0)[r]{\strut{}GMRES(20), $\eta_{EW2}$}}%
  \put(5936,4236){\makebox(0,0)[r]{\strut{}GMRES(5), $\eta_{EW2}$}}%
  \put(5936,4436){\makebox(0,0)[r]{\strut{}GMRES(20), $\eta=0.1$}}%
  \put(5936,4636){\makebox(0,0)[r]{\strut{}GMRES(5), $\eta=0.1$}}%
  \put(4089,140){\makebox(0,0){\strut{}Sweeps}}%
  \put(160,2719){%
\rotatebox{-270}{%
  \makebox(0,0){\strut{}$\| {F} \|_{2,s}$}%
}}%
  \put(6839,440){\makebox(0,0){\strut{} 350}}%
  \put(6053,440){\makebox(0,0){\strut{} 300}}%
  \put(5268,440){\makebox(0,0){\strut{} 250}}%
  \put(4482,440){\makebox(0,0){\strut{} 200}}%
  \put(3697,440){\makebox(0,0){\strut{} 150}}%
  \put(2911,440){\makebox(0,0){\strut{} 100}}%
  \put(2126,440){\makebox(0,0){\strut{} 50}}%
  \put(1340,440){\makebox(0,0){\strut{} 0}}%
  \put(1220,4799){\makebox(0,0)[r]{\strut{} 1}}%
  \put(1220,3967){\makebox(0,0)[r]{\strut{} 0.01}}%
  \put(1220,3135){\makebox(0,0)[r]{\strut{} 0.0001}}%
  \put(1220,2304){\makebox(0,0)[r]{\strut{} 1e-06}}%
  \put(1220,1472){\makebox(0,0)[r]{\strut{} 1e-08}}%
  \put(1220,640){\makebox(0,0)[r]{\strut{} 1e-10}}%
\includegraphics{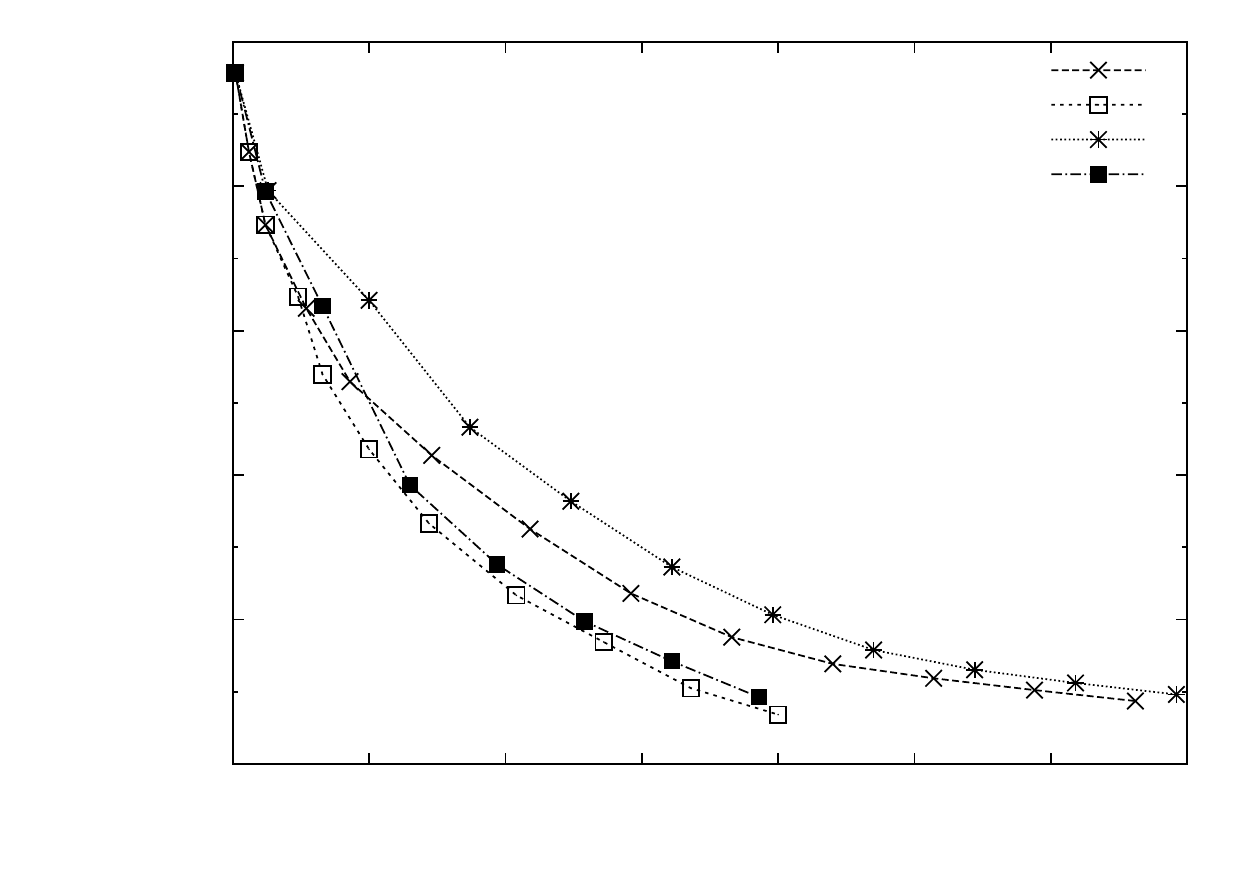}%
\end{picture}%
\endgroup

}
 \label{fig:refl_jfnk}}
\\
\multicolumn{2}{c}{
\subfloat[Comparison of the best results for each method]{
 \resizebox{80mm}{!}{
\begingroup%
\makeatletter%
\newcommand{\GNUPLOTspecial}{%
  \@sanitize\catcode`\%=14\relax\special}%
\setlength{\unitlength}{0.0500bp}%
\begin{picture}(7200,5040)(0,0)%
  \put(5936,3836){\makebox(0,0)[r]{\strut{}JFNK (GMRES(20), $\eta=0.1$)}}%
  \put(5936,4036){\makebox(0,0)[r]{\strut{}Broyden(30)}}%
  \put(5936,4236){\makebox(0,0)[r]{\strut{}NKA$_{-1}$(30)}}%
  \put(5936,4436){\makebox(0,0)[r]{\strut{}NKA(30)}}%
  \put(5936,4636){\makebox(0,0)[r]{\strut{}FPI}}%
  \put(4089,140){\makebox(0,0){\strut{}Sweeps}}%
  \put(160,2719){%
\rotatebox{-270}{%
  \makebox(0,0){\strut{}$\| {F} \|_{2,s}$}%
}}%
  \put(6839,440){\makebox(0,0){\strut{} 350}}%
  \put(6053,440){\makebox(0,0){\strut{} 300}}%
  \put(5268,440){\makebox(0,0){\strut{} 250}}%
  \put(4482,440){\makebox(0,0){\strut{} 200}}%
  \put(3697,440){\makebox(0,0){\strut{} 150}}%
  \put(2911,440){\makebox(0,0){\strut{} 100}}%
  \put(2126,440){\makebox(0,0){\strut{} 50}}%
  \put(1340,440){\makebox(0,0){\strut{} 0}}%
  \put(1220,4799){\makebox(0,0)[r]{\strut{} 1}}%
  \put(1220,3967){\makebox(0,0)[r]{\strut{} 0.01}}%
  \put(1220,3135){\makebox(0,0)[r]{\strut{} 0.0001}}%
  \put(1220,2304){\makebox(0,0)[r]{\strut{} 1e-06}}%
  \put(1220,1472){\makebox(0,0)[r]{\strut{} 1e-08}}%
  \put(1220,640){\makebox(0,0)[r]{\strut{} 1e-10}}%
\includegraphics{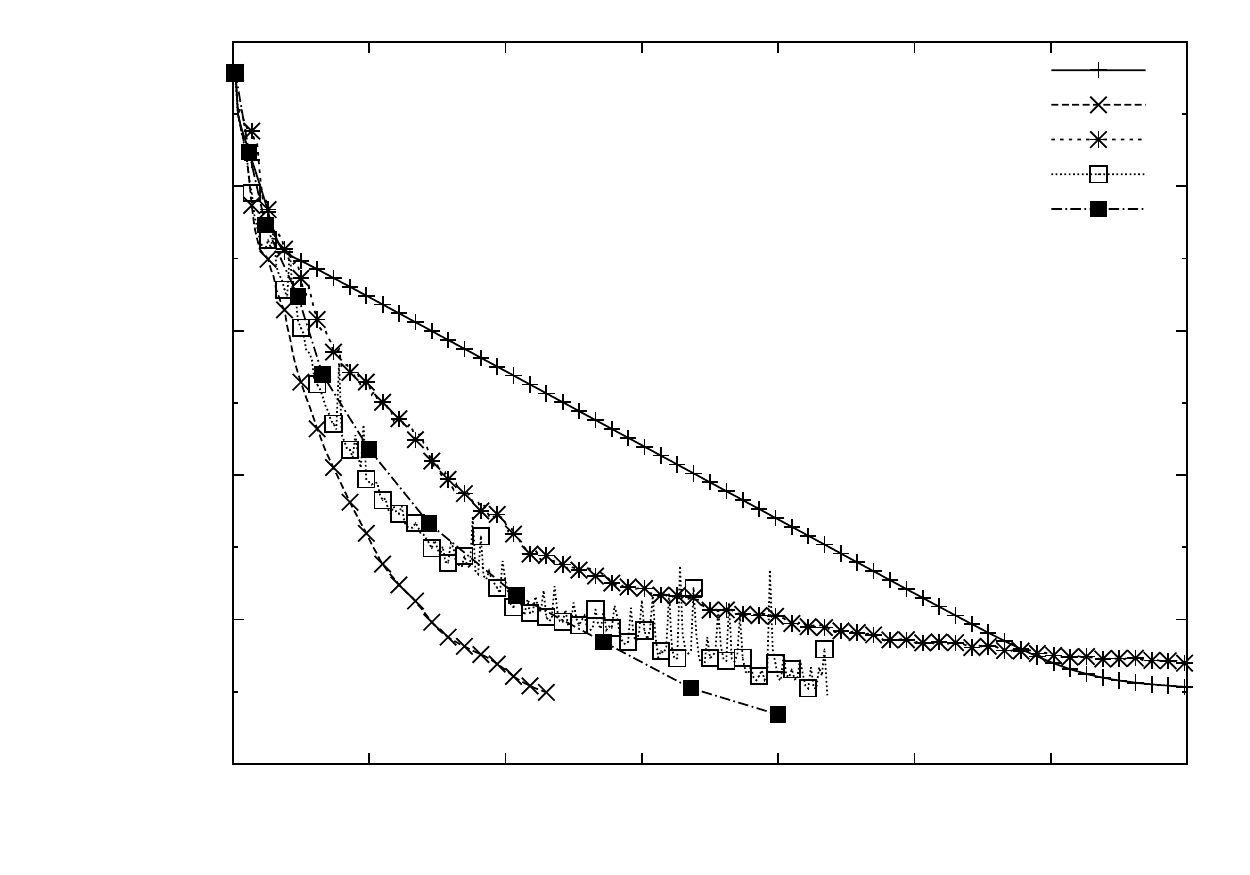}%
\end{picture}%
\endgroup

}
 \label{fig:refl_comp}}}
\\
\end{tabular}
 \label{fig:refl}
\end{figure}

\clearpage
\paragraph{Capsaicin results}

Here, we duplicate the results shown in~\ref{section:cap-no_refl},
only this time, reflection is added to the top, bottom and outer edges
of the cylinder. Tables~\ref{table:Cap-JFNK-refl}-\ref{table:Cap-time-refl}
show the number of JFNK and total inner GMRES iterations, the number
of sweeps and the CPU time that was spent computing the residual,
respectively. As can be seen in Tables~\ref{table:JFNK-refl} and
\ref{table:sweep-refl}, increasing the subspace size does not affect 
the Newton iteration count, but it does have a significant effect on
the GMRES iteration count and sweep count. In this regard, the
behavior is more similar to the unreflected case than the PARTISN
reflected case. For NKA and Broyden, there is an overall increase in
iteration count, but for Broyden it is slight compared to that seen
for the unreflected cylinder and the reflected PARTISN results. Also,
for Capsaicin, Broyden always converges whereas for PARTISN only
subspaces of 20 and 30 converged.  The only subspace size that
converged for NKA$_{-1}$ was 30, which is consistent with the PARTISN
results, and which once again requires more sweeps than any other
iterative method including FPI. When comparing the runtimes shown in
Table~\ref{table:Cap-time-refl} for all of the iterative methods, we
see that JFNK with GMRES(5) is generally slower than FPI, while JFNK
with GMRES(30) is slower than NKA with a subspace of 5. Broyden is
slower than NKA, but always more efficient than JFNK. The runtime for
NKA$_{-1}$ is twice that for FPI.

The plots in Fig.~\ref{fig:refl} show the scaled $L^2$-norm of the
residual as a function of the number of
sweeps. Fig.~\ref{fig:refl_nka_cap} shows that the behavior of NKA is
similar for all subspace sizes and much more efficient than FPI while
Fig.~\ref{fig:refl_And_cap} shows that NKA$_{-1}$ seems to stagnate
before reaching the specified convergence criterion subspaces of 5 and
10, and at 20 . Even when it does converge for a subspace size of 30,
it is slower even than FPI. Fig.~\ref{fig:refl_broy_cap} shows the
behavior of Broyden. As can be seen, the convergence is fairly
regular, in sharp constrast to the PARTISN results where the norm
fluctuated erratically and often failed to converge.
Fig.~\ref{fig:refl_jfnk_cap} shows some of the more efficient JFNK
results plotted at each Newton iteration for the current sweep
count. And finally, Fig.~\ref{fig:refl_comp} shows a comparison of the
most efficient results for each method. From these results, NKA(20) is more
efficient than the other iterative methods.  

\begin{table}[h!]
\centering
\caption{Capsaicin reflected cylinder: Number of outer and inner JFNK
iterations, sweeps and run-time spent computing the residual to an
accuracy of $\|F\|_{2,s}\leq 10^{-9}$ for the various methods.}
%
\subfloat[Outer JFNK/total inner GMRES iterations]
{
\begin{tabular}{cccccc}  \toprule[1pt]
  & \multicolumn{5}{c}{$\eta$} \\ \cline{2-6}
\raisebox{1.5ex}[0cm][0cm]{subspace}
	 &  0.1	    &	0.01   &  0.001	  &  EW1     &  EW2	\\\hline
30       &  9 (68)  &  5 (64)  &  4 (78)  &  4 (61)  &  4 (69)  \\
20       &  9 (68)  &  5 (64)  &  4 (87)  &  5 (74)  &  4 (69)  \\
10       &  9 (68)  &  5 (68)  &  4 (91)  &  4 (62)  &  4 (72)  \\
5        &  9 (76)  &  5 (137) &  4 (95)  &  4 (66)  &  4 (76)  \\ \bottomrule[1pt]
\label{table:Cap-JFNK-refl}
\end{tabular}}

\subfloat[Number of sweeps (FPI converged in 99)]
{
\begin{tabular}{ccccccccc}  \toprule[1pt]
	&		&		&		&		&		& JFNK $\eta$  &		&		\\\cline{5-9}
\raisebox{1.5ex}[0cm][0cm]{subspace}
        &	\raisebox{1.5ex}[0cm][0cm]{NKA}
                        &	\raisebox{1.5ex}[0cm][0cm]{NKA$_{-1}$}
                                                        &     \raisebox{1.5ex}[0cm][0cm]{Broyden}
                          &   0.1  &  0.01  &  0.001 &  EW1 &  EW2	  \\\hline
30       & 56 & 202 & 60  &   88   &    76  &  88    &  74  &  79  \\
20       & 57 & --  & 65  &   88   &    76  &  98    &  90  &  79  \\
10       & 56 & --  & 66  &   88   &    84  &  107   &  79  &  86  \\
5        & 62 & --  & 65  &   105  &    95  &  121   &  89  &  98  \\ \bottomrule[1pt]
\label{table:Cap-sweep-refl}
\end{tabular}}

\subfloat[CPU time (ks) (FPI converged in 6.25 ks)]
{
\begin{tabular}{ccccccccc}  \toprule[1pt]
	&		&		&		&		&		& JFNK $\eta$  &		&		\\\cline{5-9}
\raisebox{1.5ex}[0cm][0cm]{subspace}
        &	\raisebox{1.5ex}[0cm][0cm]{NKA}
                        &	\raisebox{1.5ex}[0cm][0cm]{NKA$_{-1}$}
                                                        &     \raisebox{1.5ex}[0cm][0cm]{Broyden}
                             &   0.1   &    0.01 &  0.001&  EW1  &  EW2	  \\\hline
30       & 3.5 & 12.8 & 3.9  &   5.9   &    5.2  &  5.7  &  5.0  &  5.2   \\
20       & 3.7 & --   & 4.0  &   5.8   &    5.0  &  6.4  &  6.0  &  5.2   \\
10       & 3.6 & --   & 4.2  &   5.7   &    5.3  &  6.9  &  5.2  &  5.6   \\
5        & 3.9 & --   & 4.2  &   6.7   &    6.5  &  7.8  &  5.8  &  6.6   \\ \bottomrule[1pt]
\label{table:Cap-time-refl}
\end{tabular}} 
\end{table}

\begin{figure}
\caption{Capsaicin reflected cylinder: Scaled $L^2$-norm of the residual as a function of number of
sweeps for the various methods and subspace sizes. Each of the methods is plotted on the same scale to simplify 
comparisons between panels. Note that, in panel (d),  points
plotted on the lines indicate when a JFNK iteration starts (JFNK requires
multiple sweeps per iteration). In panels (a), (b), and (c) we plot one point
per two iterations of the method. In panel (e) the convention for iterations
per plotted points is the same as in panels (a) through (d).} 
\begin{tabular}{cc}
\subfloat[NKA and FPI]{ 
 \resizebox{80mm}{!}{
\begingroup%
\makeatletter%
\newcommand{\GNUPLOTspecial}{%
  \@sanitize\catcode`\%=14\relax\special}%
\setlength{\unitlength}{0.0500bp}%
\begin{picture}(7200,5040)(0,0)%
  \put(5936,3836){\makebox(0,0)[r]{\strut{}NKA(30)}}%
  \put(5936,4036){\makebox(0,0)[r]{\strut{}NKA(20)}}%
  \put(5936,4236){\makebox(0,0)[r]{\strut{}NKA(10)}}%
  \put(5936,4436){\makebox(0,0)[r]{\strut{}NKA(5)}}%
  \put(5936,4636){\makebox(0,0)[r]{\strut{}FPI}}%
  \put(4089,140){\makebox(0,0){\strut{}Sweeps}}%
  \put(160,2719){%
\rotatebox{-270}{%
  \makebox(0,0){\strut{}$\| {F} \|_{2,s}$}%
}}%
  \put(6577,440){\makebox(0,0){\strut{} 100}}%
  \put(5530,440){\makebox(0,0){\strut{} 80}}%
  \put(4482,440){\makebox(0,0){\strut{} 60}}%
  \put(3435,440){\makebox(0,0){\strut{} 40}}%
  \put(2387,440){\makebox(0,0){\strut{} 20}}%
  \put(1340,440){\makebox(0,0){\strut{} 0}}%
  \put(1220,4799){\makebox(0,0)[r]{\strut{} 1}}%
  \put(1220,4383){\makebox(0,0)[r]{\strut{} 0.1}}%
  \put(1220,3967){\makebox(0,0)[r]{\strut{} 0.01}}%
  \put(1220,3551){\makebox(0,0)[r]{\strut{} 0.001}}%
  \put(1220,3135){\makebox(0,0)[r]{\strut{} 0.0001}}%
  \put(1220,2719){\makebox(0,0)[r]{\strut{} 1e-05}}%
  \put(1220,2304){\makebox(0,0)[r]{\strut{} 1e-06}}%
  \put(1220,1888){\makebox(0,0)[r]{\strut{} 1e-07}}%
  \put(1220,1472){\makebox(0,0)[r]{\strut{} 1e-08}}%
  \put(1220,1056){\makebox(0,0)[r]{\strut{} 1e-09}}%
  \put(1220,640){\makebox(0,0)[r]{\strut{} 1e-10}}%
\includegraphics{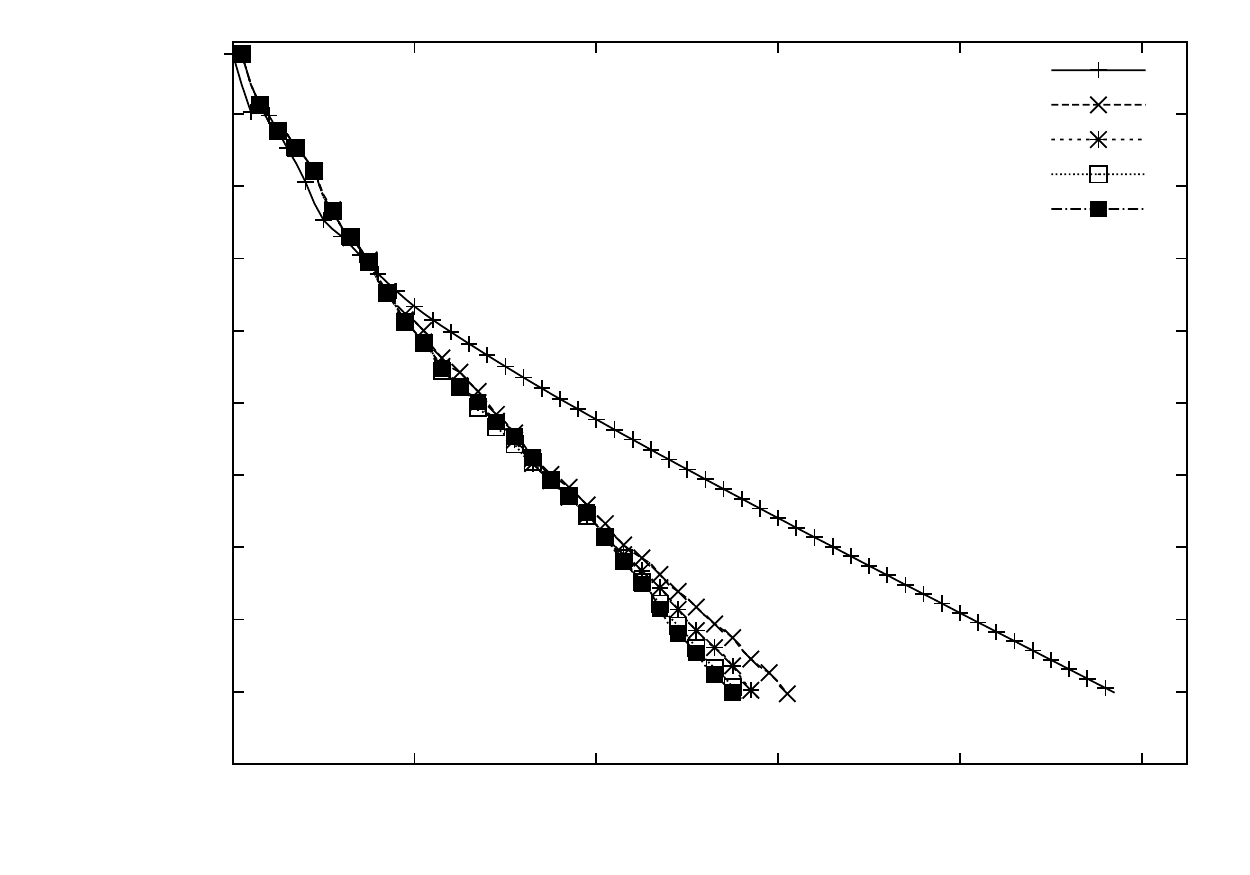}%
\end{picture}%
\endgroup

}
 \label{fig:refl_nka_cap}}
&
\subfloat[NKA$_{-1}$]{
 \resizebox{80mm}{!}{
\begingroup%
\makeatletter%
\newcommand{\GNUPLOTspecial}{%
  \@sanitize\catcode`\%=14\relax\special}%
\setlength{\unitlength}{0.0500bp}%
\begin{picture}(7200,5040)(0,0)%
  \put(5936,4036){\makebox(0,0)[r]{\strut{}NKA$_{-1}$(30)}}%
  \put(5936,4236){\makebox(0,0)[r]{\strut{}NKA$_{-1}$(20)}}%
  \put(5936,4436){\makebox(0,0)[r]{\strut{}NKA$_{-1}$(10)}}%
  \put(5936,4636){\makebox(0,0)[r]{\strut{}NKA$_{-1}$(5)}}%
  \put(4089,140){\makebox(0,0){\strut{}Sweeps}}%
  \put(160,2719){%
\rotatebox{-270}{%
  \makebox(0,0){\strut{}$\| {F} \|_{2,s}$}%
}}%
  \put(6577,440){\makebox(0,0){\strut{} 100}}%
  \put(5530,440){\makebox(0,0){\strut{} 80}}%
  \put(4482,440){\makebox(0,0){\strut{} 60}}%
  \put(3435,440){\makebox(0,0){\strut{} 40}}%
  \put(2387,440){\makebox(0,0){\strut{} 20}}%
  \put(1340,440){\makebox(0,0){\strut{} 0}}%
  \put(1220,4799){\makebox(0,0)[r]{\strut{} 1}}%
  \put(1220,3967){\makebox(0,0)[r]{\strut{} 0.01}}%
  \put(1220,3135){\makebox(0,0)[r]{\strut{} 0.0001}}%
  \put(1220,2304){\makebox(0,0)[r]{\strut{} 1e-06}}%
  \put(1220,1472){\makebox(0,0)[r]{\strut{} 1e-08}}%
  \put(1220,640){\makebox(0,0)[r]{\strut{} 1e-10}}%
\includegraphics{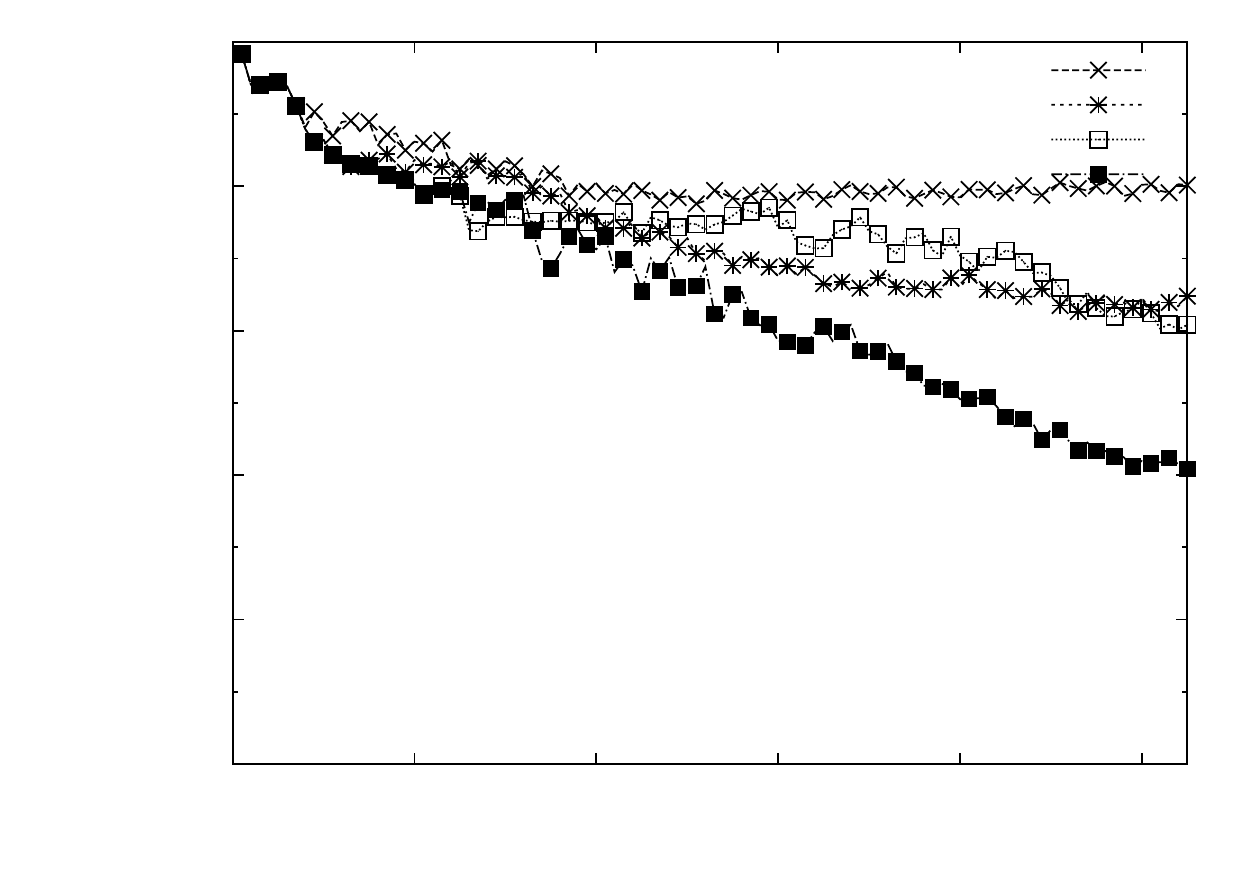}%
\end{picture}%
\endgroup

}
 \label{fig:refl_And_cap}}
\\
\subfloat[Broyden]{
 \resizebox{80mm}{!}{
\begingroup%
\makeatletter%
\newcommand{\GNUPLOTspecial}{%
  \@sanitize\catcode`\%=14\relax\special}%
\setlength{\unitlength}{0.0500bp}%
\begin{picture}(7200,5040)(0,0)%
  \put(5936,4036){\makebox(0,0)[r]{\strut{}Broyden(30)}}%
  \put(5936,4236){\makebox(0,0)[r]{\strut{}Broyden(20)}}%
  \put(5936,4436){\makebox(0,0)[r]{\strut{}Broyden(10)}}%
  \put(5936,4636){\makebox(0,0)[r]{\strut{}Broyden(5)}}%
  \put(4089,140){\makebox(0,0){\strut{}Sweeps}}%
  \put(160,2719){%
\rotatebox{-270}{%
  \makebox(0,0){\strut{}$\| {F} \|_{2,s}$}%
}}%
  \put(6577,440){\makebox(0,0){\strut{} 100}}%
  \put(5530,440){\makebox(0,0){\strut{} 80}}%
  \put(4482,440){\makebox(0,0){\strut{} 60}}%
  \put(3435,440){\makebox(0,0){\strut{} 40}}%
  \put(2387,440){\makebox(0,0){\strut{} 20}}%
  \put(1340,440){\makebox(0,0){\strut{} 0}}%
  \put(1220,4799){\makebox(0,0)[r]{\strut{} 1}}%
  \put(1220,3967){\makebox(0,0)[r]{\strut{} 0.01}}%
  \put(1220,3135){\makebox(0,0)[r]{\strut{} 0.0001}}%
  \put(1220,2304){\makebox(0,0)[r]{\strut{} 1e-06}}%
  \put(1220,1472){\makebox(0,0)[r]{\strut{} 1e-08}}%
  \put(1220,640){\makebox(0,0)[r]{\strut{} 1e-10}}%
\includegraphics{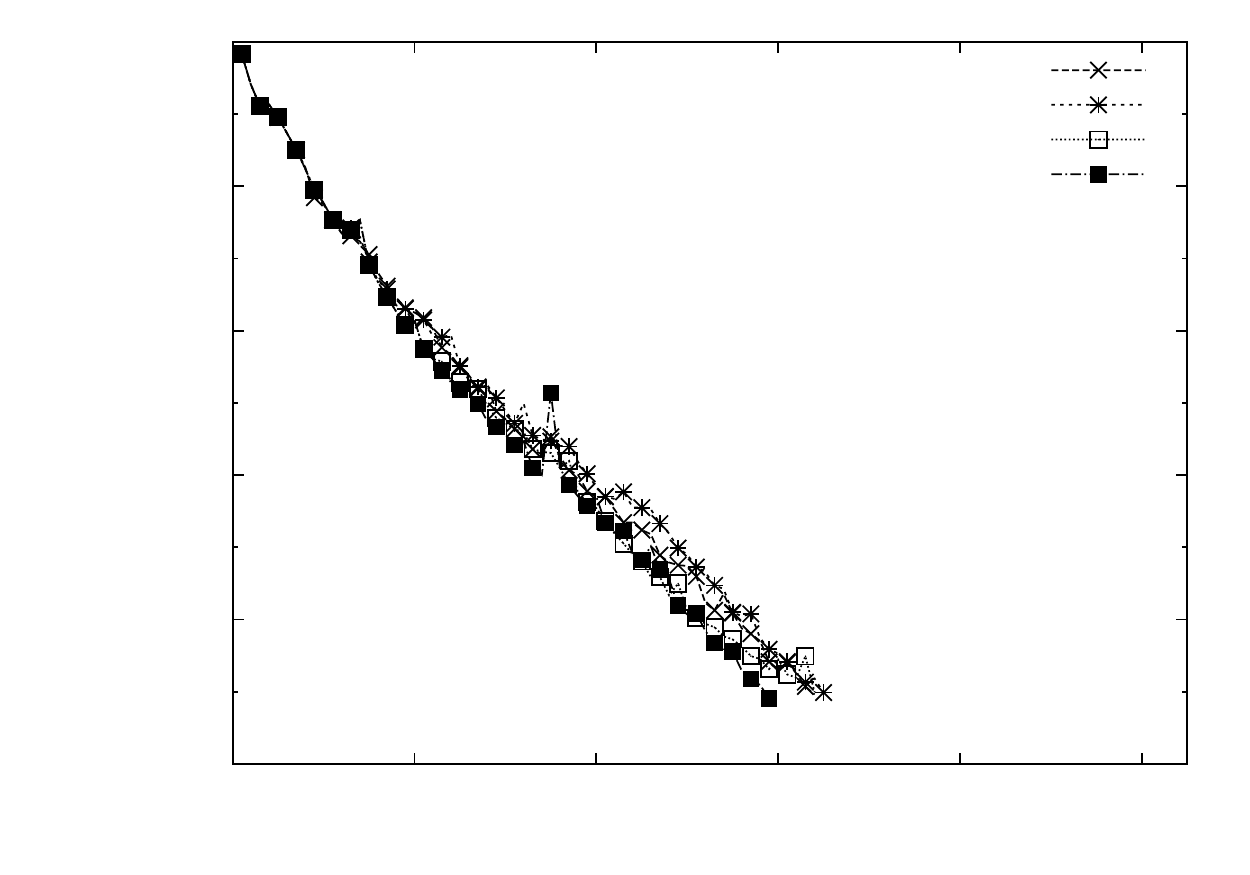}%
\end{picture}%
\endgroup

}
 \label{fig:refl_broy_cap}}
&
\subfloat[JFNK]{
 \resizebox{80mm}{!}{
\begingroup%
\makeatletter%
\newcommand{\GNUPLOTspecial}{%
  \@sanitize\catcode`\%=14\relax\special}%
\setlength{\unitlength}{0.0500bp}%
\begin{picture}(7200,5040)(0,0)%
  \put(5936,4036){\makebox(0,0)[r]{\strut{}GMRES(20), $\eta_{EW2}$}}%
  \put(5936,4236){\makebox(0,0)[r]{\strut{}GMRES(5), $\eta_{EW2}$}}%
  \put(5936,4436){\makebox(0,0)[r]{\strut{}GMRES(20), $\eta=0.1$}}%
  \put(5936,4636){\makebox(0,0)[r]{\strut{}GMRES(5), $\eta=0.1$}}%
  \put(4089,140){\makebox(0,0){\strut{}Sweeps}}%
  \put(160,2719){%
\rotatebox{-270}{%
  \makebox(0,0){\strut{}$\| {F} \|_{2,s}$}%
}}%
  \put(6577,440){\makebox(0,0){\strut{} 100}}%
  \put(5530,440){\makebox(0,0){\strut{} 80}}%
  \put(4482,440){\makebox(0,0){\strut{} 60}}%
  \put(3435,440){\makebox(0,0){\strut{} 40}}%
  \put(2387,440){\makebox(0,0){\strut{} 20}}%
  \put(1340,440){\makebox(0,0){\strut{} 0}}%
  \put(1220,4799){\makebox(0,0)[r]{\strut{} 1}}%
  \put(1220,3967){\makebox(0,0)[r]{\strut{} 0.01}}%
  \put(1220,3135){\makebox(0,0)[r]{\strut{} 0.0001}}%
  \put(1220,2304){\makebox(0,0)[r]{\strut{} 1e-06}}%
  \put(1220,1472){\makebox(0,0)[r]{\strut{} 1e-08}}%
  \put(1220,640){\makebox(0,0)[r]{\strut{} 1e-10}}%
\includegraphics{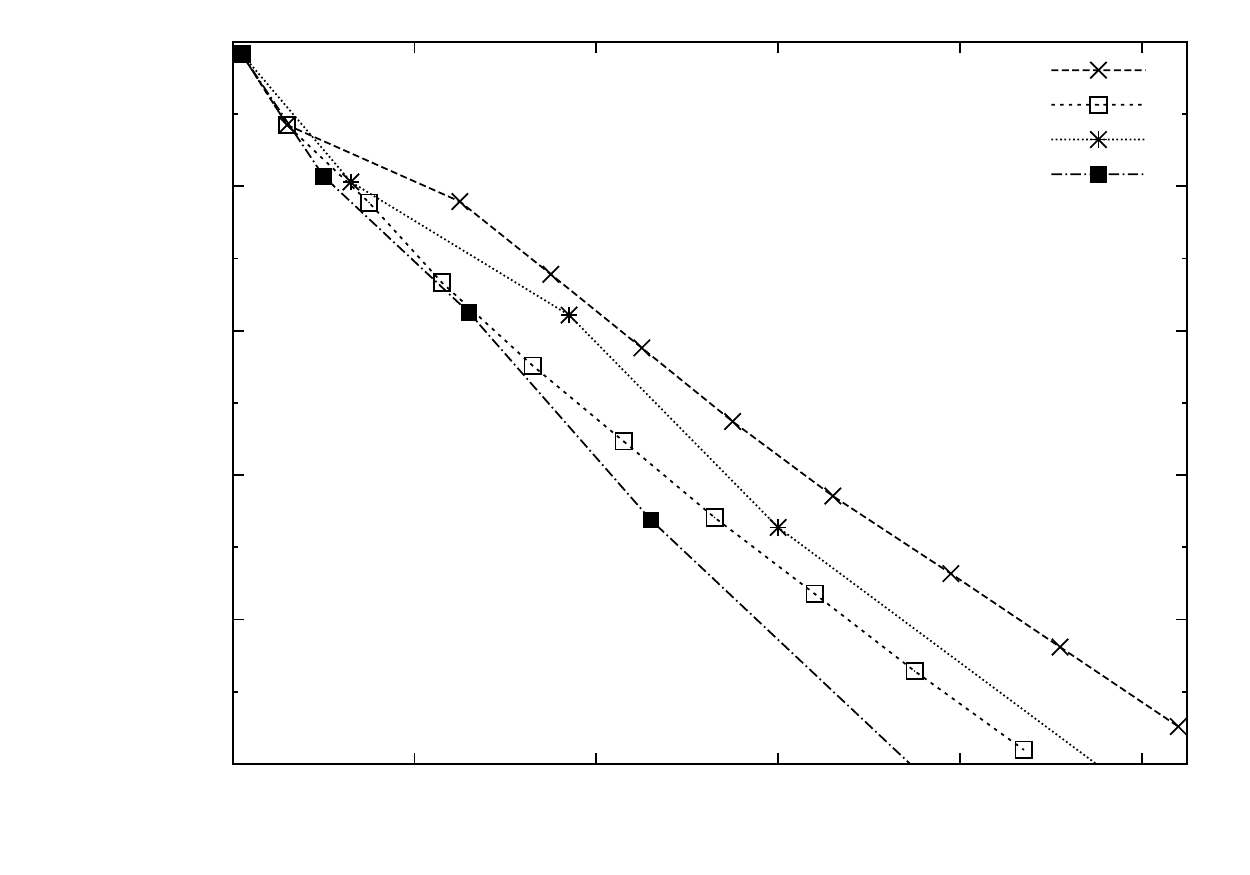}%
\end{picture}%
\endgroup

}
 \label{fig:refl_jfnk_cap}}
\\
\multicolumn{2}{c}{
\subfloat[Comparison of the best results for each method]{
 \resizebox{80mm}{!}{
\begingroup%
\makeatletter%
\newcommand{\GNUPLOTspecial}{%
  \@sanitize\catcode`\%=14\relax\special}%
\setlength{\unitlength}{0.0500bp}%
\begin{picture}(7200,5040)(0,0)%
  \put(5936,3836){\makebox(0,0)[r]{\strut{}JFNK (GMRES(20), $\eta_{EW2}$)}}%
  \put(5936,4036){\makebox(0,0)[r]{\strut{}Broyden(30)}}%
  \put(5936,4236){\makebox(0,0)[r]{\strut{}NKA$_{-1}$(30)}}%
  \put(5936,4436){\makebox(0,0)[r]{\strut{}NKA(20)}}%
  \put(5936,4636){\makebox(0,0)[r]{\strut{}FPI}}%
  \put(4089,140){\makebox(0,0){\strut{}Sweeps}}%
  \put(160,2719){%
\rotatebox{-270}{%
  \makebox(0,0){\strut{}$\| {F} \|_{2,s}$}%
}}%
  \put(6577,440){\makebox(0,0){\strut{} 100}}%
  \put(5530,440){\makebox(0,0){\strut{} 80}}%
  \put(4482,440){\makebox(0,0){\strut{} 60}}%
  \put(3435,440){\makebox(0,0){\strut{} 40}}%
  \put(2387,440){\makebox(0,0){\strut{} 20}}%
  \put(1340,440){\makebox(0,0){\strut{} 0}}%
  \put(1220,4799){\makebox(0,0)[r]{\strut{} 1}}%
  \put(1220,3967){\makebox(0,0)[r]{\strut{} 0.01}}%
  \put(1220,3135){\makebox(0,0)[r]{\strut{} 0.0001}}%
  \put(1220,2304){\makebox(0,0)[r]{\strut{} 1e-06}}%
  \put(1220,1472){\makebox(0,0)[r]{\strut{} 1e-08}}%
  \put(1220,640){\makebox(0,0)[r]{\strut{} 1e-10}}%
\includegraphics{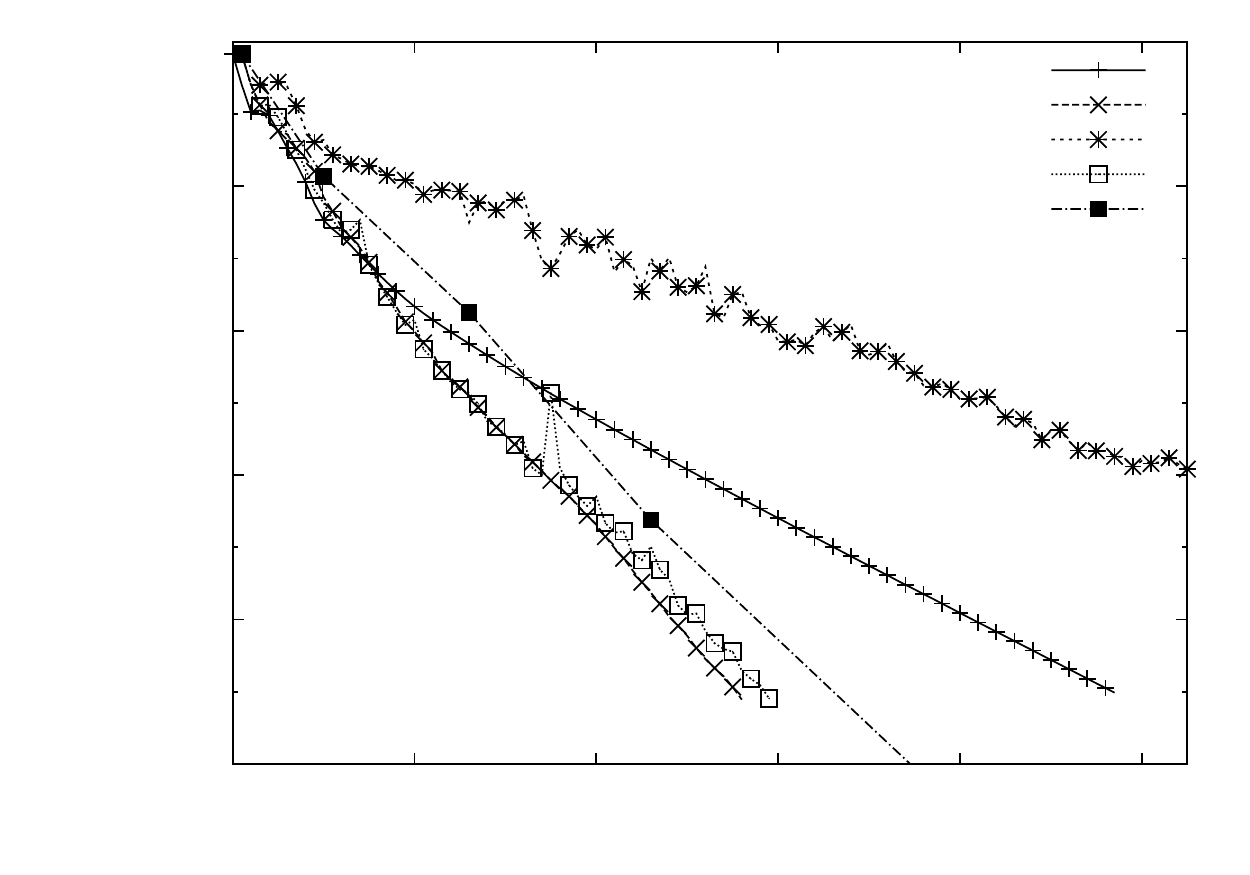}%
\end{picture}%
\endgroup
}
 \label{fig:refl_comp_cap}}}
\\
\end{tabular}
\label{fig:refl_cap}
\end{figure}

\subsection{C5G7-MOX benchmark}

\begin{figure}
\caption{\label{fig:c5g7-view}The C5G7-MOX configuration viewed from above.}
\begin{center}
\includegraphics[height=17cm]{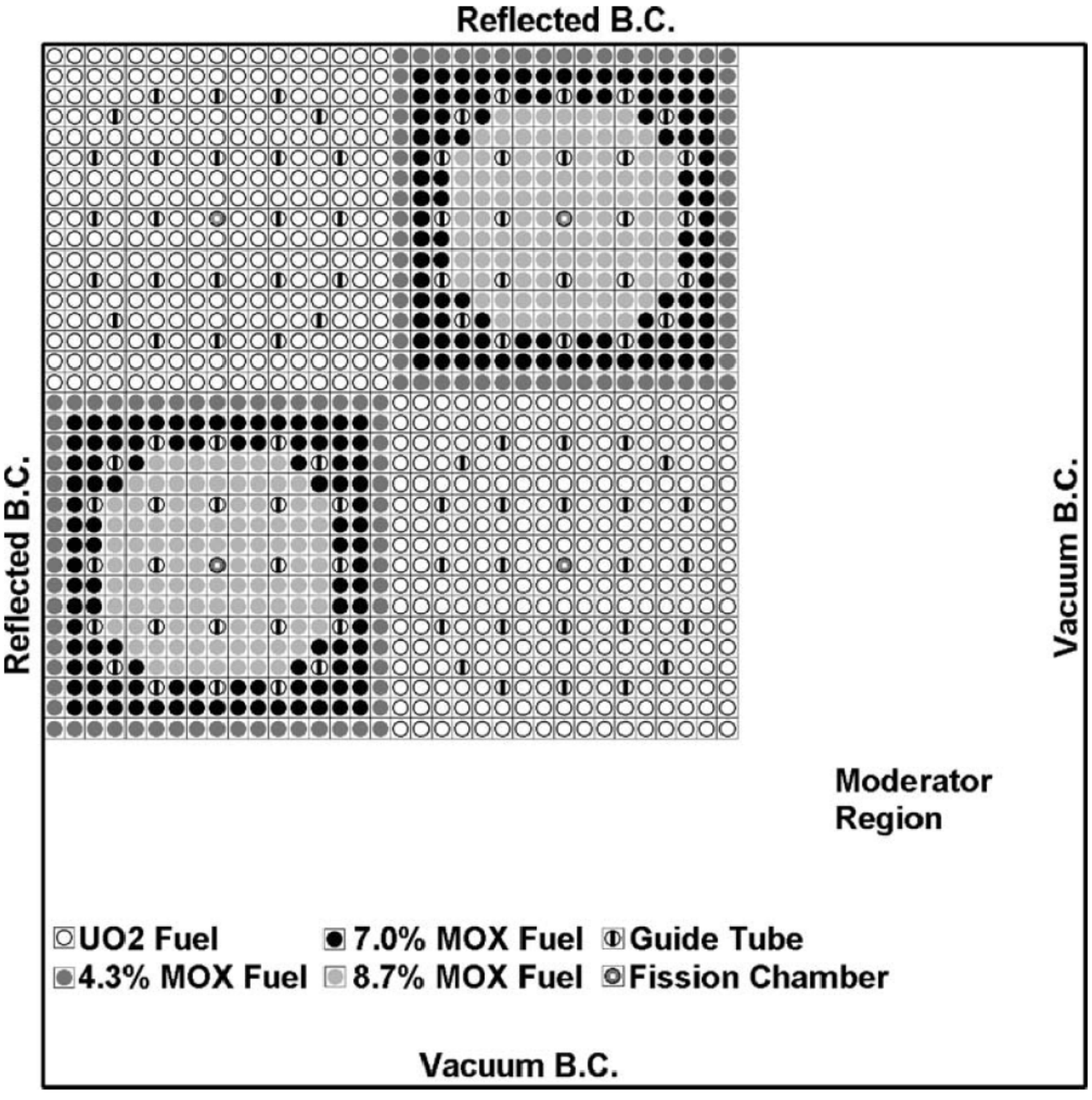}
\end{center}
\end{figure}

We chose the Rodded B variant~\cite{C5G7} of the C5G7-MOX problem and
modeled it as a quarter-core with reflecting boundary conditions on
the interior edges. Fig.~\ref{fig:c5g7-view} depicts the configuration from above. The spatial mesh was $506\times 506\times 360$ for
a total of 92.2 million spatial cells and a total of 645.2 million
degrees of freedom. The problem was run on the Typhoon cluster at Los
Alamos National Laboratory on 512 processors. We considered the system converged when the residual reached $\|F\|_{2,s} < 10^{-8}$. Despite the fact that
this problem is much larger and more complex than the cylinder
problem, all of the conclusions that were made in~\ref{cyl test} are
corroborated here, as can be seen in Table~\ref{table:c5g7} and
Fig.~\ref{fig:c5g7}. Even for a subspace size of 5, NKA still
converged in significantly fewer iterations and shorter runtimes than
any of the other methods even when the subspace size for the other
method was 30. NKA$_{-1}$ did not converge for \textit{any} of the
subspace sizes and Broyden only converged for subspace sizes of 10 and
5. This is the reverse of the PARTISN Broyden results for the
reflected cylinder problem, which did not converge for subspaces sizes
of 5 and 10, but did converged for 20 and, in even fewer iterations,
30. We believe this has to do with the fact that the eigenvalue is
negative for the first and second iterations (which are the same for
all of the methods) and the smaller subspace size allows those `bad'
iterates to be eliminated from the subspace sooner, thereby increasing
stability. The performance of JFNK is somewhat erratic, requiring
anywhere from 245 to 618 sweeps with no clear correlation to the
subspace size, and JFNK with $\eta_{EW1}$ and a subspace of 5
stagnates at $\|F\|_{2,s}\approx 10^{-7}$. The norm also fluctuates
more erratically at times for this problem than it did for the
previous problems, but NKA steadily decreases the norm at every
iteration. Fig.~\ref{fig:c5g7_comp} shows conclusively that NKA is the
most efficient nonlinear solver of the five examined for this
problem, even when NKA is run with a depth of five vectors. The percentage of CPU time spent in the residual
(Table~\ref{table:c5g7-perc}) is slightly higher in this case than the
others, which is doubtlessly due to the communications overhead that
results from doing a parallel computation.

\begin{table}[h!]
\centering
\caption{PARTISN C5G7-MOX Benchmark: Number of outer and inner JFNK
  iterations, sweeps and run-time spent computing the residual to an
  accuracy of $\|F\|_{2,s}\leq 10^{-9}$ for the various
  methods. \label{table:c5g7}}
%
\subfloat[Outer JFNK/total inner GMRES iterations]
{
\begin{tabular}{ccc}  \toprule[1pt]
  & \multicolumn{2}{c}{$\eta$} \\ \cline{2-3}
\raisebox{1.5ex}[0cm][0cm]{subspace}
	 &  EW1     &  EW2	\\\hline
30       &  13(204) &  15(556) \\
20       &  11(237) &  13(322) \\
10       &  10(321) &  6 (256) \\
5        &  --      &  11(502) \\ \bottomrule[1pt]
\label{table:c5g7-JFNK}
\end{tabular}}

\subfloat[Number of sweeps (FPI converged in 1223)]
{
\begin{tabular}{cccccc}  \toprule[1pt]
	&		&		&		& \multicolumn{2}{c}{JFNK $\eta$}  \\\cline{5-6}
\raisebox{1.5ex}[0cm][0cm]{subspace}
        &	\raisebox{1.5ex}[0cm][0cm]{NKA}
                        &	\raisebox{1.5ex}[0cm][0cm]{NKA$_{-1}$}
                                                        &     \raisebox{1.5ex}[0cm][0cm]{Broyden}
                          &   EW1 &  EW2	  \\\hline
30       & 90 & --  & --  &   245 &  599 \\
20       & 93 & --  & --  &   277 &  359 \\
10       & 137& --  & 690 &   377 &  290 \\
5        & 144& --  & 578 &   --  &  618 \\ \bottomrule[1pt]
\label{table:c5g7-sweep}
\end{tabular}}

\subfloat[CPU time (ks) (FPI converged in 10.02 ks)]
{
\begin{tabular}{cccccc}  \toprule[1pt]
	&		&		&		& \multicolumn{2}{c}{JFNK $\eta$}  \\\cline{5-6}
\raisebox{1.5ex}[0cm][0cm]{subspace}
        &	\raisebox{1.5ex}[0cm][0cm]{NKA}
                        &	\raisebox{1.5ex}[0cm][0cm]{NKA$_{-1}$}
                                                        &     \raisebox{1.5ex}[0cm][0cm]{Broyden}
                             &   EW1  &  EW2	  \\\hline
30       & 0.83 & --   & --   &   3.9  &  5.9   \\
20       & 0.94 & --   & --   &   4.4  &  3.5   \\
10       & 1.2  & --   & 6.2  &   5.9  &  2.8   \\
5        & 1.3  & --   & 5.1  &   --   &  6.0   \\ \bottomrule[1pt]
\label{table:c5g7-time}
\end{tabular}} 

\subfloat[Percentage of CPU time spent in the residual evaluation]
{
\begin{tabular}{cccccc}  \toprule[1pt]
	&		&		&		& \multicolumn{2}{c}{JFNK $\eta$}  \\\cline{5-6}
\raisebox{1.5ex}[0cm][0cm]{subspace}
        &	\raisebox{1.5ex}[0cm][0cm]{NKA}
                        &	\raisebox{1.5ex}[0cm][0cm]{NKA$_{-1}$}
                                                        &     \raisebox{1.5ex}[0cm][0cm]{Broyden}
                                                      	&	EW1	&	EW2	\\\hline
30	&	88.77	&	--   	&	--   	&	97.93	&	96.64	\\
20	&	82.12	&	--   	&	--   	&	98.13	&	97.40	\\
10	&	92.52	&	--   	&	93.56	&	98.51	&	97.91	\\
5	&	94.09	&	--	&	95.67	&	--   	&	98.39	\\\bottomrule[1pt]
\label{table:c5g7-perc}
\end{tabular}} 

\end{table}

\begin{figure}
\caption{PARTISN C5G7-MOX Benchmark: Scaled $L^2$-norm of the residual
as a function of number of sweeps for the various methods and subspace
sizes. Each of the methods is plotted on the same scale to simplify
comparisons between panels. NKA$_{-1}$ for all subspace sizes,
Broyden(20) and Broyden(30) were not included in these plots because
they did not converge and displayed very erratic behavior. Note that,
in panels (c) and (d), points plotted on the lines indicate when a
JFNK iteration starts (JFNK requires multiple sweeps per
iteration). In panels (a), and (b) we plot one point per two
iterations of the method. In panel (e) the convention for iterations
per plotted points is the same as in panels (a) through (d).}
\begin{tabular}{cc}
\subfloat[NKA and FPI]{ 
 \resizebox{80mm}{!}{
\begingroup%
\makeatletter%
\newcommand{\GNUPLOTspecial}{%
  \@sanitize\catcode`\%=14\relax\special}%
\setlength{\unitlength}{0.0500bp}%
\begin{picture}(7200,5040)(0,0)%
  \put(5996,3836){\makebox(0,0)[r]{\strut{}NKA(30)}}%
  \put(5996,4036){\makebox(0,0)[r]{\strut{}NKA(20)}}%
  \put(5996,4236){\makebox(0,0)[r]{\strut{}NKA(10)}}%
  \put(5996,4436){\makebox(0,0)[r]{\strut{}NKA(5)}}%
  \put(5996,4636){\makebox(0,0)[r]{\strut{}FPI}}%
  \put(4179,140){\makebox(0,0){\strut{}Sweeps}}%
  \put(280,2719){%
\rotatebox{-270}{%
  \makebox(0,0){\strut{}$\| {F} \|_{2,s}$}%
}}%
  \put(6899,440){\makebox(0,0){\strut{} 800}}%
  \put(6219,440){\makebox(0,0){\strut{} 700}}%
  \put(5539,440){\makebox(0,0){\strut{} 600}}%
  \put(4859,440){\makebox(0,0){\strut{} 500}}%
  \put(4180,440){\makebox(0,0){\strut{} 400}}%
  \put(3500,440){\makebox(0,0){\strut{} 300}}%
  \put(2820,440){\makebox(0,0){\strut{} 200}}%
  \put(2140,440){\makebox(0,0){\strut{} 100}}%
  \put(1460,440){\makebox(0,0){\strut{} 0}}%
  \put(1340,4799){\makebox(0,0)[r]{\strut{} 1}}%
  \put(1340,3967){\makebox(0,0)[r]{\strut{} 0.01}}%
  \put(1340,3135){\makebox(0,0)[r]{\strut{} 0.0001}}%
  \put(1340,2304){\makebox(0,0)[r]{\strut{} 1e-06}}%
  \put(1340,1472){\makebox(0,0)[r]{\strut{} 1e-08}}%
  \put(1340,640){\makebox(0,0)[r]{\strut{} 1e-10}}%
\includegraphics{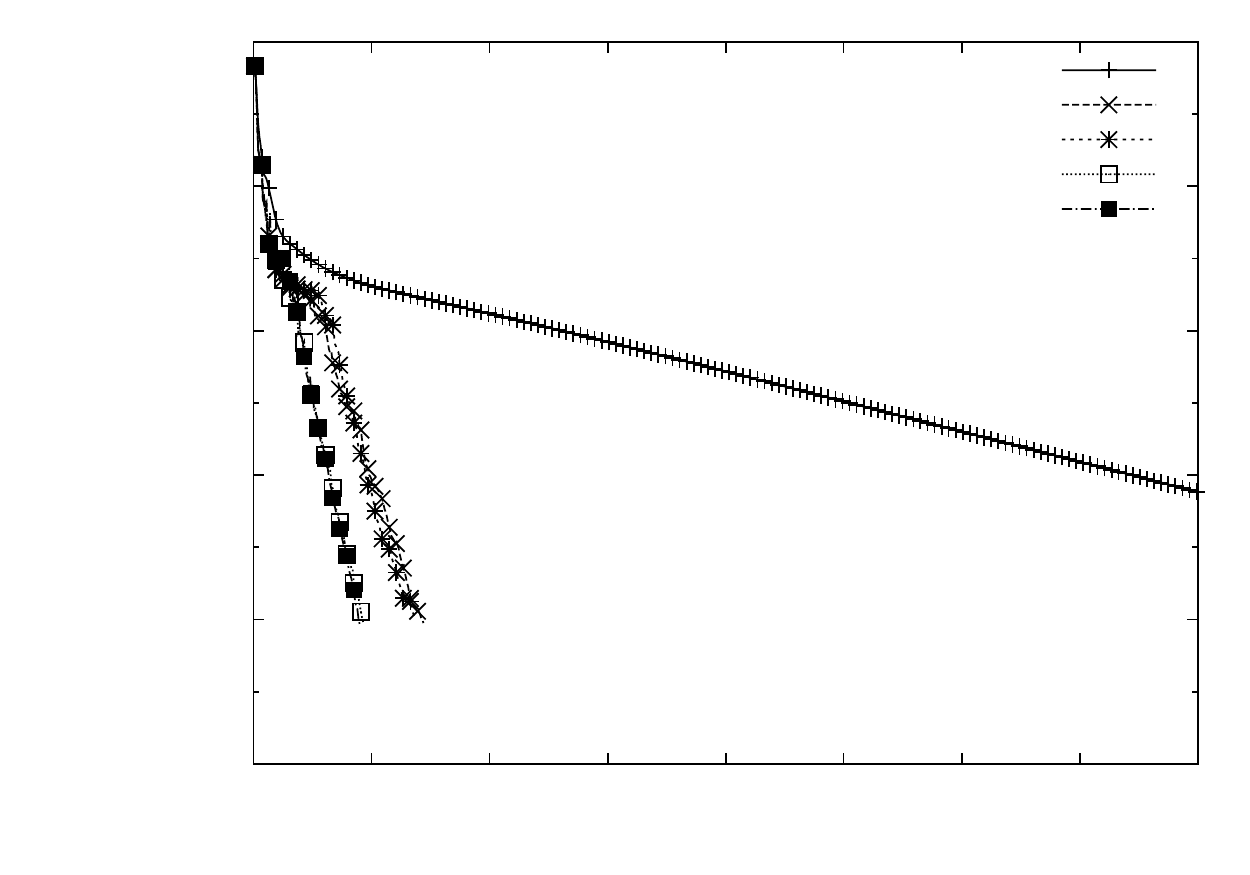}%
\end{picture}%
\endgroup
}
\label{fig:c5g7_nka}}
&
\subfloat[Broyden]{
 \resizebox{80mm}{!}{
\begingroup%
\makeatletter%
\newcommand{\GNUPLOTspecial}{%
  \@sanitize\catcode`\%=14\relax\special}%
\setlength{\unitlength}{0.0500bp}%
\begin{picture}(7200,5040)(0,0)%
  \put(5996,4436){\makebox(0,0)[r]{\strut{}Broyden(10)}}%
  \put(5996,4636){\makebox(0,0)[r]{\strut{}Broyden(5)}}%
  \put(4179,140){\makebox(0,0){\strut{}Sweeps}}%
  \put(280,2719){%
\rotatebox{-270}{%
  \makebox(0,0){\strut{}$\| {F} \|_{2,s}$}%
}}%
  \put(6899,440){\makebox(0,0){\strut{} 800}}%
  \put(6219,440){\makebox(0,0){\strut{} 700}}%
  \put(5539,440){\makebox(0,0){\strut{} 600}}%
  \put(4859,440){\makebox(0,0){\strut{} 500}}%
  \put(4180,440){\makebox(0,0){\strut{} 400}}%
  \put(3500,440){\makebox(0,0){\strut{} 300}}%
  \put(2820,440){\makebox(0,0){\strut{} 200}}%
  \put(2140,440){\makebox(0,0){\strut{} 100}}%
  \put(1460,440){\makebox(0,0){\strut{} 0}}%
  \put(1340,4799){\makebox(0,0)[r]{\strut{} 1}}%
  \put(1340,3967){\makebox(0,0)[r]{\strut{} 0.01}}%
  \put(1340,3135){\makebox(0,0)[r]{\strut{} 0.0001}}%
  \put(1340,2304){\makebox(0,0)[r]{\strut{} 1e-06}}%
  \put(1340,1472){\makebox(0,0)[r]{\strut{} 1e-08}}%
  \put(1340,640){\makebox(0,0)[r]{\strut{} 1e-10}}%
\includegraphics{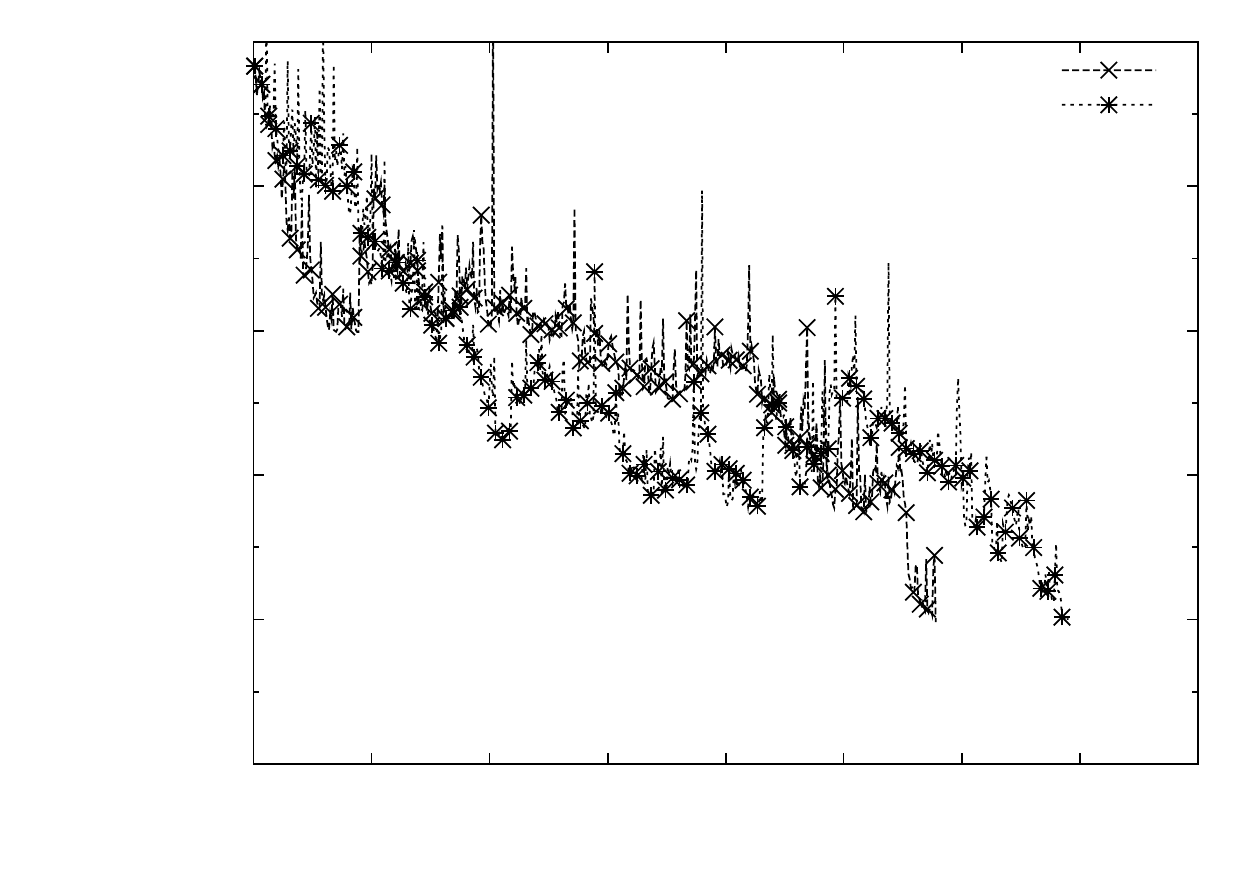}%
\end{picture}%
\endgroup
}
\label{fig:c5g7_broy}}
\\
\subfloat[JFNK $\eta_{EW1}$]{
 \resizebox{80mm}{!}{
\begingroup%
\makeatletter%
\newcommand{\GNUPLOTspecial}{%
  \@sanitize\catcode`\%=14\relax\special}%
\setlength{\unitlength}{0.0500bp}%
\begin{picture}(7200,5040)(0,0)%
  \put(5996,4036){\makebox(0,0)[r]{\strut{}GMRES(30)}}%
  \put(5996,4236){\makebox(0,0)[r]{\strut{}GMRES(20)}}%
  \put(5996,4436){\makebox(0,0)[r]{\strut{}GMRES(10)}}%
  \put(5996,4636){\makebox(0,0)[r]{\strut{}GMRES(5)}}%
  \put(4179,140){\makebox(0,0){\strut{}Sweeps}}%
  \put(280,2719){%
\rotatebox{-270}{%
  \makebox(0,0){\strut{}$\| {F} \|_{2,s}$}%
}}%
  \put(6899,440){\makebox(0,0){\strut{} 800}}%
  \put(6219,440){\makebox(0,0){\strut{} 700}}%
  \put(5539,440){\makebox(0,0){\strut{} 600}}%
  \put(4859,440){\makebox(0,0){\strut{} 500}}%
  \put(4180,440){\makebox(0,0){\strut{} 400}}%
  \put(3500,440){\makebox(0,0){\strut{} 300}}%
  \put(2820,440){\makebox(0,0){\strut{} 200}}%
  \put(2140,440){\makebox(0,0){\strut{} 100}}%
  \put(1460,440){\makebox(0,0){\strut{} 0}}%
  \put(1340,4799){\makebox(0,0)[r]{\strut{} 1}}%
  \put(1340,3967){\makebox(0,0)[r]{\strut{} 0.01}}%
  \put(1340,3135){\makebox(0,0)[r]{\strut{} 0.0001}}%
  \put(1340,2304){\makebox(0,0)[r]{\strut{} 1e-06}}%
  \put(1340,1472){\makebox(0,0)[r]{\strut{} 1e-08}}%
  \put(1340,640){\makebox(0,0)[r]{\strut{} 1e-10}}%
\includegraphics{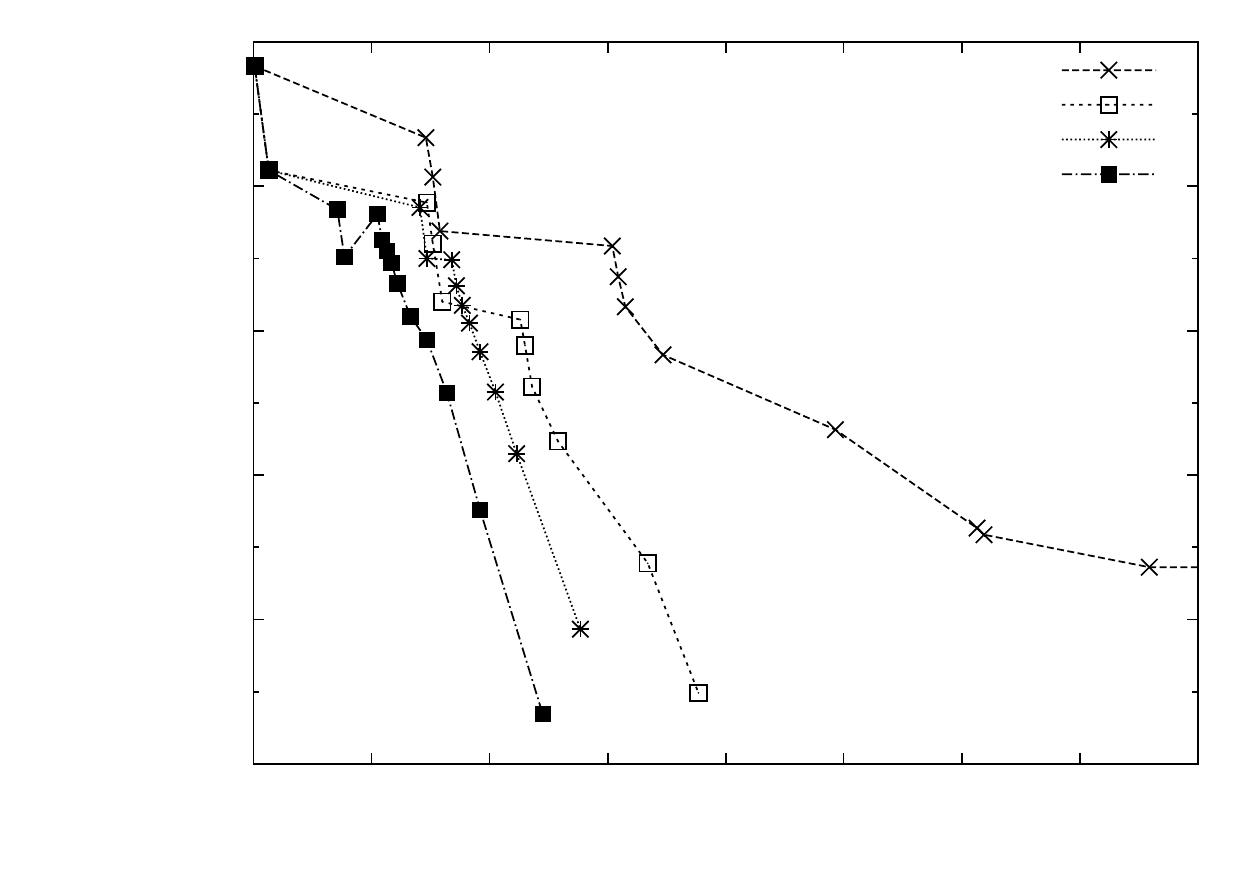}%
\end{picture}%
\endgroup
}
\label{fig:c5g7_jfnk1}}
&
\subfloat[JFNK $\eta_{EW2}$]{
 \resizebox{80mm}{!}{
\begingroup%
\makeatletter%
\newcommand{\GNUPLOTspecial}{%
  \@sanitize\catcode`\%=14\relax\special}%
\setlength{\unitlength}{0.0500bp}%
\begin{picture}(7200,5040)(0,0)%
  \put(5996,4036){\makebox(0,0)[r]{\strut{}GMRES(30)}}%
  \put(5996,4236){\makebox(0,0)[r]{\strut{}GMRES(20)}}%
  \put(5996,4436){\makebox(0,0)[r]{\strut{}GMRES(10)}}%
  \put(5996,4636){\makebox(0,0)[r]{\strut{}GMRES(5)}}%
  \put(4179,140){\makebox(0,0){\strut{}Sweeps}}%
  \put(280,2719){%
\rotatebox{-270}{%
  \makebox(0,0){\strut{}$\| {F} \|_{2,s}$}%
}}%
  \put(6899,440){\makebox(0,0){\strut{} 800}}%
  \put(6219,440){\makebox(0,0){\strut{} 700}}%
  \put(5539,440){\makebox(0,0){\strut{} 600}}%
  \put(4859,440){\makebox(0,0){\strut{} 500}}%
  \put(4180,440){\makebox(0,0){\strut{} 400}}%
  \put(3500,440){\makebox(0,0){\strut{} 300}}%
  \put(2820,440){\makebox(0,0){\strut{} 200}}%
  \put(2140,440){\makebox(0,0){\strut{} 100}}%
  \put(1460,440){\makebox(0,0){\strut{} 0}}%
  \put(1340,4799){\makebox(0,0)[r]{\strut{} 1}}%
  \put(1340,3967){\makebox(0,0)[r]{\strut{} 0.01}}%
  \put(1340,3135){\makebox(0,0)[r]{\strut{} 0.0001}}%
  \put(1340,2304){\makebox(0,0)[r]{\strut{} 1e-06}}%
  \put(1340,1472){\makebox(0,0)[r]{\strut{} 1e-08}}%
  \put(1340,640){\makebox(0,0)[r]{\strut{} 1e-10}}%
\includegraphics{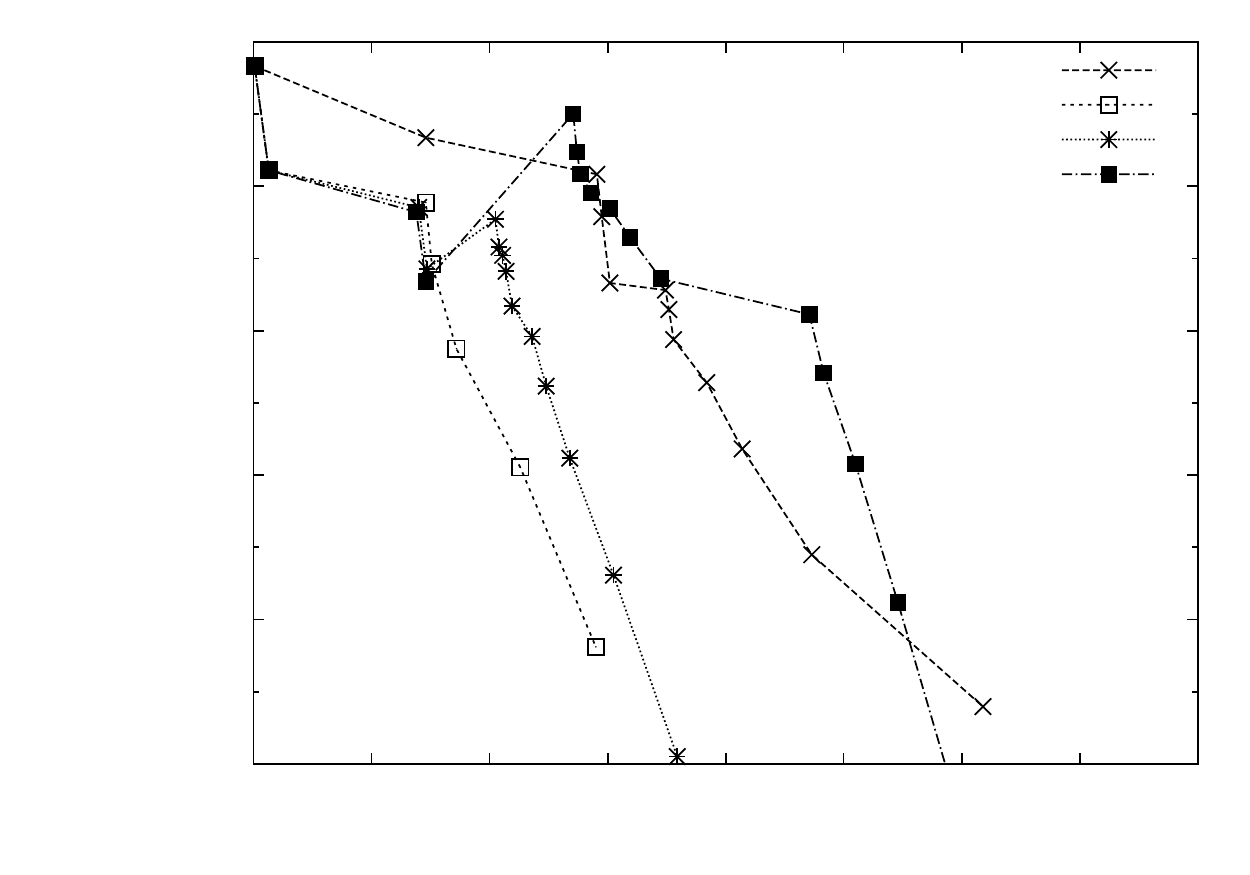}%
\end{picture}%
\endgroup
}
\label{fig:c5g7_jfnk2}}
\\
\multicolumn{2}{c}{
\subfloat[Comparison of the best results for Broyden, JFNK and NKA(5)]{
 \resizebox{80mm}{!}{
\begingroup%
\makeatletter%
\newcommand{\GNUPLOTspecial}{%
  \@sanitize\catcode`\%=14\relax\special}%
\setlength{\unitlength}{0.0500bp}%
\begin{picture}(7200,5040)(0,0)%
  \put(5996,3836){\makebox(0,0)[r]{\strut{}JFNK (GMRES(10), $\eta_{EW2}$)}}%
  \put(5996,4036){\makebox(0,0)[r]{\strut{}JFNK (GMRES(30), $\eta_{EW1}$)}}%
  \put(5996,4236){\makebox(0,0)[r]{\strut{}Broyden(5)}}%
  \put(5996,4436){\makebox(0,0)[r]{\strut{}NKA(5)}}%
  \put(5996,4636){\makebox(0,0)[r]{\strut{}FPI}}%
  \put(4179,140){\makebox(0,0){\strut{}Sweeps}}%
  \put(280,2719){%
\rotatebox{-270}{%
  \makebox(0,0){\strut{}$\| {F} \|_{2,s}$}%
}}%
  \put(6899,440){\makebox(0,0){\strut{} 800}}%
  \put(6219,440){\makebox(0,0){\strut{} 700}}%
  \put(5539,440){\makebox(0,0){\strut{} 600}}%
  \put(4859,440){\makebox(0,0){\strut{} 500}}%
  \put(4180,440){\makebox(0,0){\strut{} 400}}%
  \put(3500,440){\makebox(0,0){\strut{} 300}}%
  \put(2820,440){\makebox(0,0){\strut{} 200}}%
  \put(2140,440){\makebox(0,0){\strut{} 100}}%
  \put(1460,440){\makebox(0,0){\strut{} 0}}%
  \put(1340,4799){\makebox(0,0)[r]{\strut{} 1}}%
  \put(1340,3967){\makebox(0,0)[r]{\strut{} 0.01}}%
  \put(1340,3135){\makebox(0,0)[r]{\strut{} 0.0001}}%
  \put(1340,2304){\makebox(0,0)[r]{\strut{} 1e-06}}%
  \put(1340,1472){\makebox(0,0)[r]{\strut{} 1e-08}}%
  \put(1340,640){\makebox(0,0)[r]{\strut{} 1e-10}}%
\includegraphics{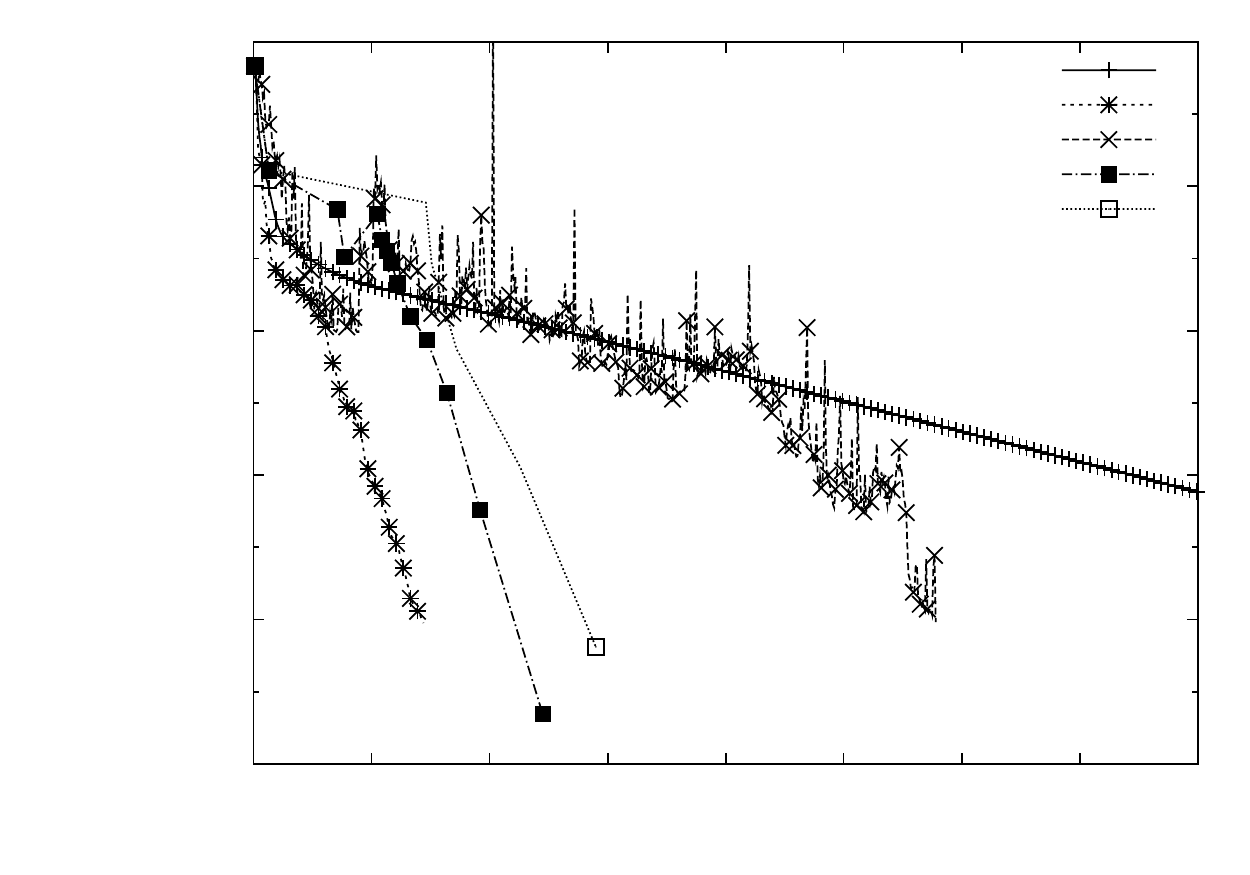}%
\end{picture}%
\endgroup
}
\label{fig:c5g7_comp}}}
\\
\end{tabular}
\label{fig:c5g7}
\end{figure}

\newpage
\clearpage
\section{Conclusion}\label{sec:concl}

The benefit of JFNK is its ``Newton-like'' convergence achieved by
developing an arbitrarily accurate approximation of the inverse of the
Jacobian at each `outer' iteration, but the cost is repeated function
evaluations in each of these `outer' iterations and the wasteful
discarding of potentially useful subspace information.  In contrast
Broyden and Anderson Mixing only perform a single function evaluation
at each iteration, but continue to use `old' information from previous
iterations to improve their estimate of the Jacobian. The drawback for
these methods is that the approximate Jacobian or its inverse is based
on an amalgam of new and old information, so they are unlikely to
converge in fewer iterations than Newton's method.  Performance of all
these methods will clearly depend on how the Jacobian is changing from
iteration to iteration, and how information collected at each function
evaluation is used.  Memory requirements for Broyden are half that of
NKA making it an attractive alternative.

Analytic examinations in~\cite{WalkerNi:2010} show that variants of
Anderson Mixing, including NKA, behave like a Krylov method (such as
GMRES) when applied to most linear problems.  Further, the hypotheses
of non-stagnation of GMRES can be removed at the cost of a modest
performance penalty.  It is worth noting that the authors of this
paper know of no theoretical results regarding Anderson Acceleration
or NKA for nonlinear problems.  This is in sharp contrast to Broyden
and JFNK.

Our numerical results indicate that Anderson Mixing in the form of NKA
found solutions in the PARTISN and Capsaicin codes for all problems
examined in the fewest number of function evaluations and the shortest
runtimes.  These are very large-scale problems involving approximately
one and eight million degrees of freedom for PARTISN and Capsaicin,
respectively, for the cylinder problem and 645 million degrees of
freedom in PARTISN for the C5G7 problem.  Our results highlight the
strength of this method for the problems at hand: regularity,
consistency and efficiency. In our results, NKA was shown to bring the
norm down to zero smoothly, much as FPI and JFNK do, but with greater
efficiency than those methods. Broyden and NKA$_{-1}$, while they at
times achieved excellent performance, did not always demonstrate this
same smooth convergence behavior and often diverged. Based on these
results we feel that NKA may be well-suited to other computational
physics problems beyond neutron transport.

\newpage
\clearpage

\section{Acknowledgment}
\noindent
The authors are grateful to Dana Knoll for pointing out the work of
Walker and Ni and the work of Fang and Saad and for discussions on
nonlinear solvers and their applications. The authors also gratefully
acknowledge Jon Dahl for providing the C5G7-MOX PARTISN input files.

\vspace{2mm} 
\noindent
This work was performed under the auspices of the
National Nuclear Security Administration of the US Department of
Energy at Los Alamos National Laboratory under Contract
No. DE-AC52-06NA25396. LA-UR 12-24529

\appendix
\section{Proof of Theorem~\ref{mtc:th:2}}\label{apdx:1}

\begin{proof}[Proof of Theorem~\ref{mtc:th:2}]
From the linearity of the problem ${\bf w}_i = {\bf A}{\bf v}_i$, and
from the update rule ${\bf x}_{n+1} = {\bf x}_n - {\bf v}_{n+1}$, we
have ${\bf x}_n = {\bf x}_0 - \sum_{i=1}^n {\bf v}_i,$ and then
\begin{equation*}
{\bf r}_n := f({\bf x}_n) = {\bf A}{\bf x}_n - {\bf b} = {\bf r}_0 - {\bf A}\sum_{i=1}^n {\bf v}_i,
\end{equation*}
from which we have
\begin{equation*}
f({\bf x}_n) - \sum_{i=1}^n z_i^{(n)}{\bf w}_i = {\bf r}_0 - {\bf A}\sum_{i=1}^n (1+z_i^{(n)}){\bf v}_i.
\end{equation*}
We use the above to compute the $n+1$ residual using the unmodified NKA update
\begin{eqnarray*}
{\bf r}_{n+1} =& {\bf A}{\bf x}_{n+1} - {\bf b}\\
=& {\bf A}{\bf x}_n - {\bf b} - {\bf A}{\bf v}_{n+1}\\
=& {\bf A}{\bf x}_n - {\bf b} - {\bf A}
\left[
\sum_{i=1}^n {\bf v}_i z_i^{(n)} + f({\bf x}_n) - \sum_{i=1}^n {\bf w}_i z_i^{(n)}
\right]\\
=& {\bf r}_0 - {\bf A}\sum_{i=1}^n {\bf v}_i 
- {\bf A}
\left[
\sum_{i=1}^n {\bf v}_i z_i^{(n)} + {\bf r}_0 - {\bf A}\sum_{i=1}^n (1+z_i^{(n)}){\bf v}_i.
\right]\\
=& ({\bf I} - {\bf A})
\left({\bf r}_0 - {\bf A}\sum_{i=1}^n (1+z_i^{(n)}){\bf v}_i\right).
\end{eqnarray*}

If $\|{\bf w}_n\|_2 = \|{\bf Av}_n\|_2 \ne 0$, then the
modified NKA update changes the $n^\text{th}$ coefficient as follows
\begin{equation*}
z_n^{(n)} \leftarrow z_n^{(n)} \pm \varepsilon 
\frac{\left \|f({\bf x}_n) - \sum_{i=1}^n {\bf w}_i z_i^{(n)}\right\|_2}
{\|{\bf w}_n\|_2} = 
z_n^{(n)} \pm \varepsilon 
\frac{\left \|{\bf r}_0 - {\bf A}\sum_{i=1}^n (1+z_i^{(n)}){\bf v}_i \right\|_2}
{\|{\bf A}{\bf v}_n\|_2}. 
\end{equation*}
The $n^\text{th}$ residual becomes
\begin{equation*}
{\bf r}_{n+1}=
({\bf I} - {\bf A})
\left({\bf r}_0 - {\bf A}\sum_{i=1}^n (1+z_i^{(n)}){\bf v}_i
\pm \varepsilon 
\frac{\left \|{\bf r}_0 - {\bf A}\sum_{i=1}^n (1+z_i^{(n)}){\bf v}_i \right\|_2}
{\|{\bf A}{\bf v}_n\|_2}{\bf A}{\bf v}_n
\right),
\end{equation*}
from which we have 
\begin{eqnarray*}
\|{\bf r}_{n+1}\|_2 \le&
\|{\bf I} - {\bf A}\|
\left\|{\bf r}_0 - {\bf A}\sum_{i=1}^n (1+z_i^{(n)}){\bf v}_i
\pm \varepsilon 
\frac{\left \|{\bf r}_0 - {\bf A}\sum_{i=1}^n (1+z_i^{(n)}){\bf v}_i \right\|_2}
{\|{\bf A}{\bf v}_n\|_2}{\bf A}{\bf v}_n
\right\|_2\\
\le &
(1+ \varepsilon )
\|{\bf I} - {\bf A}\|
\left\|{\bf r}_0 - {\bf A}\sum_{i=1}^n (1+z_i^{(n)}){\bf v}_i\right\|_2,
\end{eqnarray*}
where $\|\cdot\|$ denotes the operator 2-norm. Let ${\bf z}^{(n)} = (z_1^{(n)},\ldots,z_n^{(n)})$ denote the solution to the minimization problem
\begin{equation*}
{\bf z}^{(n)} = 
\argmin_{{\bf y}\in \RealVect{n}} \left\|f({\bf x}_n) - \sum_{i=1}^n {\bf w}_i y_i\right\|_2 = 
\argmin_{{\bf y}\in \RealVect{n}} \left\| {\bf r}_0 - {\bf A}\sum_{i=1}^n (1+y_i){\bf v}_i\right\|_2
\end{equation*}
and let ${\mathcal V}_n = \spn\{ {\bf v}_1,\ldots,{\bf v}_n\}$, then the above gives
\begin{equation}\label{mtc:eq:ap1}
\|{\bf r}_{n+1}\|_2 \le 
(1 + \varepsilon)\| {\bf I} - {\bf A}\|
\min_{{\bf v} \in {\mathcal V}_n}\| {\bf r}_0 - {\bf A}{\bf v}\|_2.
\end{equation}
If the modification of $z^{(n)}_n$ is not performed, then the factor
of $(1+\varepsilon)$ would simply be one.  In either case the above
bound on $\|{\bf r}_{n+1}\|_2$ given in \req{mtc:eq:ap1} holds.  

We now turn our attention to the question of when the Krylov spaces
(${\mathcal K}_n := \spn\{{\bf r}_0, {\bf A}{\bf r}_0,\ldots, {\bf
A}^{n-1}{\bf r}_0\}$) are expanding with $n$ and when ${\bf w}_n \ne 0$.  In what follows we assume ${\bf
r}_0 \ne 0$, i.e. that our initial guess ${\bf x}_0$ is not a
solution. Let $M$ be the lowest natural number so that ${\mathcal
K}_{M+1} \subseteq {\mathcal K}_M$. Then the vectors ${\bf r}_0, {\bf
A}{\bf r}_0,\ldots,{\bf A}^{M-1}{\bf r}_0$ are linearly independent
and there is a unique set of coefficients
$\alpha_0,\ldots,\alpha_{M-1}$ such that
\begin{equation*}
{\bf A}^M{\bf r}_0 = \sum_{i=0}^{M-1}\alpha_i {\bf A}^i{\bf r}_0.
\end{equation*}
If $\alpha_0 = 0$, then we would have
\begin{equation*}
{\bf A}\left[
{\bf A}^{M-1}{\bf r}_0 - \sum_{i=0}^{M-2}\alpha_{i+1} {\bf A}^i{\bf r}_0\right] = 0,
\end{equation*}
and, because ${\bf A}$ is non-singular, this would imply that
${\mathcal K}_{M} \subseteq {\mathcal K}_{M-1}$ contradicting the
definition of $M$. As such we have
\begin{equation*}
{\bf r}_0 = \frac 1 \alpha_0 \left[
{\bf A}^M {\bf r}_0 - \sum_{i=1}^{M-1}\alpha_i {\bf A}^i {\bf r}_0
\right] = {\bf A}\left(
\frac{1}{\alpha_0} \left[
{\bf A}^{M-1}{\bf r}_0 - \sum_{i=0}^{M-2} \alpha_{i+1}{\bf A}^i {\bf r}_0
\right]
\right),
\end{equation*}
and hence a solution ${\bf x}^* \in {\mathcal K}_M$ to ${\bf Ax} = {\bf r}_0$, from which we have ${\bf A}({\bf x}_0 - {\bf x}^*) = {\bf b}$.  We assume NKA will be stopped at this point.  This allows us to conclude if we have not found
a solution to the problem within ${\mathcal K}_M$, then ${\mathcal
K}_M\subsetneq {\mathcal K}_{M+1}$.

It now suffices to show that ${\mathcal V}_n = {\mathcal K}_n$.  We
begin by noting ${\mathcal V}_1 = \spn\{{\bf v}_1\} = \spn\{{\bf
r}_0\} = {\mathcal K}_1$, and, because ${\bf A}$ is of full rank,
${\bf w}_1 = {\bf A}{\bf v}_1 = {\bf A}{\bf r}_0 \ne 0$.  Our
inductive hypothesis is that there is some natural number $n$,
e.g. $n=1$, so that ${\bf w}_n \ne 0$ and that for all natural numbers
$j\le n$, ${\mathcal V}_j = {\mathcal K}_j$ and ${\bf A}{\bf x}_j -
{\bf b}\ne 0$.  That we have not found a solution in the first $n$
iterations is sufficient to conclude from above arguments that
${\mathcal K}_1\subsetneq\ldots\subsetneq{\mathcal K}_{n+1}$.

The NKA update rule, where the modification to $z^{(n)}_n$ has already
been applied (the modification is possible because ${\bf w}_n \ne 0$), may be written as
\begin{eqnarray*}
{\bf v}_{n+1} = &
\sum_{i=1}^n {\bf v}_i z_i^{(n)} + f({\bf x}_n) - \sum_{i=1}^n {\bf w}_i z_i^{(n)}\\
=&
\sum_{i=1}^n {\bf v}_i z_i^{(n)} + {\bf r}_0 - 
{\bf A}\sum_{i=1}^n {\bf v}_i (1 + z_i^{(n)})\\
=&
\left\{
\begin{array}{ccc}
\left[(1 + z^{(1)}_1){\bf r}_0 \right]
- {\bf A}{\bf r}_0(1 + z_n^{(1)}) & \quad\text{for}\quad n=1&\quad\text{(recall ${\bf v}_1 = {\bf r}_0$})\\
\left[\sum_{i=1}^n {\bf v}_i z_i^{(n)} + {\bf r}_0\right] - 
\left[{\bf A}\sum_{i=1}^{n-1} {\bf v}_i (1 + z_i^{(n)}) \right]
- {\bf A}{\bf v}_n(1 + z_n^{(n)}) & \quad\text{for}\quad n\ne1 &\\.
\end{array}\right.
\end{eqnarray*}

In the case $n=1$, ${\bf v}_2 = (1+z^{(1)}_1)({\bf r}_0 - {\bf A}{\bf
r}_0)$.  Our modification ensures that $1+z^{(1)}_1 \ne 0$. If ${\bf
r}_0 = {\bf A}{\bf r}_0$, then NKA has found the solution ${\bf x}_1 =
{\bf x}_0 - {\bf r}_0$.  We have excluded this by assumption and so
${\mathcal V}_2 = \spn\{{\bf v}_1,{\bf v}_2\} = \spn\{{\bf r}_0,{\bf
A}{\bf r}_0\} = {\mathcal K}_2$ and ${\bf w}_2 = {\bf A}{\bf v}_2 \ne
0$. Our inductive hypothesis holds for $n=2$

For the cases $n\ne 1$, we know that ${\mathcal V}_{n} = {\mathcal
K}_{n}$, which is sufficient to conclude that the first bracketed term in the last of the preceding lines is in ${\mathcal K}_n$.  We also know that ${\mathcal V}_{n-1} = {\mathcal
K}_{n-1}$.  This implies that any linear combination of ${\bf
v}_1,\ldots,{\bf v}_{n-1}$ lies in ${\mathcal K}_{n-1}$, and so any
linear combination of ${\bf A}{\bf v}_1,\ldots,{\bf A}{\bf v}_{n-1}$
lies within ${\mathcal K}_n$.  We conclude that the quantities in
brackets in the last of the preceding lines lies within ${\mathcal
K}_n$.  Since ${\bf x}_n$ is not the solution, $\spn\{{\bf
v}_1,\ldots,{\bf v}_{n-1}\} = {\mathcal K}_{n-1} \subsetneq {\mathcal
K}_n = \spn\{{\bf v}_1,\ldots,{\bf v}_{n}\}$.  This is sufficient to
conclude that ${\bf v}_n$ may be written as the sum $\tilde {\bf v}
+\alpha{\bf A}^{n-1}{\bf r}_0$, where $\tilde {\bf v} \in {\mathcal V}_{n-1}
= {\mathcal K}_{n-1}$ and $\alpha \ne 0$.  By construction $(1+z^{(n)}_n) \ne 0$ and so
if ${\bf v}_{n+1} = 0$, then ${\mathcal K}_{n+1} \subseteq{\mathcal
K}_n$ and hence ${\bf x}_{n+1}$ will be a solution.  Since we assume
${\bf x}_{n+1}$ is not the solution, our inductive hypothesis holds
for $n+1$. \end{proof}

\bibliographystyle{elsarticle-num}
\bibliography{biblio}

\end{document}